\DeclareMathOperator{\N}{\mathcal{N}}
\DeclareMathOperator{\R}{\mathbb{R}}
\DeclareMathOperator{\E}{\mathbb{E}}
\DeclareMathOperator{\prob}{\mathbb{P}}
\newcommand\rank[1]{\text{rank}\left(#1\right)}
\newcommand\vct[1]{\text{vec}\left(#1\right)}
\newcommand\vect[1]{\overrightarrow{#1}}
\newcommand\wE[1]{\widehat{\E}[#1]}
\newcommand\TP{{T _{x_0}\negthinspace\mathcal{M}}}
\newcommand\NS{{N _{x_0}\negthinspace\mathcal{M}}}
\begin{document}

\title{Non-Asymptotic Analysis of Tangent Space Perturbation}

\shorttitle{Tangent Space Perturbation} 
\shortauthorlist{D. N. Kaslovsky and F. G. Meyer} 

\author{{
    \sc Daniel N. Kaslovsky}$^*$ {\sc and} {\sc Fran\c{c}ois G. Meyer}\\[2pt]
  Department of Applied Mathematics, University of Colorado, Boulder, Boulder, CO, USA\\
  $^*${\email{kaslovsky@colorado.edu \\ fmeyer@colorado.edu}}}

\maketitle

\begin{abstract}
  {Constructing an efficient parameterization of a
    large, noisy data set of points lying close to a smooth manifold
    in high dimension remains a fundamental problem.  One approach
    consists in recovering a local parameterization using the local
    tangent plane.  Principal component analysis (PCA) is often the
    tool of choice, as it returns an optimal basis in the case of
    noise-free samples from a linear subspace.  To process noisy data samples from a nonlinear manifold,
    PCA must be applied locally, at a scale small enough such that the
    manifold is approximately linear, but at a scale large enough such
    that structure may be discerned from noise.  Using eigenspace
    perturbation theory and non-asymptotic random matrix theory, we study the stability of the subspace
    estimated by PCA as a function of scale, and bound (with high
    probability) the angle it forms with the true tangent space.  By
    adaptively selecting the scale that minimizes this bound, our
    analysis reveals an appropriate scale for local tangent plane
    recovery. We also introduce a geometric uncertainty principle
    quantifying the limits of noise-curvature perturbation for stable recovery.
    \textcolor{black}{  With the purpose of providing perturbation bounds that can be used
  in practice, we propose plug-in estimates that make it possible to
  directly apply the theoretical results to real data sets.}}
  {manifold-valued data, tangent space, principal component analysis, subspace perturbation, local linear models, curvature, noise.}
  \\
  2000 Math Subject Classification: 62H25, 15A42, 60B20
\end{abstract}

\section{Introduction and Overview of the Main Results} \label{sec:intro}

\subsection{Local Tangent Space Recovery: Motivation and Goals}

Large data sets of points in high-dimension often lie close to a
smooth low-dimensional manifold.  A fundamental problem in processing
such data sets is the construction of an efficient parameterization
that allows for the data to be well represented in fewer dimensions.
Such a parameterization may be realized by exploiting the inherent
manifold structure of the data.  However, discovering the geometry of
an underlying manifold from only noisy samples remains an open topic
of research.

The case of data sampled from a linear subspace is well studied (see
\cite{Johnstone,Jung-Marron,Nadler}, for example).  The optimal
parameterization is given by principal component analysis (PCA), as
the singular value decomposition (SVD) produces the best low-rank
approximation for such data.  However, most interesting
manifold-valued data organize on or near a nonlinear manifold.  PCA,
by projecting data points onto the linear subspace of best fit, is not
optimal in this case as curvature may only be accommodated by
choosing a subspace of dimension higher than that of the manifold.
Algorithms designed to process nonlinear data sets typically proceed
in one of two directions.  One approach is to consider the data
globally and produce a nonlinear embedding.  Alternatively, the data
may be considered in a piecewise-linear fashion and linear methods
such as PCA may be applied locally.  The latter is the subject of this
work.

Local linear parameterization of manifold-valued data requires the estimation
of the local tangent space (``tangent plane'') from a neighborhood of points.
However, sample points are often corrupted by high-dimensional noise and any local neighborhood deviates from the linear assumptions of PCA due to the curvature of the manifold.
Therefore, the subspace recovered by local PCA is a perturbed version of the true tangent space.
The goal of the present work is to characterize the stability and accuracy of local
tangent space estimation using eigenspace perturbation theory.
\begin{figure}
  \centering
  \subfigure[small neighborhoods]{\includegraphics[scale=.5,clip,trim=.75in 0in 1in 0in]{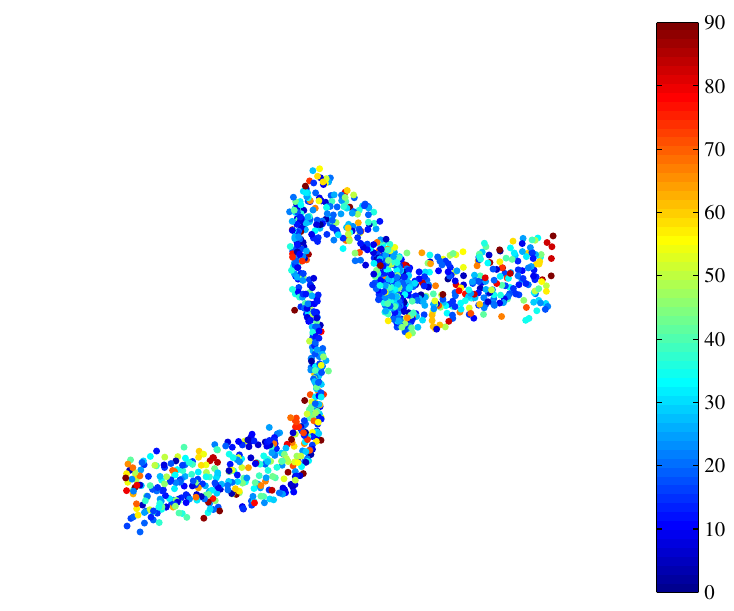}}
  \hskip1ex
  \subfigure[large neighborhoods]{\includegraphics[scale=.5,clip,trim=.75in 0in 1in 0in]{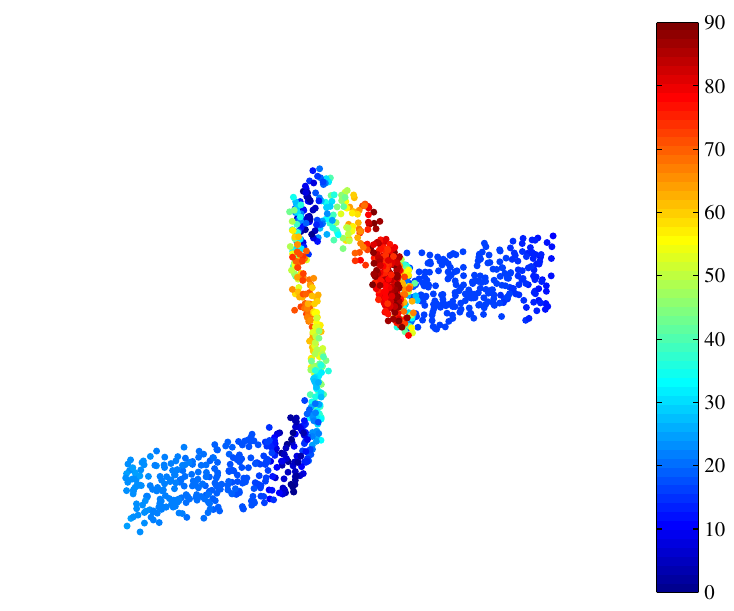}}
  \hskip1ex
  \subfigure[adaptive neighborhoods]{\includegraphics[scale=.5,clip,trim=.75in 0in 0in 0in]{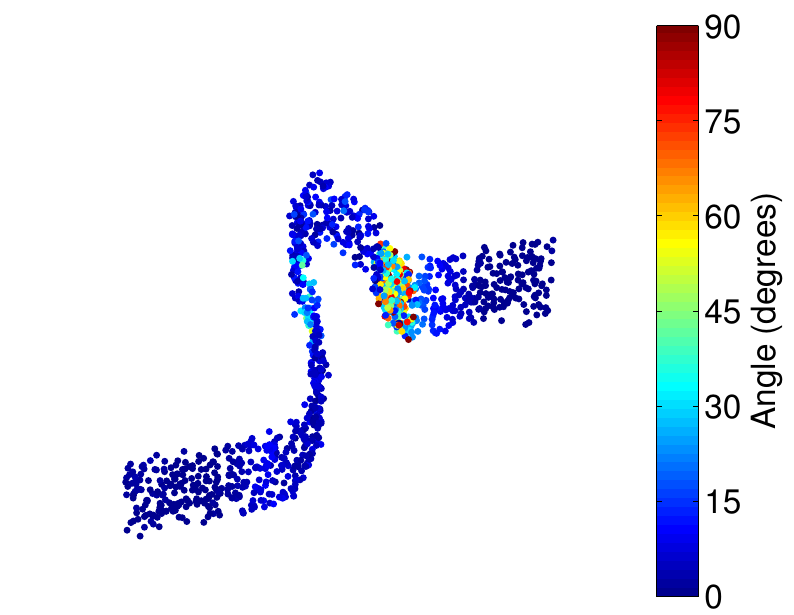}}
  \caption{Angle between estimated and true tangent planes at each point of a noisy 2-dimensional data set embedded in $\R^{3}$.  The estimated tangent planes are (a) randomly oriented when computed from small neighborhoods within the noise; (b) misaligned when computed from large neighborhoods exhibiting curvature; and (c) properly oriented when computed from adaptively defined neighborhoods given by the analysis in this work.}
  \label{fig:example}
\end{figure}

The proper neighborhood for local tangent space recovery must be a function
of intrinsic (manifold) dimension, curvature, and noise level; these properties often vary
as different regions of the manifold are explored.    
However, local PCA approaches proposed in the data analysis
and manifold learning literature often define locality via an \textit{a priori} fixed number of neighbors \textcolor{black}{or
  as the output of clustering and partitioning algorithms} (e.g., \cite{LLE,Zhang-Zha,Kam-Leen,Yang}).
Other methods \cite{Brand,Ohtake,Lin-Tong} adaptively estimate local neighborhood
size but are not tuned to the perturbation of the recovered subspace.
Our approach studies this perturbation as the size of the neighborhood varies
to guide the definition of locality.  On the one hand, a neighborhood
must be small enough so that it is approximately linear and avoids
curvature.  On the other hand, a neighborhood must be be large enough
to overcome the effects of noise.
A simple yet instructive example of these competing criteria is shown in Figure
\ref{fig:example}.  The tangent plane at every point of a noisy 2-dimensional data set embedded in $\R^3$ is computed via local PCA.  Each point is color coded according to the angle formed with the true tangent plane.  Three different neighborhood definitions are used: a small, fixed radius (Figure \ref{fig:example}a); a large, fixed radius (Figure \ref{fig:example}b); and radii defined adaptively according \textcolor{black}{to} the analysis presented in this work (Figure \ref{fig:example}c).
As small neighborhoods may be within the noise level and large neighborhoods exhibit curvature, the figure shows that neither allows for accurate tangent plane recovery.  In fact, because the curvature varies across the data, only the adaptively defined neighborhoods avoid random orientation due to noise (as seen in Figure \ref{fig:example}a) and misalignment due to curvature (as seen in Figure \ref{fig:example}b).  Figure \ref{fig:example}c shows accurate and stable recovery at almost every data point, with misalignment only in the small region of very high curvature that will be troublesome for any method.  The present work quantifies this observed behavior in the high-dimensional setting.

We present a non-asymptotic, eigenspace perturbation analysis to 
bound, with high probability, the angle between the recovered linear
subspace and the true tangent space as the size of the local neighborhood varies.
The analysis accurately tracks the subspace recovery error as a function of neighborhood size, noise, and curvature.
Thus, we are able to adaptively select the neighborhood that
minimizes this bound, yielding the best estimate to the local tangent space from a large
but finite number of noisy manifold samples.
Further, the behavior of this bound demonstrates the non-trivial
existence of such an optimal scale. 
We also introduce a geometric uncertainty principle quantifying
the limits of noise-curvature perturbation for tangent space recovery.

\textcolor{black}{An important technical matter that one needs to address when
  analyzing points that are sampled from a manifold blurred with
  Gaussian noise concerns the probability distribution of the noisy
  samples. Indeed, after perturbation with Gaussian noise, the
  probability density function of the noisy points can be expressed as
  the convolution of the probability density function of the clean
  points on the manifold with a Gaussian kernel.
  Geometrically, the points are diffused into a tube around the manifold,
  and the corresponding density of the points is thinned.
  This concept has been studied in
  great detail in \cite{Maggioni-MIT,LittleThesisDuke} as well as in
  \cite{Niyogi11,Genovese12b}.  
  The practical implication of these
  studies is that concentration of measure helps us to guarantee that
  the volume of noisy points in a ball centered on the clean manifold
  can be estimated from the volume of the corresponding ball of clean
  points, provided one applies a correction of the radius.
  We take advantage of these ideas in our analysis
  by replacing the ball of noisy points in the tube with a ball of similar volume extracted from the clean manifold,  perturbed by Gaussian noise.
  We introduce the resulting,
  necessary modification to the radii in Section \ref{sec:numerical2}. A related
  issue concerns the determination of the point $x_0$ about which we
  estimate the tangent plane. From a practical perspective, one can
  only observe noisy samples, and it is therefore reasonable
  that the perturbation bound should account for the fact that the
  analysis cannot be centered around the clean manifold.  The
  expected effect of this additional source of uncertainty has been
  explored in detail in \cite{Maggioni-MIT,LittleThesisDuke}. In this
  paper, we propose a different approach. We
  devise a plug-in method to estimate a clean point $x_0$ on the
  manifold using the observed noisy data. As a result, the theoretical
  analysis can proceed assuming that $x_0$ is given by an oracle.
  Our experiments confirm that the 
  local origin $x_0$ on the manifold can be estimated from
  the noisy neighborhood of observed points and that the
  perturbation error can be accurately tracked
  in practice.
  In addition, we expect this novel denoising
  algorithm to provide a universal tool for the analysis of noisy
  point cloud data.}

Our analysis is related to the very recent work of Tyagi, {\it et al.} \cite{Tyagi}, in which neighborhood size and sampling density conditions are given to ensure a small angle between the PCA subspace and the true tangent space of a noise-free manifold.  Results are extended to  arbitrary smooth embeddings of the manifold model, which we do not consider.
In contrast, we envision the scenario in which no control is given over the sampling and explore the case of data sampled according to a fixed density and corrupted by high-dimensional noise.  Crucial to our results is a careful analysis of the interplay between the perturbation due to noise and the perturbation due to curvature.  Nonetheless, our results can be shown to recover those of  \cite{Tyagi} in the noise-free setting.
Our approach is also similar to the analysis presented by Nadler in
\cite{Nadler}, who studies the finite-sample properties of the PCA
spectrum.  Through matrix perturbation theory, \cite{Nadler} examines the
angle between the leading finite-sample-PCA eigenvector and that of
the leading population-PCA eigenvector.  As a linear model is assumed,
perturbation results from noise only.  Despite this difference,
the two analyses utilize similar techniques to bound the effects of
perturbation on the PCA subspace and our results recover those of
\cite{Nadler} in the curvature-free setting.

\textcolor{black}{Application of multiscale PCA for geometric analysis of data sets has also been studied in \cite{Fukunaga-Olsen,Froehling,Wang-Marron,Broomhead}.
In parallel to our work \cite{Kaslovsky11a,Kaslovsky11b,Meyer12b,Kaslovsky12a}, Maggioni and coauthors have developed results \cite{LittleThesisDuke,Maggioni-long,Maggioni-MIT} addressing similar questions as those examined in this paper.  These results are discussed above as well as in more detail in Section \ref{sec:numerical2} and Section \ref{sec:discussion}.}
Other recent related works include that of Singer and Wu
\cite{Singer}, who use local PCA to build a tangent plane basis and
give an analysis for the neighborhood size to be used in the absence
of noise.  Using the hybrid linear model, Zhang, {\it et al.}
\cite{Lerman} assume data are samples from a collection of ``flats''
(affine subspaces) and choose an optimal neighborhood size from which
to recover each flat by studying the least squares approximation error
in the form of Jones' $\beta$-number (see \cite{Jones} and also
\cite{Jones-new} in which this idea is used for curve denoising).
Finally, an analysis of noise and curvature for normal estimation of smooth curves
and surfaces in $\mathbb{R}^2$ and $\mathbb{R}^3$ is presented by
Mitra, {\it et al.} \cite{Guibas-normal-estimation} with application
to computer graphics.

\subsection{Overview of the Results} \label{sec:overview}

We consider the problem of recovering
the best approximation to a local tangent space
of a nonlinear $d$-dimensional Riemannian manifold $\mathcal{M}$ from
noisy samples presented in dimension $D > d$.  Working about a
reference point $x_0$, an approximation to the tangent space of
$\mathcal{M}$ at $x_0$ is given by the span of the top $d$ eigenvectors of the centered data covariance matrix (where ``top'' refers to the $d$
eigenvectors or singular vectors associated with the $d$ largest eigenvalues or singular values).
The question becomes: how many neighbors of $x_0$ should be used (or
in how large of a radius about $x_0$ should we work) to recover the
best approximation?  We will often use the term ``scale'' to refer
to this neighborhood size or radius.

To answer this question, we consider the perturbation of the eigenvectors spanning the estimated tangent space in the context of the ``noise-curvature trade-off.''
To balance the effects of noise and curvature (as observed in the example of the previous subsection, Figure \ref{fig:example}), we seek a 
scale large enough to be above the noise level but
still small enough to avoid curvature.  This scale reveals a linear
structure that is sufficiently decoupled from both the noise and the
curvature to be well approximated by a tangent plane.
At this scale, the recovered eigenvectors span a subspace corresponding very closely to the true tangent space of the manifold at $x_0$.
We note that the concept of noise-curvature trade-off has been a subject of
interest for decades in dynamical systems theory \cite{Froehling}.

The main result of this work is a bound on the angle between the computed and true tangent spaces.
Define $P$ to be the orthogonal projector onto the true tangent space
and let $\widehat{P}$ be the orthogonal projector constructed from
the $d$-dimensional eigenspace of the neighborhood covariance matrix.
Then the distance $\|P-\widehat{P}\|_F^2$ corresponds to the 
sum of the squared sines of the principal angles
between the computed and true tangent spaces and we use eigenspace perturbation theory to bound this norm.  Momentarily neglecting probability-dependent constants to ease the presentation, the first-order approximation of this bound has the following form:\\

\noindent \textbf{Informal Main Result.}
\begin{equation} \label{eq:practical1}
  \textcolor{black}{
    \|P-\widehat{P}\|_F ~\leq~
    \frac{\frac{2\sqrt{2}}{\sqrt{N}}\left[K^{(+)}r^3 + \sigma\sqrt{d(D-d)}\left(\sigma + \frac{r}{\sqrt{d+2}}+ \frac{\mathcal{K}^{1/2}r^2}{(d+2)\sqrt{2(d+4)}}\right)\right]}
    {\frac{r^2}{d+2} - \frac{\mathcal{K}r^4}{2(d+2)^2(d+4)}-\sigma^2\left(\sqrt{d} + \sqrt{D-d}\right)},
  }
\end{equation}
where $r$ is the radius (measured in the tangent plane) of the neighborhood containing $N$ points,  $\sigma$ is the noise level, \textcolor{black}{and $K^{(+)}$ and $\mathcal{K}$ are functions of curvature}.\\[2ex]
\noindent \textcolor{black}{To aid the interpretation, we note that
  $K^{(+)}$ corresponds to the Frobenius norm of the matrix of
  principal curvatures and $\mathcal{K}$ has size $2d(D-d)\kappa^2$ in
  the case where all principal curvatures are equal to $\kappa$.}  The
quantities $N$, $r$, $\sigma$, \textcolor{black}{$K^{(+)}$, and
  $\mathcal{K}$}, as well as the sampling assumptions are more
formally defined in Sections \ref{sec:prelim} and
\ref{sec:mainresult}, and the formal result is presented in Section
\ref{sec:mainresult}.

The denominator of this bound, denoted here by $\delta_{\text{informal}}$,
\begin{equation} \label{eq:delta_informal}
  \textcolor{black}{
    \delta_{\text{informal}} = \frac{r^2}{d+2} - \frac{\mathcal{K}r^4}{2(d+2)^2(d+4)}-\sigma^2\left(\sqrt{d} + \sqrt{D-d}\right)
  }
\end{equation}
quantifies the separation between the spectrum of the linear subspace
($\approx r^2$) and the perturbation due to curvature
\textcolor{black}{($\approx \mathcal{K}r^4$)} and noise ($\approx
\sigma^2(\sqrt{d}+\sqrt{D-d})$). Clearly, we must have
$\delta_{\text{informal}} > 0$ to approximate the appropriate linear
subspace, a requirement made formal by Theorem \ref{thm:mainresult1}
in Section \ref{sec:mainresult}.  In general, when
$\delta_{\text{informal}}$ is zero (or negative), the bound becomes
infinite (or negative) and is not useful for subspace recovery.
However, the geometric information encoded by \eqref{eq:practical1}
offers more insight.  For example, we observe that a small
$\delta_{\text{informal}}$ indicates that the estimated subspace
contains a direction orthogonal to the true tangent space (due to the
curvature or noise).  We therefore consider $\delta_{\text{informal}}$
to be the condition number for subspace recovery and use it to develop
our geometric interpretation for the bound.

The noise-curvature trade-off is readily apparent from \eqref{eq:practical1}.
The linear and curvature contributions are small for small values of $r$.  Thus for a small neighborhood ($r$ small), the denominator \eqref{eq:delta_informal} is either negative or ill conditioned for most values of $\sigma$ and the bound becomes large.  This matches our intuition as we have not yet encountered much curvature but the linear structure has also not been explored.  Therefore, the noise dominates the early behavior of this bound and an approximating subspace may not be recovered from noise.  As the neighborhood radius $r$ increases, the conditioning of the denominator improves, and the bound is controlled by the $1/\sqrt{N}$ behavior of the numerator.  This again corresponds with our intuition: the addition of more points serves to overcome the effects of noise as the linear structure is explored.  Thus, when $\delta_{\text{informal}}^{-1}$ is well conditioned, the bound on the angle may become smaller with the inclusion of more points.  Eventually $r$ becomes large enough such that the curvature contribution approaches the size of the linear contribution and $\delta_{\text{informal}}^{-1}$ becomes large.  The $1/\sqrt{N}$ term is overtaken by the ill conditioning of the denominator and the bound is again forced to become large.  The noise-curvature trade-off, seen analytically here in \eqref{eq:practical1} and \eqref{eq:delta_informal}, will be demonstrated numerically in Section~\ref{sec:numerical}.

Enforcing a well conditioned recovery bound \eqref{eq:practical1}
yields a geometric uncertainty principle quantifying the amount of curvature and noise we may tolerate.
To recover an approximating subspace, we must have:\\

\noindent \textbf{Geometric Uncertainty Principle.}
\begin{equation} \label{eq:uncertainty}
  \textcolor{black}{
    \mathcal{K}\sigma^2 ~<~ \frac{d+4}{2(\sqrt{d} + \sqrt{D-d})}
  }
\end{equation}
By preventing the curvature and noise level from simultaneously
becoming large, this requirement ensures that the linear structure of
the data is recoverable.  With high probability, the noise component
normal to the tangent plane concentrates on a sphere with mean
curvature $1/(\sigma\sqrt{D-d})$.  As will be shown, this uncertainty
principle expresses the intuitive notion that the curvature of the
manifold must be less than the curvature of this noise-ball.
Otherwise, the combined effects of noise and curvature perturbation
prevent an accurate estimate of the local tangent space.

\textcolor{black}{We note that the concept of a geometric uncertainty principle
  also appears in the context of the computation of the homology of
  the manifold $\mathcal{M}$ in \cite{Niyogi11}. As explained in
  detail in Section \ref{sec:uncertainty}, the two principles are
  strikingly similar.}

The remainder of the paper is organized as follows.  Section \ref{sec:prelim}
provides the notation, geometric model, and necessary mathematical
formulations used throughout this work.  Eigenspace perturbation theory is reviewed in this section.  The main results are stated formally in Section \ref{sec:mainresult}.  We demonstrate the accuracy of our results and test the sensitivity to errors in parameter estimation in Section \ref{sec:numerical}.
\textcolor{black}{Section \ref{sec:numerical2} presents the modifications
  that are needed to account for the sampling density of the noisy
  points, and introduces two plug-in estimates that can be used in
practice to  apply the theoretical results of Section
\ref{sec:mainresult} to a real data set.}
We conclude in Section \ref{sec:discussion} with a discussion of the relationship to previously established results and further algorithmic considerations.  Technical results and proofs are presented in the appendices.

\section{Mathematical Preliminaries} \label{sec:prelim}

\subsection{Geometric Data Model}
\textcolor{black}{ 
A $d$-dimensional Riemannian manifold of
  codimension~1 may be described locally about a reference point $x_0$
  by the surface $y = f(\ell_1,\dots, \ell_d)$, where $\ell_i$ is a
  coordinate in the tangent plane, $\TP$, to the manifold at $x_0$.
  After translating $x_0$ to the origin, we have
  \begin{equation*}
    x_0 = [0 ~ 0 ~ \cdots ~ 0]^T,
  \end{equation*}
  and a
  rotation of the coordinate system can align the coordinate axes with
  the principal directions associated with the principal curvatures at $x_0$.
  Aligning the coordinate axes with
  the plane tangent to $\mathcal{M}$ at $x_0$ gives a local quadratic
  approximation to the manifold.  Using this choice of coordinates, the
  manifold may be described locally \cite{Giaquinta} by the Taylor
  series of $f$ at $x_0$:
  \begin{equation}
    y = f(\ell_1,\dots, \ell_d) = \frac{1}{2}(\kappa_1 \ell_1^2 + \dots + \kappa_d \ell_d^2) + o\left(\ell_1^2+\dots+\ell_d^2\right),
  \end{equation}
  where $\kappa_1, \dots, \kappa_d$ are the principal curvatures of $\mathcal{M}$ at $x_0$.  In this coordinate system, a point $x$ in a neighborhood of $x_0$ has the form
  \begin{equation*}
    x = [\ell_1 ~ \ell_2 ~ \cdots ~ \ell_d ~ f(\ell_1,\dots, \ell_d)]^T.
  \end{equation*}
}
Generalizing to a $d$-dimensional manifold of arbitrary codimension in $\mathbb{R}^D$, there exist $(D-d)$ functions
\begin{equation} \label{eq:local-model-full}
  f_{i}(\ell) = \frac{1}{2}(\kappa^{(i)}_1 \ell_1^2 + \dots + \kappa^{(i)}_d \ell_d^2) + o\left(\ell_1^2+\dots+\ell_d^2\right),
\end{equation}
for $i = (d+1), \dots, D$, with $\kappa^{(i)}_1, \dots,
\kappa^{(i)}_d$ representing the principal curvatures in
\textcolor{black}{the $i$th normal direction} at $x_0$.  Then, given the
coordinate system aligned with the principal directions, a point in a
neighborhood of $x_0$ has coordinates $\left[\ell_1, \dots, \ell_d,
  f_{d+1}, \dots, f_{D}\right]$.  We truncate the Taylor expansion
\eqref{eq:local-model-full} and use the quadratic approximation
\begin{equation}
  f_{i}(\ell) = \frac{1}{2}(\kappa^{(i)}_1 \ell_1^2 + \dots + \kappa^{(i)}_d \ell_d^2),
  \label{eq:local-model}
\end{equation}
$i = (d+1),\dots,D$, to describe the manifold locally.

Consider now discrete samples from $\mathcal{M}$ obtained by uniformly
sampling the first $d$ coordinates ($\ell_1,\dots,\ell_d)$ in the
tangent space inside $B^d_{x_0}(r)$, the $d$-dimensional ball of
radius $r$ centered at $x_0$, with the remaining $(D-d)$ coordinates
given by \eqref{eq:local-model}.  Because we are sampling from a
noise-free linear subspace, the number of points $N$ captured inside
$B^d_{x_0}(r)$ is a function of the sampling density $\rho$:
\begin{equation}
  N = \rho v_d r^d,
  \label{eq:sampling-density}
\end{equation} 
where $v_d$ is the volume of the $d$-dimensional unit ball.
\textcolor{black}{ The sampled points are assumed to be in general
  linear position, a standard assumption when sampling from a linear
  subspace (see Remark \ref{remark:sampling})}.
  
Finally, we assume the sample points of $\mathcal{M}$ are contaminated with an additive Gaussian noise vector $e$ drawn from the $\N\left(0,\sigma^2I_D\right)$ distribution.  Each sample $x$ is a $D$-dimensional vector, and $N$ such samples may be stored as columns of a matrix $X \in \mathbb{R}^{D\times N}$.  The coordinate system above allows the decomposition of $x$ into its linear (tangent plane) component $\ell$, its quadratic (curvature) component $c$, and noise $e$, three $D$-dimensional vectors
\begin{align}
  \ell &= [\ell_1 ~ \ell_2 ~ \cdots ~~ \ell_d ~~ 0 ~~~ \cdots~~~ 0]^T \label{eq:vec-ell}\\
  c &= [0 ~~~~\cdots~~~~ 0 ~ c_{d+1} ~ \cdots ~ c_D]^T \label{eq:vec-c}\\
  e &= [e_1 ~ e_2 ~ \quad\quad\;\; \cdots \quad\quad\;\; e_D]^T \label{eq:vec-e}
\end{align}
such that the last $(D-d)$ entries of $c$ are of the form $c_i = f_i(\ell)$, $i=(d+1),\dots,D$.
We may store the $N$ samples of $\ell$, $c$, and $e$ as columns of matrices $L$, $C$, $E$, respectively, such that our data matrix is decomposed as
\begin{equation}
  X = L + C + E.
  \label{eq:decomp}
\end{equation}

The true tangent space we wish to recover is given by the PCA of $L$.
Because we do not have direct access to $L$, we work with $X$ as a
proxy, and instead recover a subspace spanned by the corresponding
eigenvectors of $XX^T$.  We will study how close this recovered
invariant subspace of $XX^T$ is to the corresponding invariant
subspace of $LL^T$ as a function of scale.  Throughout this work,
scale refers to the number of points $N$ in the local neighborhood
within which we perform PCA.  Given a fixed density of points, scale
may be equivalently quantified as the radius $r$ about the reference
point $x_0$ defining the local neighborhood.

\begin{remark}
  Of course it is unrealistic for the data to be
  observed in the described coordinate system.  As noted, we may use a
  rotation to align the coordinate axes with the principal directions
  associated with the principal curvatures.  Doing so allows us to write
  \eqref{eq:local-model} as well as \eqref{eq:decomp}.  Because we will
  ultimately quantify the norm of each matrix using the
  unitarily-invariant Frobenius norm, this rotation will not affect our
  analysis.  We therefore proceed by assuming that the coordinate axes
  align with the principal directions.
\end{remark}

\begin{remark}
  Equation \eqref{eq:local-model} represents an exact quadratic embedding of $\mathcal{M}$.  While it may be interesting to consider more general embeddings, as is done for the noise-free case in \cite{Tyagi}, a Taylor expansion followed by rotation and translation will result in an embedding of the form \eqref{eq:local-model-full}.  Noting that the numerical results of \cite{Tyagi} indicate no loss in accuracy when truncating higher-order terms, proceeding with an analysis of \eqref{eq:local-model} remains sufficiently general.
\end{remark}

\begin{remark}
  \label{remark:sampling}
  \textcolor{black}{In a non-pathological configuration (e.g., points observed in general linear position), only $d+1$ sample points are needed
    to ensure that the top $d$ eigenvectors of $LL^T$ span the true
    tangent space.  It has been noted in the literature (e.g., \cite{Rudelson-Isotropic,Vershynin-HowClose})
    that $\mathcal{O}(d\log d)$ points should be sampled for the empirical covariance
    matrix to be close in norm to the population covariance, with high
    probability.  Strictly enforcing this sampling condition is a very
    mild requirement for our setting, in which the sampling density
    $\rho$ (see equation \eqref{eq:sampling-density}) is usually large
    and the extra logarithmic factor of $d$ is easily achieved.
    Further, this logarithmic factor is implicitly present in our
    analysis as a consequence of the lower bound on the smallest
    eigenvalue of $LL^T$ (see Appendix
    \ref{subsec:linearEigenvalues}).
    We also note that we do not intend to analyze the
    extremely small scales (very small $N$) where finite sample
    effects create instability and prevent a meaningful analysis.}
\end{remark}

\subsection{Perturbation of Invariant Subspaces} \label{sec:subspace-perturb}

Given the decomposition of the data \eqref{eq:decomp}, we have
\begin{equation}
  XX^T ~=~ LL^T + CC^T + EE^T + LC^T + CL^T + LE^T + EL^T + CE^T + EC^T.
  \label{eq:AA^T}
\end{equation}
We introduce some notation to account for the centering required by PCA.  Define the sample mean of $N$ realizations of random vector $m$ as
\begin{equation}
  \overline{m} = \frac{1}{N}\sum_{i=1}^N m^{(i)},
  \label{eq:sample-mean}
\end{equation}
where $m^{(i)}$ denotes the $i$th realization.  Letting $\mathbf{1}_N$ represent the column vector of $N$ ones, define
\begin{equation}
  \overline{M} = \overline{m}\mathbf{1}_N^T
\end{equation}
to be the matrix with $N$ copies of $\overline{m}$ as its columns.
Finally, let $\widetilde{M}$ denote the centered version of $M$:
\begin{equation}
  \widetilde{M} ~=~ M - \overline{M}.
\end{equation}
Then we have
\begin{equation}
  \widetilde{X}\widetilde{X}^T ~=~ \widetilde{L}\widetilde{L}^T + \widetilde{C}\widetilde{C}^T + \widetilde{E}\widetilde{E}^T + \widetilde{L}\widetilde{C}^T + \widetilde{C}\widetilde{L}^T + \widetilde{L}\widetilde{E}^T + \widetilde{E}\widetilde{L}^T + \widetilde{C}\widetilde{E}^T + \widetilde{E}\widetilde{C}^T.
  \label{eq:AA^T-TILDE}
\end{equation}

The problem may be posed as a perturbation analysis of invariant subspaces.  Rewrite \eqref{eq:AA^T} as
\begin{equation}
  \frac{1}{N}\widetilde{X}\widetilde{X}^T = \frac{1}{N}\widetilde{L}\widetilde{L}^T + \Delta,
  \label{eq:Delta}
\end{equation}
where
\begin{equation}
  \Delta = \frac{1}{N}(\widetilde{C}\widetilde{C}^T + \widetilde{E}\widetilde{E}^T + \widetilde{L}\widetilde{C}^T + \widetilde{C}\widetilde{L}^T + \widetilde{L}\widetilde{E}^T + \widetilde{E}\widetilde{L}^T + \widetilde{C}\widetilde{E}^T + \widetilde{E}\widetilde{C}^T)
  \label{eq:fullDelta}
\end{equation}
is the perturbation that prevents us from working directly with $\widetilde{L}\widetilde{L}^T$.  The dominant eigenspace of $\widetilde{X}\widetilde{X}^T$ is therefore a perturbed version of the dominant eigenspace of $\widetilde{L}\widetilde{L}^T$.  Seeking to minimize the effect of this perturbation, we look for the scale $N^*$ (equivalently $r^*$) at which the dominant eigenspace of $\widetilde{X}\widetilde{X}^T$ is closest to that of $\widetilde{L}\widetilde{L}^T$.  Before proceeding, we review material on the perturbation of eigenspaces relevant to our analysis.  The reader familiar with this topic is invited to skip directly to Theorem \ref{thm:Stewart}.

The distance between two subspaces of $\mathbb{R}^D$ can be defined as
the spectral norm of the difference between their respective
orthogonal projectors \cite{GVL}.  As we will always be considering
two equidimensional subspaces, this distance is equal to the sine of
the largest principal angle between the subspaces.
To control all such principal angles, we state our results using the
Frobenius norm of this difference.
Our goal is therefore to control the behavior of $\|P-\widehat{P}\|_F$,
where $P$ and $\widehat{P}$ are the orthogonal projectors onto the subspaces computed from $L$ and $X$, respectively.

The norm $\|P-\widehat{P}\|_F$ may be bounded by the classic
$\sin\Theta$ theorem of Davis and Kahan \cite{Davis}.  We will use a
version of this theorem presented by Stewart (Theorem V.2.7 of
\cite{Stewart}), modified for our specific purpose.  First, we
establish some notation, following closely that found in
\cite{Stewart}.  Consider the eigendecompositions
\begin{align}
  \frac{1}{N}\widetilde{L}\widetilde{L}^T &= U\Lambda U^T = [U_1~U_2]~\begin{bmatrix} \Lambda_1 & \\ & \Lambda_2 \end{bmatrix} ~[U_1~U_2]^T, \label{eq:eigdecompL} \\
  \frac{1}{N}\widetilde{X}\widetilde{X}^T &= \widehat{U} \widehat{\Lambda} \widehat{U}^T = [\widehat{U}_1~\widehat{U}_2]~\begin{bmatrix} \widehat{\Lambda}_1 & \\ & \widehat{\Lambda}_2 \end{bmatrix}~[\widehat{U}_1~\widehat{U}_2]^T,  \label{eq:eigdecompA}
\end{align}
such that the columns of $U$ are the eigenvectors of $\frac{1}{N}\widetilde{L}\widetilde{L}^T$ and the columns of $\widehat{U}$ are the eigenvectors of $\frac{1}{N}\widetilde{X}\widetilde{X}^T$.
The eigenvalues of $\frac{1}{N}\widetilde{L}\widetilde{L}^T$ are arranged in descending order as the entries of diagonal matrix $\Lambda$.  The eigenvalues are also partitioned such that diagonal matrices $\Lambda_1$ and $\Lambda_2$ contain the $d$ largest entries of $\Lambda$ and the $(D-d)$ smallest entries of $\Lambda$, respectively.
The columns of $U_1$ are those eigenvectors associated with the $d$ eigenvalues in $\Lambda_1$, the columns of $U_2$ are those eigenvectors associated with the $(D-d)$ eigenvalues in $\Lambda_2$, and the eigendecomposition of $\frac{1}{N}\widetilde{X}\widetilde{X}^T$ is similarly partitioned.  The subspace we recover is spanned by the columns of $\widehat{U}_1$ and we wish to have this subspace as close as possible to the tangent space spanned by the columns of $U_1$.  The orthogonal projectors onto the tangent and computed subspaces, $P$ and $\widehat{P}$ respectively, are given by
\begin{align*}
  P &= U_1 U_1^T \quad \text{and} \quad
  \widehat{P} = \widehat{U}_1 \widehat{U}_1^T.
\end{align*}
Define $\lambda_d$ to be the $d$th largest eigenvalue of $\frac{1}{N}\widetilde{L}\widetilde{L}^T$, or the last entry on the diagonal of $\Lambda_1$.  This eigenvalue corresponds to variance in a tangent space direction.

We are now in position to state the theorem.  Note that we have made
use of the fact that the columns of $U$ are the eigenvectors of
$\widetilde{L}\widetilde{L}^T$, that $\Lambda_1, \Lambda_2$ are
Hermitian (diagonal) matrices, and that the Frobenius norm is used to
measure distances.  The reader is referred to \cite{Stewart} for the
theorem in its original form.
\begin{theorem}[Davis \& Kahan \cite{Davis}, Stewart \cite{Stewart}] \label{thm:Stewart}
  Let
  \begin{equation*}
    \delta = \lambda_d - \left\|U_1^T \Delta U_1\right\|_F  - \left\|U_2^T \Delta U_2\right\|_F
  \end{equation*}
  and consider
  \begin{itemize}
  \item (Condition 1) \quad $\delta > 0$
  \item (Condition 2) \quad  $\left\|U_1^T \Delta U_2\right\|_F \left\|U_2^T \Delta U_1\right\|_F < \frac{1}{4}\delta^2$.
  \end{itemize}
  Then, provided that conditions 1 and 2 hold,
  \begin{equation}
    \left\|P - \widehat{P}\right\|_F ~\leq~ 2\sqrt{2}~\frac{\left\|U_2^T \Delta U_1\right\|_F}{\delta}.\\
    \label{eq:stewart_thm}
  \end{equation}
\end{theorem}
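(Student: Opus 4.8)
The plan is to recognize this statement as the specialization of Stewart's invariant-subspace perturbation theorem (Theorem V.2.7 of \cite{Stewart}) to the case at hand --- where the unperturbed matrix $\tfrac1N\widetilde L\widetilde L^T$ is symmetric and has rank at most $d$, so that $\Lambda_2=0$ and the relevant spectral separation of $\Lambda_1$ from $\Lambda_2$ is exactly $\lambda_d$, and where all norms are Frobenius --- and to reconstruct the standard algebraic-Riccati/fixed-point proof, finishing by converting the resulting angle estimate into the projector bound \eqref{eq:stewart_thm}. First I would conjugate \eqref{eq:Delta} by the orthogonal matrix $U=[U_1\ U_2]$: this diagonalizes $\tfrac1N\widetilde L\widetilde L^T$ into $\mathrm{diag}(\Lambda_1,\Lambda_2)$ and turns $\tfrac1N\widetilde X\widetilde X^T$ into the block matrix with diagonal blocks $\Lambda_1+E_{11}$ and $\Lambda_2+E_{22}$ and off-diagonal blocks $E_{12}=E_{21}^{T}$, where $E_{ij}:=U_i^{T}\Delta U_j$. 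Write $\gamma:=\|E_{21}\|_F$ and $\eta:=\|E_{12}\|_F$, so that $\delta=\lambda_d-\|E_{11}\|_F-\|E_{22}\|_F$ and Condition~2 reads $\gamma\eta<\tfrac14\delta^{2}$.

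Any $d$-dimensional subspace close to $\mathrm{range}\,U_1$ can be written as the column space of $\left[\begin{smallmatrix} I\\ P\end{smallmatrix}\right]$ for some matrix $P\in\R^{(D-d)\times d}$, and demanding that this subspace be invariant under the block matrix above is equivalent to the algebraic Riccati equation $\mathbf T(P)=-E_{21}-(E_{22}P-PE_{11})+PE_{12}P$, where $\mathbf T(P):=\Lambda_2P-P\Lambda_1=-P\Lambda_1$ is the associated Sylvester operator. Condition~1 forces $\lambda_d>0$, so $\mathbf T$ is invertible with $\|\mathbf T^{-1}Q\|_F\le\lambda_d^{-1}\|Q\|_F$; hence solving the Riccati equation is the same as finding a fixed point of $\Phi(P):=\mathbf T^{-1}\!\left(-E_{21}-E_{22}P+PE_{11}+PE_{12}P\right)$.

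I would produce this fixed point by the contraction mapping principle on the closed ball $\mathcal B=\{P:\|P\|_F\le\rho^{\star}\}$, where $\rho^{\star}=(\delta-\sqrt{\delta^{2}-4\gamma\eta})/(2\eta)$ is the smaller root of $\eta\rho^{2}-\delta\rho+\gamma=0$, which is real precisely by Condition~2. Submultiplicativity of the Frobenius norm gives $\|\Phi(P)\|_F\le\lambda_d^{-1}\bigl(\gamma+(\|E_{11}\|_F+\|E_{22}\|_F)\|P\|_F+\eta\|P\|_F^{2}\bigr)$, and on $\mathcal B$ the bracketed quantity is at most $\lambda_d\rho^{\star}$ by the defining quadratic, so $\Phi$ maps $\mathcal B$ into itself; an analogous estimate gives a Lipschitz constant $\lambda_d^{-1}(\|E_{11}\|_F+\|E_{22}\|_F+2\eta\rho^{\star})$, which is strictly less than one because $2\eta\rho^{\star}=\delta-\sqrt{\delta^{2}-4\gamma\eta}<\delta$. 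Thus there is a unique $P^{\star}\in\mathcal B$ with $\|P^{\star}\|_F\le\rho^{\star}=2\gamma/(\delta+\sqrt{\delta^{2}-4\gamma\eta})\le 2\gamma/\delta$. A Weyl-type comparison of the spectrum of $\Lambda_1+E_{11}+E_{12}P^{\star}$ with that of $\Lambda_2+E_{22}-P^{\star}E_{12}$, again invoking $2\eta\|P^{\star}\|_F<\delta$, confirms that the invariant subspace $\mathrm{range}(U_1+U_2P^{\star})$ is the dominant one, i.e.\ it equals $\mathrm{range}\,\widehat U_1$.

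Finally, writing the SVD of $P^{\star}$, the principal angles $\theta_i$ between $\mathrm{range}\,\widehat U_1$ and $\mathrm{range}\,U_1$ satisfy $\tan\theta_i=\sigma_i(P^{\star})$, so that $\|P-\widehat P\|_F^{2}=2\sum_i\sin^{2}\theta_i=\sum_i 2\sigma_i(P^{\star})^{2}/\bigl(1+\sigma_i(P^{\star})^{2}\bigr)\le 2\|P^{\star}\|_F^{2}$, whence $\|P-\widehat P\|_F\le\sqrt2\,\|P^{\star}\|_F\le 2\sqrt2\,\gamma/\delta$, which is exactly \eqref{eq:stewart_thm}. The step I expect to be the main obstacle is the contraction estimate: pinning down the radius $\rho^{\star}$ so that both the self-map and the Lipschitz bounds close, and seeing that Condition~2 is precisely what makes the governing quadratic have a real root and forces the iteration to contract; verifying that the invariant subspace produced is genuinely the dominant eigenspace (rather than some other invariant subspace) is a secondary but necessary bookkeeping point.
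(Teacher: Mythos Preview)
Your reconstruction is correct and follows precisely the Riccati/fixed-point argument of Stewart (Theorem~V.2.7 of \cite{Stewart}), specialized using $\Lambda_2=0$ so that the separation collapses to $\lambda_d$. Note, however, that the paper does not supply its own proof of this statement: it is quoted as a known result, with the remark that ``the reader is referred to \cite{Stewart} for the theorem in its original form,'' so there is no in-paper argument to compare against --- your proposal simply fills in the proof that the paper cites.
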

\vskip2ex

\textcolor{black}{It is instructive to consider the perturbation
  $\Delta$ as an operator with range in $\mathbb{R}^D$ and quantify its effect on the existing invariant subspaces.  Consider first the idealized case where $U_1$ is an invariant subspace of $\Delta$, i.e.,
  $\Delta$ maps points from the
  column space of $U_1$ to the column space of $U_1$.
  Clearly, $U_2^T \Delta U_1 = 0$ in this case as the subspace spanned by $U_1$ remains invariant
  under the action of $\Delta$, and the perturbation angle is zero.
  In general, however, we cannot expect such an idealized restriction
  for the range of $\Delta$ and we therefore expect that $\Delta U_1$
  will have a component that is normal to the tangent space.  The
  numerator $\|U_2^T\Delta U_1\|_F$ of \eqref{eq:stewart_thm} measures
  this normal component, thereby quantifying the effect of the
  perturbation on the tangent space.}  Then $\|U_1^T\Delta U_1\|_F$
measures the component that remains in the tangent space after the
action of $\Delta$.  As this component does not contain curvature,
$\|U_1^T\Delta U_1\|_F$ corresponds to the spectrum of the noise
projected in the tangent space.  Similarly, $\|U_2^T\Delta U_2\|_F$
measures the spectrum of the curvature and noise perturbation normal
to the tangent space.  Thus, when $\Delta$ leaves the column space of
$U_1$ mostly unperturbed (i.e., $\|U_2^T\Delta U_1\|_F$ is small) and
the spectrum of the tangent space is well separated from that of the
noise and curvature, the estimated subspace will form only a small
angle with the true tangent space.  In the next section, we use the
machinery of this classic result to bound the angle caused by the
perturbation $\Delta$ and develop an interpretation of the conditions
of Theorem \ref{thm:Stewart} suited to the noise-curvature trade-off.

\section{Main Results} \label{sec:mainresult}

Given the framework for analysis developed above, the terms appearing in the statement of Theorem \ref{thm:Stewart} ($\left\|U_1^T \Delta U_1\right\|_F$, $\left\|U_2^T \Delta U_2\right\|_F$, $\left\|U_2^T \Delta U_1\right\|_F$, $\left\|U_1^T \Delta U_2\right\|_F$, and $\lambda_d$) must be controlled.
We notice that $\Delta$ is a symmetric matrix, so that $\left\|U_1^T \Delta U_2\right\|_F = \left\|U_2^T \Delta U_1\right\|_F$.
Using the triangle inequality and the geometric constraints
\begin{equation}
  U_1^TC = 0 \quad \text{and} \quad U_2^TL = 0,
  \label{eq:geometric-constraints}
\end{equation}
the norms may be controlled by bounding the contribution of each term in the perturbation $\Delta$:
\begin{align*}
  \left\|U_1^T \Delta U_1\right\|_F &\leq 2\left\|U_1^T \frac{1}{N}\widetilde{L}\widetilde{E}^T U_1\right\|_F + \;\;\, \left\|U_1^T\frac{1}{N}\widetilde{E}\widetilde{E}^T U_1\right\|_F, \\
  \left\|U_2^T \Delta U_2\right\|_F &\leq \;\;\, \left\|U_2^T\frac{1}{N}\widetilde{C}\widetilde{C}^T U_2\right\|_F + 2\left\|U_2^T \frac{1}{N}\widetilde{C}\widetilde{E}^T U_2\right\|_F + \left\|U_2^T\frac{1}{N}\widetilde{E}\widetilde{E}^T U_2\right\|_F, \\
  \left\|U_2^T \Delta U_1\right\|_F & \leq \;\;\, \left\|U_2^T\frac{1}{N}\widetilde{C}\widetilde{L}^T U_1\right\|_F + \;\;\, \left\|U_2^T \frac{1}{N}\widetilde{E}\widetilde{L}^T U_1\right\|_F + \,\left\|U_2^T\frac{1}{N}\widetilde{C}\widetilde{E}^T U_1\right\|_F + \, \left\|U_2^T\frac{1}{N}\widetilde{E}\widetilde{E}^T U_1\right\|_F.
\end{align*}
Importantly, we seek control over each (right-hand side) term in the finite-sample regime, as we assume a possibly large but finite number of sample points $N$.  Therefore, bounds are derived through a careful analysis employing concentration results and techniques from non-asymptotic random matrix theory.  The technical analysis is presented in the appendix and proceeds by analyzing three distinct cases: the covariance of bounded random matrices, unbounded random matrices, and the interaction of bounded and unbounded random matrices.  The eigenvalue $\lambda_d$ is bounded again using random matrix theory.  In all cases, care is taken to ensure that bounds hold with high probability that is independent of the ambient dimension $D$.

\begin{remark}
  Other, possibly tighter, avenues of analysis may be possible for some of the bounds presented in the appendix.  However, the presented analysis avoids large union bounds and dependence on the ambient dimension to state results holding with high probability.  Alternative analyses are possible, often sacrificing probability to exhibit sharper concentration.  We proceed with a theoretical analysis holding with the highest probability while maintaining accurate results.
\end{remark}

\subsection{Bounding the Angle Between Subspaces} \label{sec:mainresult-angle}

We are now in position to apply Theorem \ref{thm:Stewart} and state our main result.
First, we make the following definitions involving the principal curvatures:
\begin{equation}
  K_i = \sum_{n=1}^d \kappa_n^{(i)}, \quad
  K = \left(\sum_{i=d+1}^D K_i^2\right)^{\frac{1}{2}},
  \label{eq:K}
\end{equation}
\begin{equation}
  K_{nn}^{ij} = \sum_{n=1}^d \kappa_n^{(i)} \kappa_n^{(j)}, \quad 
  K_{mn}^{ij} = \sum_{\substack{m,n=1\\m\neq n}}^d \kappa_m^{(i)}
  \kappa_n^{(j)},
\label{sectional-curvature}
\end{equation}
and
\begin{equation}
  \textcolor{black}{
    \mathcal{K} = \left[\sum_{i=d+1}^D\sum_{j=d+1}^D \left[(d+1)K_{nn}^{ij}-K_{mn}^{ij}\right]^2\right]^{\frac{1}{2}}.
  }
\end{equation}
The constant $K_i$ is the mean curvature (rescaled by a factor of $d$)
in \textcolor{black}{normal direction $i$}, for $(d+1)\leq i \leq D$.
\textcolor{black}{The curvature of the local model is quantified by $K$,
  which is a natural result of our use of the Frobenius norm, and
  $\mathcal{K}$, which results from the expectation of the norm of the
  curvature covariance.}  Note that $K_iK_j = K_{nn}^{ij} +
K_{mn}^{ij}$.  We also define the constants
\begin{equation}
  K_{i}^{(+)} = \left(\sum_{n=1}^d |\kappa_n^{(i)}|^2\right)^{\frac12}, \quad \text{and} \quad K^{(+)} = \left(\sum_{i=d+1}^D (K_i^{(+)})^2\right)^{\frac{1}{2}}
\end{equation}
to be used when strictly positive curvature terms are required.

The main result is formulated in the appendix and makes the following benign assumptions on the number of sample points $N$ and the probability constants $\xi$ and $\xi_{\lambda}$:
\begin{equation*}
  N > 4(\max(\sqrt{d},\sqrt{D-d}) + \xi), \quad
  \xi < 0.7\sqrt{d(D-d)}, \quad \text{and} \quad \xi_{\lambda} < \frac{3}{\sqrt{d+2}} \sqrt{N},
\end{equation*}
\textcolor{black}{in addition to the requirement that $N \geq \mathcal{O}(d\log d)$ for the points observed in general linear position (see Remark \ref{remark:sampling}).
We note that the assumptions are easily satisfied as
  we envision a sampling density such that $N$ is large (but finite).}
Further, the assumptions listed above are not crucial to the result but allow for a
more compact presentation.

\begin{theorem}[Main Result]  \label{thm:mainresult1}
  \textcolor{black}{
    Let
    \begin{align}
      \delta &= \frac{r^2}{d+2} - \frac{\mathcal{K}r^4}{2(d+2)^2(d+4)} - \sigma^2\left(\sqrt{d} + \sqrt{D-  d}\right) - \frac{1}{\sqrt{N}}\zeta_{\text{denom}}(\xi,\xi_{\lambda})\\
      &\hspace{-0.9in} \text{and} \nonumber \\
      \beta &= \frac{1}{\sqrt{N}}\left[K^{(+)}r^3 \nu(\xi) + \sigma\sqrt{d(D-d)}\eta(\xi,\xi_{\lambda}) + \frac{1}{\sqrt{N}}\zeta_{\text{numer}}(\xi)\right].
    \end{align}
    If the following conditions hold (in addition to the benign assumptions stated above):
    \begin{itemize}
    \item (Condition 1) \quad $\delta > 0$,
    \item (Condition 2) \quad $\beta < \frac{1}{2}\delta$,
    \end{itemize}
    then
    \begin{equation}
      \left\|P - \widehat{P}\right\|_F \leq~ \frac{2\sqrt{2}\beta}{\delta} ~~=~~
      \frac{2\sqrt{2}\,\frac{1}{\sqrt{N}}\left[K^{(+)}r^3 \nu(\xi) + \sigma\sqrt{d(D-d)}\eta(\xi,\xi_{\lambda}) + \frac{1}{\sqrt{N}}\zeta_{\text{numer}}(\xi)\right]}
      {\frac{r^2}{d+2} - \frac{\mathcal{K}r^4}{2(d+2)^2(d+4)} - \sigma^2\left(\sqrt{d} + \sqrt{D-  d}\right) - \frac{1}{\sqrt{N}}\zeta_{\text{denom}}(\xi,\xi_{\lambda})}
      \label{eq:main-result1}
    \end{equation}
    with probability greater than
    \begin{equation}
      1-2de^{-\xi_{\lambda}^2}-9e^{-\xi^2}
      \label{eq:main-result-prob}
    \end{equation}
    over the joint random selection of the sample points and random realization of the noise,
    where the following definitions have been made to ease the presentation:
  }
  \begin{itemize}
    
  \item \textcolor{black}{geometric and noise terms}
    {\allowdisplaybreaks
      \begin{align*}
        \nu(\xi) &= \frac{1}{2}\frac{(d+3)}{(d+2)}p_1(\xi), & \text{(linear--curvature)} \\
        \eta_1 &= \sigma, & \text{(noise)} \\
        \eta_2(\xi_{\lambda}) &= \frac{r}{\sqrt{d+2}}p_2(\xi_{\lambda}), & \text{(linear--noise)}\\ 
        \eta_3(\xi) &= \frac{\mathcal{K}^{1/2}r^2}{(d+2)\sqrt{2(d+4)}}p_5(\xi), & \text{(curvature--noise)} \\
        \eta(\xi,\xi_{\lambda}) &= p_3(\xi,\sqrt{d(D-d)}) \bigg[\eta_1 + \eta_2(\xi_{\lambda}) + \eta_3(\xi)\bigg],
      \end{align*}
    }
    
  \item finite sample correction terms (numerator)
    {\allowdisplaybreaks
      \begin{align*}
        \zeta_1(\xi) &= \frac{1}{2}K^{(+)}r^3 p_1^2(\xi), & \text{(linear--curvature)} \\
        \zeta_2(\xi) &= \sigma^2\sqrt{d(D-d)}p_3(\xi,\sqrt{d(D-d)})p_4(\xi,\sqrt{D-d}), & \text{(noise)}\\
        \zeta_{\text{numer}}(\xi) &= \zeta_1(\xi) + \zeta_2(\xi),
      \end{align*}
    }

  \item finite sample correction terms (denominator)
    {\allowdisplaybreaks
      \begin{align*}
        \zeta_3(\xi_{\lambda}) &= \frac{r^2}{d+2}\left[p_0(\xi_\lambda) + \left(\frac{2}{\sqrt{N}}-\frac{1}{N^{3/2}}\right)\left(1-\frac{p_0(\xi_\lambda)}{\sqrt{N}}\right)\right], & \text{(linear)}\\
        \zeta_4(\xi) &= \frac{(K^{(+)})^2r^4}{4}\left(p_1(\xi) + \frac{1}{\sqrt{N}}p_1^2(\xi)\right), &\text{(curvature)}\\
        \zeta_5(\xi,\xi_{\lambda}) &= 2 r \sigma \frac{d}{\sqrt{d+2}} p_2(\xi_{\lambda}) p_3(\xi,d), & \text{(linear--noise)}\\
        \zeta_6(\xi) &= 2 \mathcal{K}^{\frac{1}{2}} r^2 \sigma \frac{(D-d)}{(d+2)\sqrt{2(d+4)}}p_3(\xi,D-d)p_5(\xi), &\text{(curvature--noise)} \\
        \zeta_7(\xi) &= \frac{5}{2}\sigma^2\left[\sqrt{d}p_4(\xi,\sqrt{d}) + \sqrt{D-d}p_4(\xi,\sqrt{D-d}) \right], &\text{(noise)}\\
        \zeta_{\text{denom}}(\xi,\xi_{\lambda}) &= \zeta_3(\xi) + \zeta_4(\xi) + \zeta_5(\xi,\xi_{\lambda}) + \zeta_6(\xi) + \zeta_7(\xi),
      \end{align*}
    }

    and

  \item probability-dependent terms (i.e., terms depending on the probability constants)
    {\allowdisplaybreaks
      \begin{equation*}
        p_0(\xi) = \xi\frac{\sqrt{8(d+2)}}{(1-\frac{1}{N})}, \qquad
        p_1(\xi) = \left(2+\xi\sqrt{2}\right), \qquad
        p_2(\xi) = \left(1 + \xi \frac{5\sqrt{d+2}}{\sqrt{N}} \right),
      \end{equation*}
      \begin{equation*}
        p_3(\xi,\omega) = \left(1 + \frac{6}{5} \frac{\xi}{\omega}\right), \qquad
        p_4(\xi,\omega) = \left(\omega + \xi\sqrt{2} \right),
      \end{equation*}
      \begin{equation*}
        p_5(\xi) = \left(1 + \frac{1}{\sqrt{N}}\frac{(K^{(+)})^2}{2\mathcal{K}}(d+2)^2(d+4)(p_1(\xi)+\frac{1}{\sqrt{N}} p_1^2(\xi))\right)^{1/2}.
      \end{equation*}
    }

  \end{itemize}

  \noindent \textcolor{black}{Finally, we recall the relationship $N = \rho v_d r^d$ given by \eqref{eq:sampling-density}.}
  
\end{theorem}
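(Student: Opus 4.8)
The plan is to instantiate Theorem~\ref{thm:Stewart} with the explicit perturbation $\Delta$ of \eqref{eq:fullDelta}, which reduces everything to producing, with high probability, a lower bound on $\lambda_d$ and upper bounds on $\|U_1^T\Delta U_1\|_F$, $\|U_2^T\Delta U_2\|_F$, and $\|U_2^T\Delta U_1\|_F$ (the term $\|U_1^T\Delta U_2\|_F$ equals the latter by symmetry of $\Delta$). Expanding $\Delta$ by the triangle inequality over the nine summands of \eqref{eq:AA^T-TILDE} and invoking the geometric constraints \eqref{eq:geometric-constraints} — which annihilate the purely-linear part of the $U_2^T(\cdot)U_2$ block, the purely-curvature part of the $U_1^T(\cdot)U_1$ block, and several mixed terms — leaves precisely the reduced sums displayed just before the theorem statement. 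Each surviving term is of one of three types: (i) the covariance of a bounded random matrix ($\widetilde L$, $\widetilde C$, deterministic functions of the uniformly sampled tangent coordinates with almost-sure norm bounds in powers of $r$); (ii) the covariance of the unbounded Gaussian matrix $\widetilde E$; or (iii) the cross-covariance of a bounded matrix with $\widetilde E$. The appendix treats these three cases in turn.

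For type (i) one computes the exact population values — integrals of monomials in $\ell$ against the uniform law on $B^d_{x_0}(r)$, which produce the leading terms $r^2/(d+2)$ and $\mathcal{K}r^4/(2(d+2)^2(d+4))$ together with the curvature constants $K$, $K^{(+)}$, $\mathcal{K}$ — and then applies a matrix Hoeffding/Bernstein concentration inequality to bound the deviation of the $N$-sample average from its mean; this is the source of the $1/\sqrt N$ finite-sample factors $\zeta_3,\dots,\zeta_7$ and the probability polynomials $p_0,\dots,p_5$. For type (ii) one invokes standard non-asymptotic (Davidson–Szarek/Gordon) bounds on the extreme singular values of a Gaussian matrix after splitting $E$ into its $d$-dimensional tangential and $(D-d)$-dimensional normal parts; this yields the $\sigma^2(\sqrt d+\sqrt{D-d})$ contributions and, crucially, keeps the tail probabilities independent of the ambient dimension $D$. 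For type (iii) one conditions on the bounded factor and applies Gaussian/$\chi^2$-concentration to the resulting linear functional of $E$, producing the mixed terms $\eta_2,\eta_3,\zeta_5,\zeta_6$ with the $\sigma\sqrt{d(D-d)}$ scaling. Centering is absorbed throughout by writing $\widetilde M\widetilde M^T = MM^T - N\,\overline m\,\overline m^T$ and controlling $\|\overline m\|$ by its own concentration. The eigenvalue $\lambda_d$ is handled separately: its population value is $r^2/(d+2)$ (the centered uniform law on the ball is isotropic in the tangent directions, perturbed only through the quadratic coupling of $L$ and $C$), and a Weyl-type perturbation bound plus a sample concentration argument supply the $\mathcal{K}r^4/(2(d+2)^2(d+4))$ and $\zeta_3/\sqrt N$ corrections at the cost of the $2de^{-\xi_\lambda^2}$ term in \eqref{eq:main-result-prob}.

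Finally one assembles: substituting the numerator bound for $\|U_2^T\Delta U_1\|_F$ into \eqref{eq:stewart_thm} and using $\delta \ge (\text{lower bound on }\lambda_d) - (\text{upper bounds on the two diagonal blocks})$ in the denominator yields \eqref{eq:main-result1}, once one checks that the stated Conditions~1 and~2, together with the benign assumptions on $N,\xi,\xi_\lambda$ (which are exactly what prevents the $p_i$ and $1/\sqrt N$ corrections from overwhelming the leading terms), imply the two conditions of Theorem~\ref{thm:Stewart}; here Condition~2, $\|U_2^T\Delta U_1\|_F<\tfrac12\delta$, is a clean sufficient form of the product condition of Theorem~\ref{thm:Stewart} since $\|U_1^T\Delta U_2\|_F=\|U_2^T\Delta U_1\|_F$. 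The overall probability \eqref{eq:main-result-prob} is a union bound over all the concentration events used along the way. The \textbf{main obstacle} I expect is type~(iii): the cross terms $\frac1N U_2^T\widetilde E\widetilde L^T U_1$ and $\frac1N U_2^T\widetilde C\widetilde E^T U_1$ couple the heavy-tailed Gaussian directly to matrices scaled by $r$ and $r^2$, and a careless estimate introduces spurious powers of $D$; obtaining the honest $\sigma\sqrt{d(D-d)}$ behaviour with a $D$-independent tail requires conditioning on the bounded factor, exploiting that $U_2^T\widetilde E$ is confined to the $(D-d)$-dimensional normal space while $\widetilde L^T U_1$ is a length-controlled $N$-vector, and then a sharp $\chi^2$-type bound — this, together with the bounded–unbounded interaction appearing in the $\widetilde E\widetilde E^T$ cross terms, accounts for the bulk of the technical work.
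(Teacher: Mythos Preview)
Your plan is essentially the paper's: invoke Theorem~\ref{thm:Stewart}, exploit the symmetry $\|U_1^T\Delta U_2\|_F=\|U_2^T\Delta U_1\|_F$, expand $\Delta$ via the triangle inequality and the constraints \eqref{eq:geometric-constraints}, bound each surviving block by one of the three concentration mechanisms (Shawe--Taylor/Cristianini and matrix Chernoff for the bounded pieces, Gaussian singular-value bounds plus a Wishart/$\chi^2$ argument for $\widetilde E\widetilde E^T$, and the conditioning-plus-projection trick for the cross terms), and take a union bound.

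One bookkeeping correction: the $\mathcal{K}r^4/(2(d+2)^2(d+4))$ term in the denominator does \emph{not} enter as a Weyl-type perturbation of $\lambda_d$. The paper bounds $\lambda_d$ purely from $\frac1N\widetilde L\widetilde L^T$ via Tropp's matrix Chernoff inequality (giving $r^2/(d+2)$ minus the $\zeta_3/\sqrt N$ correction), and the curvature contribution to the denominator comes entirely from the $\|U_2^T\frac1N\widetilde C\widetilde C^T U_2\|_F$ piece of $\|U_2^T\Delta U_2\|_F$. Relatedly, the $2de^{-\xi_\lambda^2}$ in \eqref{eq:main-result-prob} pays for \emph{two} matrix-Chernoff events --- one for $\lambda_d$ and one for the upper bound $\lambda_1\le\lambda_{\text{bound}}$, the latter being needed inside the $\widetilde E\widetilde L^T$ estimate --- not for a single eigenvalue perturbation.
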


\begin{proof}
  Condition 2 is simplified from its original statement in Theorem \ref{thm:Stewart} by noticing that $\Delta$ is a symmetric matrix so that $\left\|U_1^T \Delta U_2\right\|_F = \left\|U_2^T \Delta U_1\right\|_F$.
  Then, applying the norm bounds computed in the appendix to
  Theorem \ref{thm:Stewart} and choosing the probability constants
  \begin{equation} \label{eq:prob-consts}
    \xi_{\lambda_d} = \xi_{\lambda_1} = \xi_{\lambda} \quad \text{and} \quad \xi_{cc} = \xi_{c\ell} = \xi_{e\ell} = \xi_{ce} = \xi_{e_1} = \xi_{e_2} = \xi_{e_3} = \xi_{c} = \xi
  \end{equation}
  yields the result.
\end{proof}
\vskip2ex

The bound \eqref{eq:main-result1} will be demonstrated in Section \ref{sec:numerical} to accurately track the angle between the true and computed tangent spaces at all scales.
\textcolor{black}{We experimentally observe that the bound is, in general, either decreasing (for the curvature-free case), increasing (for the noise-free case), or decreasing at small scales and increasing at large scales (for the general case).  We therefore expect to be able to locate a scale at which the bound is minimized.  Based on this observation, the optimal scale, $N^*$, for tangent space recovery may be selected as the $N$ for which \eqref{eq:main-result1} is minimized (an equivalent notion of the optimal scale may be given in terms of the neighborhood radius $r$).}
Note that the constants $\xi$ and $\xi_{\lambda}$ need to be selected to ensure that this bound holds with high probability.  For example, setting $\xi = 2$ and $\xi_{\lambda} = 2.75$ yields probabilities of 0.81, 0.80, and 0.76 when $d=3, 10,$ and $50$, respectively.  We also note that the probability given by \eqref{eq:main-result-prob} is more pessimistic than we expect in practice.

As introduced in Section \ref{sec:overview}, we may interpret $\delta^{-1}$ as the condition number for tangent space recovery.  Noting that the denominator in \eqref{eq:main-result1} is a lower bound on $\delta$, we analyze the condition number via the bounds for $\lambda_d$, $\|U_1^T \Delta U_1\|_F$, and $\|U_2^T \Delta U_2\|_F$.  Using these bounds in the Main Result \eqref{eq:main-result1}, we see that when $\delta^{-1}$ is small, we recover a tight approximation to the true tangent space.  Likewise, when $\delta^{-1}$ becomes large, the angle between the computed and true subspaces becomes large.  The notion of an angle loses meaning as $\delta^{-1}$ tends to infinity, and we are unable to recover an approximating subspace.

Condition 1, requiring that the denominator be bounded away from zero, has an important
geometric interpretation.  As noted above, the conditioning of the subspace
recovery problem improves as $\delta$ becomes large.
Condition 1 imposes that the spectrum corresponding to the linear
subspace ($\lambda_d$) be well separated from the spectra of the
noise and curvature
perturbations encoded by $\|U_1^T \Delta U_1\|_F + \|U_2^T \Delta U_2\|_F$.  In this way, condition 1 quantifies our requirement
that there exists a scale such that the linear subspace is
sufficiently decoupled from the effects of curvature and noise.  When
the spectra are not well separated, the angle between the subspaces
becomes ill defined.  In this case, the approximating subspace
contains an eigenvector corresponding to a direction orthogonal to the
true tangent space.
Condition 2 is a technical requirement of Theorem \ref{thm:Stewart}.  Provided that
condition 1 is satisfied, we observe that a sufficient sampling density will ensure
that Condition 2 is met.  Further, we numerically observe that the Main Result \eqref{eq:main-result1} accurately tracks the subspace recovery error even in the case when condition 2 is violated.  In such a case, the bound may not remain as tight as desired but its behavior at all scales remains consistent with the subspace recovery error tracked in our experiments.

Before numerically demonstrating our main result, we
quantify the separation needed between the linear structure and the
noise and curvature with a geometric uncertainty principle.

\subsection{Geometric Uncertainty Principle for Subspace Recovery} \label{sec:uncertainty}

Condition 1 indeed imposes a geometric requirement for tangent space recovery.
Solving for the range of scales for which condition 1 is satisfied and
requiring the solution \textcolor{black}{to} be real yields the geometric uncertainty principle \eqref{eq:uncertainty} stated in Section \ref{sec:overview}.
We note that this result is derived using $\delta_{\text{informal}}$, defined in equation \eqref{eq:delta_informal}, as the full expression for $\delta$ does not allow for an algebraic solution.

The geometric uncertainty principle \eqref{eq:uncertainty} expresses a
natural requirement for the subspace recovery problem, ensuring that
the perturbation to the tangent space is not too large.  Recall that,
with high probability, the noise orthogonal to the tangent space
concentrates on a sphere with mean curvature $1/(\sigma\sqrt{D-d})$.
We therefore expect to require that the curvature of the manifold be
less than the curvature of this noise-ball.  \textcolor{black}{ To
  compare the curvature of the manifold to that of the noise-ball,
  consider the case where all principal curvatures of the manifold are
  equal, and denote them by $\kappa$.  Then \eqref{eq:uncertainty}
  requires that
  \begin{equation}
    \kappa < \frac{1}{\sigma\sqrt{D-d}} \sqrt{\frac{d+4}{4d\left(\sqrt{d}+\sqrt{D-d}\right)}}.
    \label{eq:uncertainty-tight}
  \end{equation}
}
\textcolor{black}{Noting that, for $d \geq 1$, we have
  \begin{equation*}
    \frac{d+4}{4d\left(\sqrt{d}+\sqrt{D-d}\right)} < 1,
  \end{equation*}
  we see that the uncertainty principle \eqref{eq:uncertainty}
  indeed requires that the mean curvature of the manifold be less than
  that of the perturbing noise-ball.}

\textcolor{black}{Intuitively, we might expect that the uncertainty principle would be of the form 
\begin{equation*}
\text{(curvature)} \times (\text{noise-ball radius}) < 1.
\end{equation*}
However, \eqref{eq:uncertainty} is, in fact, more restrictive than our intuition, as illustrated by \eqref{eq:uncertainty-tight}.
  As only finite-sample
  corrections have been neglected in $\delta_{\text{informal}}$,
  \eqref{eq:uncertainty} is of the correct order.  Interestingly, this
  more restrictive requirement for tangent space recovery is only
  accessible through the careful perturbation analysis presented above
  and an estimate obtained by a more naive analysis would be too lax.
}
  \textcolor{black}{The authors in \cite{Niyogi11} present an algorithm to compute
  the homology of a manifold from a data set of noisy points. The
  authors assume that the data are clean samples from a
  manifold perturbed with $(D-d)$-dimensional Gaussian noise along the
  normal fibers. In the context of our model, this is equivalent to
  removing the first $d$ components of the noise vector. The authors
  prove that the algorithm computes, with high probability, the correct
  homology of $\mathcal{M}$, provided that the noise variance $\sigma^2$ satisfies
\begin{equation}
\frac{1}{\mathscr{R}} < \frac{1}{\sigma \sqrt{D-d}}\;c \frac{\sqrt{9} -  \sqrt{8}}{9\sqrt{8}}\quad \text{with $c<1$}.
    \label{eq:uncertainty3}
\end{equation}
The parameter $1/\mathscr{R}$ is an upper bound on all the principal
curvatures ($\mathscr{R}$ is also known as the {\em reach} \cite{Federer59}).
This condition is almost identical to \eqref{eq:uncertainty-tight}.
The geometric uncertainty principle \eqref{eq:uncertainty}
is clearly not an artifact of our analysis, but is deeply rooted in
the geometric and topological understanding of noisy manifolds.}

\section{Experimental Results \textcolor{black}{I: Validating the Theory}\label{sec:numerical}}

In this section we present an experimental study of the tangent space perturbation results given above.  In particular, we demonstrate that the bound presented in the Main Result (Theorem \ref{thm:mainresult1}) accurately tracks the subspace recovery error at all scales.  As this analytic result requires no decompositions of the data matrix, our analysis provides an efficient means for obtaining the optimal scale for tangent space recovery.
We first present a practical use of the Main Result, demonstrating its accuracy when the intrinsic dimensionality, curvature, and noise level are known.  We then experimentally test the stability of the bound when these parameters are only imprecisely available, as is the case when they must be estimated from the data.  Finally, we demonstrate the accurate estimation of the noise level and local curvature.

\subsection{Subspace Tracking and Recovery} \label{subsec:numerical-results}

We generate a data set sampled from a 3-dimensional manifold embedded
in $\mathbb{R}^{20}$ according to the local model
\eqref{eq:local-model} by uniformly sampling $N=1.25\times 10^6$
points inside a ball of radius $1$ in the tangent plane.  Curvature
and the standard deviation $\sigma$ of the added Gaussian noise will
be specified in each experiment.  We compare our bound with the true
subspace recovery error.  The tangent plane at reference point $x_0$
is computed at each scale $N$ via PCA of the $N$ nearest neighbors of
$x_0$.  The true subspace recovery error $\|P-\widehat{P}\|_F$ is then
computed at each scale.  Note that computing the true error requires
$N$ SVDs.  A ``true bound'' is computed by applying Theorem
\ref{thm:Stewart} after measuring each perturbation norm directly from
the data.  While no SVDs are required, this true bound utilizes
information that is not practically available and represents the best
possible bound that we can hope to achieve.  We will compare the mean
of the true error and mean of the true bound over 10 trials (with
error bars indicating one standard deviation) to the bound given by
our Main Result in Theorem \ref{thm:mainresult1}, holding with
probability greater than 0.8.

For the experiments in this section, the bound \eqref{eq:main-result1} is computed with full knowledge of the necessary parameters.  
\textcolor{black}{In our experience, we observe in practice (results not shown) that the deviation of the empirical eigenvalue $\lambda_d$ from its expectation is insignificant over the entire range of relevant scales and therefore neglect its correction term (derived using a Chernoff bound in Appendix \ref{subsec:linearEigenvalues}) for the experiments.  We further note that knowledge of $d$ provides an exact expression for this expectation as no additional geometric information is encoded by $\lambda_d$.}  As the principle curvatures are known, we compute a tighter bound for $\|U_2^T CL^T U_1\|_F$ using $\mathcal{K}$ in place of $K^{(+)}$.  Doing so only affects the height of the curve; its trend as a function of scale is unchanged.
In practice, the important information is captured by tracking the trend of the true error regardless of whether it provides an upper bound to any random fluctuation of the data.
In fact, the numerical results indicate that an accurate tracking of error is possible even when condition 2 of Theorem \ref{thm:mainresult1} is violated.

\begin{table}[!h]
  \caption{Principal curvatures of the manifold for Figures \ref{fig:results}b and \ref{fig:results}c.}
  \label{tab:kk2}
  \centering
  \begin{tabular}{|c||c|c|c|}
    \hline
    $\kappa_i^{(j)}$ & $i=1$ & $i=2$ & $i=3$ \\
    \hline
    \hline
    $j=4,\dots,6$ & 3.0000 & 1.5000 & 1.5000  \\
    \hline
    $j=7,\dots,20$ & 1.6351 & 0.1351 & 0.1351  \\
    \hline
  \end{tabular}
\end{table}

\begin{figure}[!h]
  \centering
  
  \subfigure[]{\includegraphics[width=70mm]{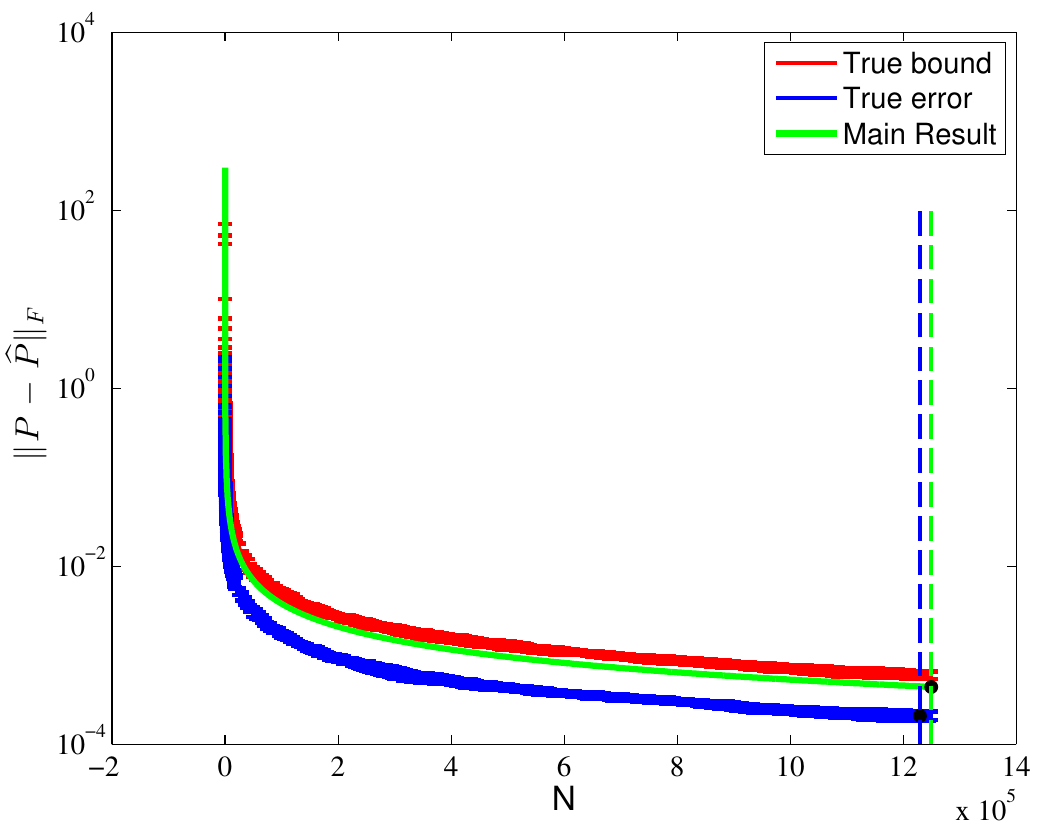}}
  \subfigure[]{\includegraphics[width=70mm]{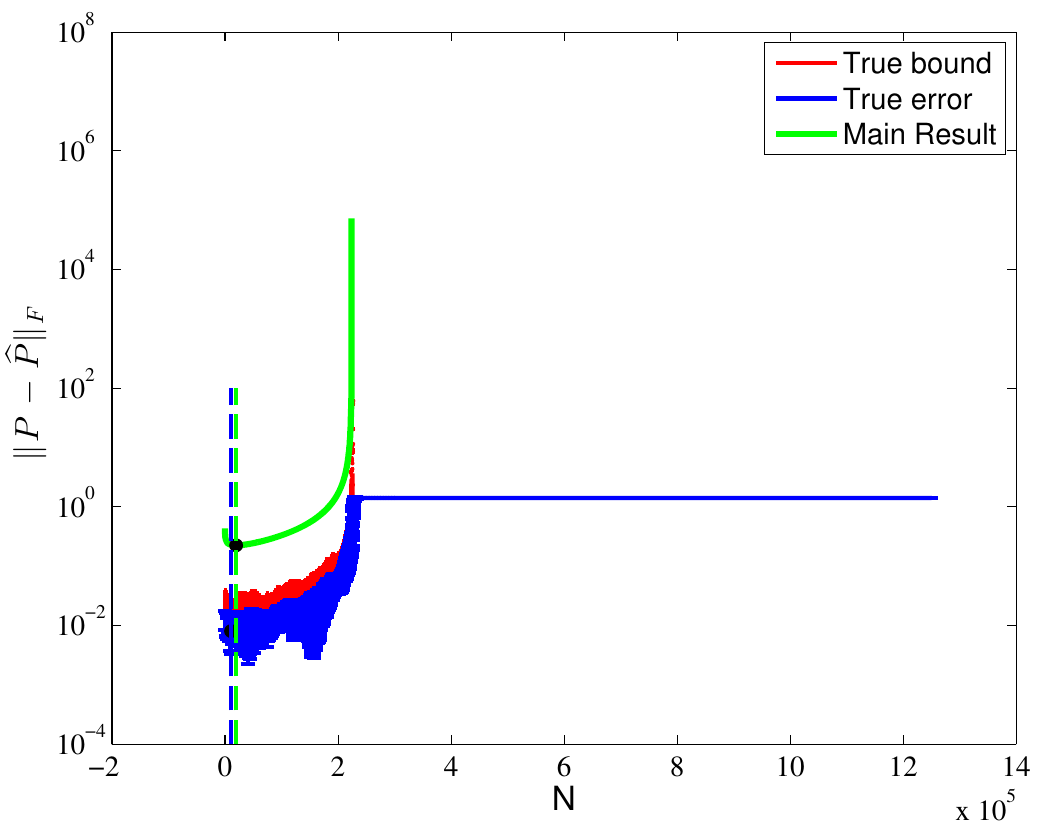}}
  \subfigure[]{\includegraphics[width=70mm]{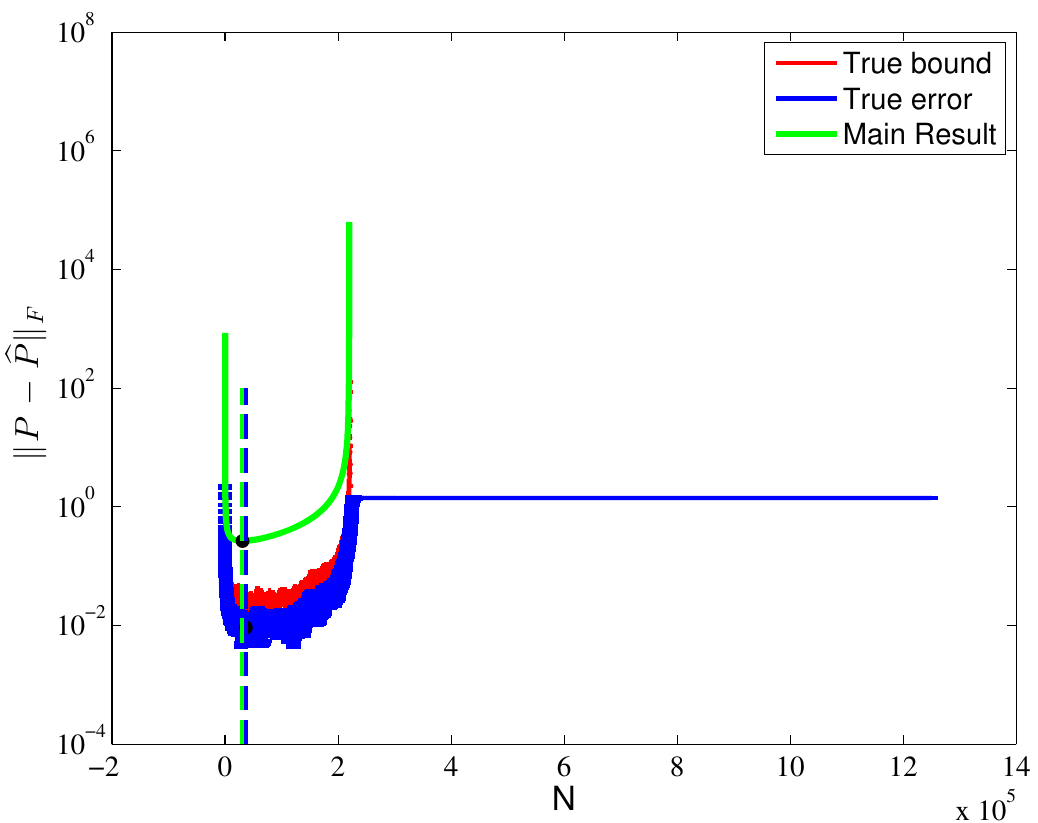}}
  \subfigure[]{\includegraphics[width=70mm]{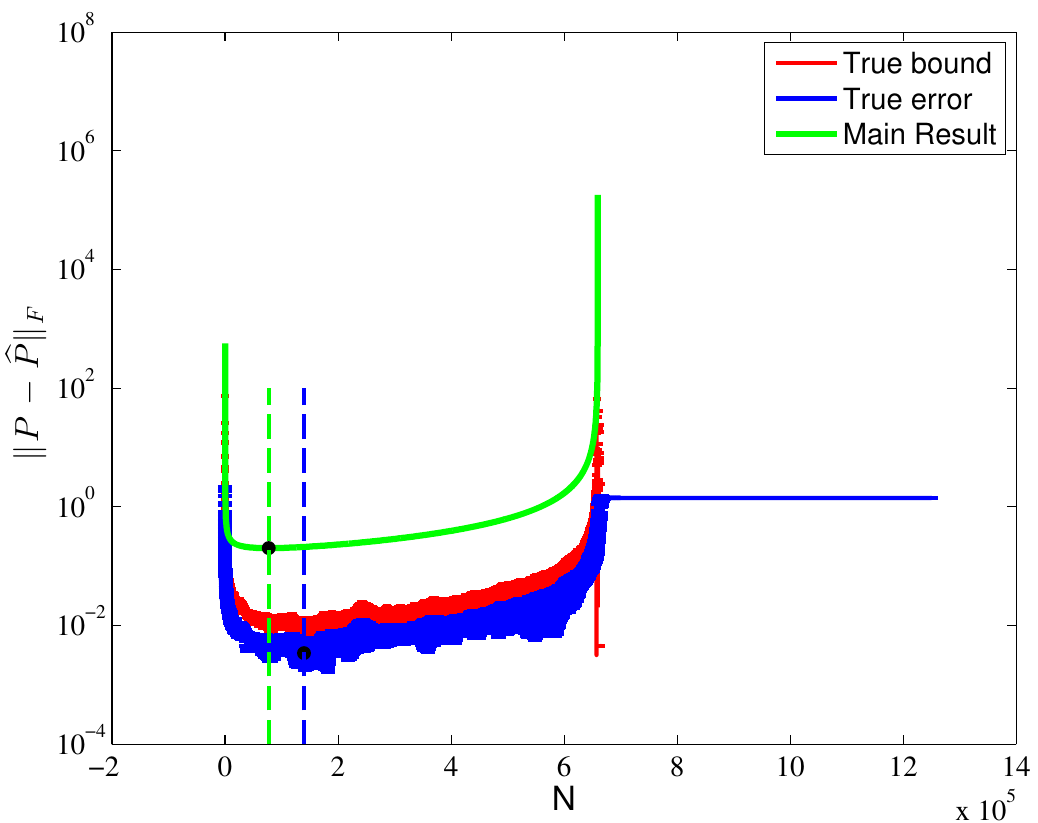}}
  
  \caption{Norm of the perturbation \textcolor{black}{using tangent plane radius $r$}: (a) flat manifold with noise, (b) curved (tube-like) manifold with no noise, (c) curved (tube-like) manifold with noise, (d) curved manifold with noise.  Dashed vertical lines indicate minima of the curves.  Note the logarithmic scale on the Y-axes.  See text for discussion.}
  \label{fig:results}
\end{figure}

The results are displayed in Figure \ref{fig:results}.  Panel (a) shows the noisy $(\sigma=0.01)$ curvature-free (linear subspace) result.  As the only perturbation is due to noise, we expect the error to decay as $1/\sqrt{N}$ as the scale increases.  The curves are shown on a logarithmic scale (for the Y-axis) and decrease monotonically, indicating the expected decay.  Our bound (green) accurately tracks the behavior of the true error (blue) and is nearly identical to the true bound (red).
Panel (b) shows the results for a noise-free manifold with principal curvatures given in Table \ref{tab:kk2} such that $K=12.6025$.  Notice that three of the \textcolor{black}{normal directions} exhibit high curvature while the others are flatter, giving a tube-like structure to the manifold.  In this case, perturbation is due to curvature only and the error increases monotonically (ignoring the slight numerical instability at extremely small scales), as predicted in the discussion of Sections \ref{sec:overview} and \ref{sec:mainresult-angle}.  Eventually, a scale is reached at which there is too much curvature and the bounds blow up to infinity.  This corresponds exactly to where the true error plateaus at its maximum value, indicating that the computed subspace is now orthogonal to the true tangent space.
In this case, condition 1 of Theorem \ref{thm:mainresult1} is violated as there is no longer separation between the linear and curvature spectra, $\delta^{-1}$ becomes large, and our analysis predicts that the computed eigenspace contains a direction orthogonal to the true tangent space.

Figure \ref{fig:results}c shows the results for a noisy ($\sigma = 0.01$) version of the manifold used in panel (b).  Note that the error is large at small scales due to noise and large at large scales due to curvature.  At these scales the bounds are accordingly ill conditioned and track the behavior of the true error when well conditioned.  Figure \ref{fig:results}d shows the results for a manifold again with $K=12.6025$, but with the principal curvatures equal in all \textcolor{black}{normal directions} ($\kappa_i^{(j)} = 1.0189$ for $i=1,\dots,3$ and $j=4,\dots,20$), and noise ($\sigma = 0.01$) is added.  We observe the same general behavior as seen in panel (c), but both the true error and the bounds remain well conditioned at larger scales.  This is explained by the fact that higher curvature is encountered at smaller scales for the manifold corresponding to panel (c) but is not encountered until larger scales in panel (d).
Similar results are shown in Figure \ref{fig:saddle} for a 2-dimensional, noise-free saddle ($\kappa_1^{(3)} = 3, \kappa_2^{(3)} = -3$) embedded in $\R^3$, demonstrating an accurate bound for the case of principle curvatures of mixed signs.

\textcolor{black}{
  The true bound (red) tightly tracks the true error (blue) and is tighter than our bound (green) in all cases except for the curvature-free setting, where a difference on the order of $10^{-3}$ is observed.  This curvature-free bound may be understood by observing that the noise analysis is more precise than that for the curvature (see appendices) and that the height of the bound is controlled by the probability-dependent constants, which have been fixed across all plots for consistency.  In fact, it is possible to choose the probability-dependent constants much larger for the curvature-free setting without violating Condition 2.  Doing so increases the height of the bound (green) to match the height of the ``true bound'' (red) curve (result not shown).  Note that a similar increase for nonzero curvature results in a curve that violates Condition 2.
}

\textcolor{black}{
  In all of the presented experiments, the bound accurately tracks the behavior of the true error.  In fact, the curves are shown to be parallel on a logarithmic scale, indicating that they differ only by multiplicative constants.  These observations further indicate that the triangle inequalities used in bounding the norms $\|U_m^T \Delta U_n\|_F$, $m,n = \{1,2\},$ are reasonably tight.}
As no matrix decompositions are needed to compute our bounds, we have efficiently tracked the tangent space recovery error.  The dashed vertical lines in Figure \ref{fig:results} indicate the locations of the minima of the true error curve (dashed blue) and the Main Result bound (dashed green).  In general, we see agreement of the locations at which the minima occur, indicating the scale that will yield the optimal tangent space approximation.  The minimum of the Main Result bound falls within a range of scales at which the true recovery error is stable.
In particular, we note that when the location of the bound's minimum does not correspond with the minimum of the true error (such as in panel (d)), the discrepancy occurs at a range of scales for which the true error is quite flat.  In fact, in panel (d), the difference between the error at the computed optimal scale and the error at the true optimal scale is on the order of $10^{-2}$.  Thus the angle between the computed and true tangent spaces will be less than half of a degree and the computed tangent space is stable in this range of scales.  For a large data set it is impractical to examine every scale and one would instead most likely use a coarse sampling of scales.  The true optimal scale would almost surely be missed by such a coarse sampling scheme.  Our analysis indicates that despite missing the exact true optimum, we may recover a scale that yields an approximation to within a fraction of a degree of the optimum.

\begin{figure}
  \centering
  \includegraphics[width=70mm]{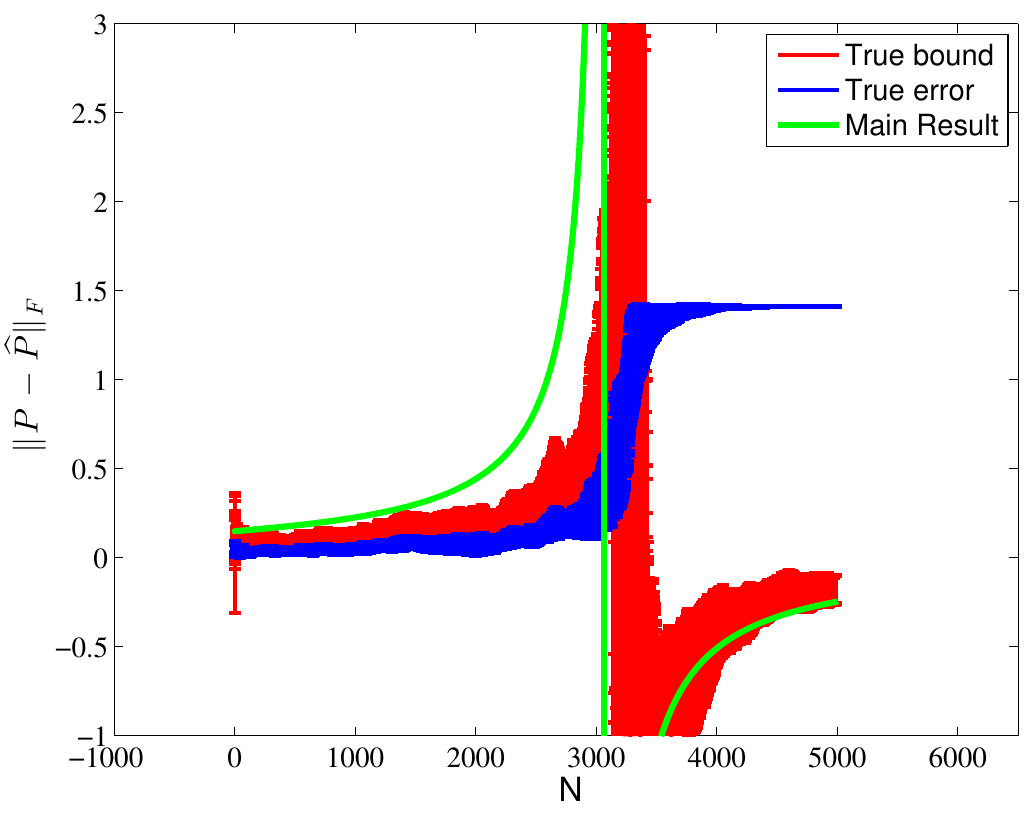}
  \caption{Bounds for a 2-dimensional saddle (noise free) with $\kappa_1^{(3)}=3$ and $\kappa_2^{(3)}=-3$.}
  \label{fig:saddle}
\end{figure}

\subsection{Sensitivity to Error in Parameters}

As is often the case in practice, parameters such as intrinsic dimension, curvature, and noise level are unknown and must be estimated from the data.  It is therefore important to experimentally test the sensitivity of tangent space recovery to errors in parameter estimation.  In the following experiments, we test the sensitivity to each parameter by tracking the optimal scale as one parameter is varied with the others held fixed at their true values.  For consistency across experiments, the optimal scale is reported in terms of neighborhood radius and denoted by $r^*$.  The relationship between neighborhood radius $r$ and number of sample points $N$ is defined by equation \eqref{eq:sampling-density}.  In all experiments, we generate data sets sampled from a 4-dimensional manifold embedded in $\R^{10}$ according to the local model \eqref{eq:local-model}.
\begin{figure}
  \centering
  
  \subfigure[$\kappa_j^{(i)}=2$, $\sigma=0.01$]{\includegraphics[width=.45\textwidth]{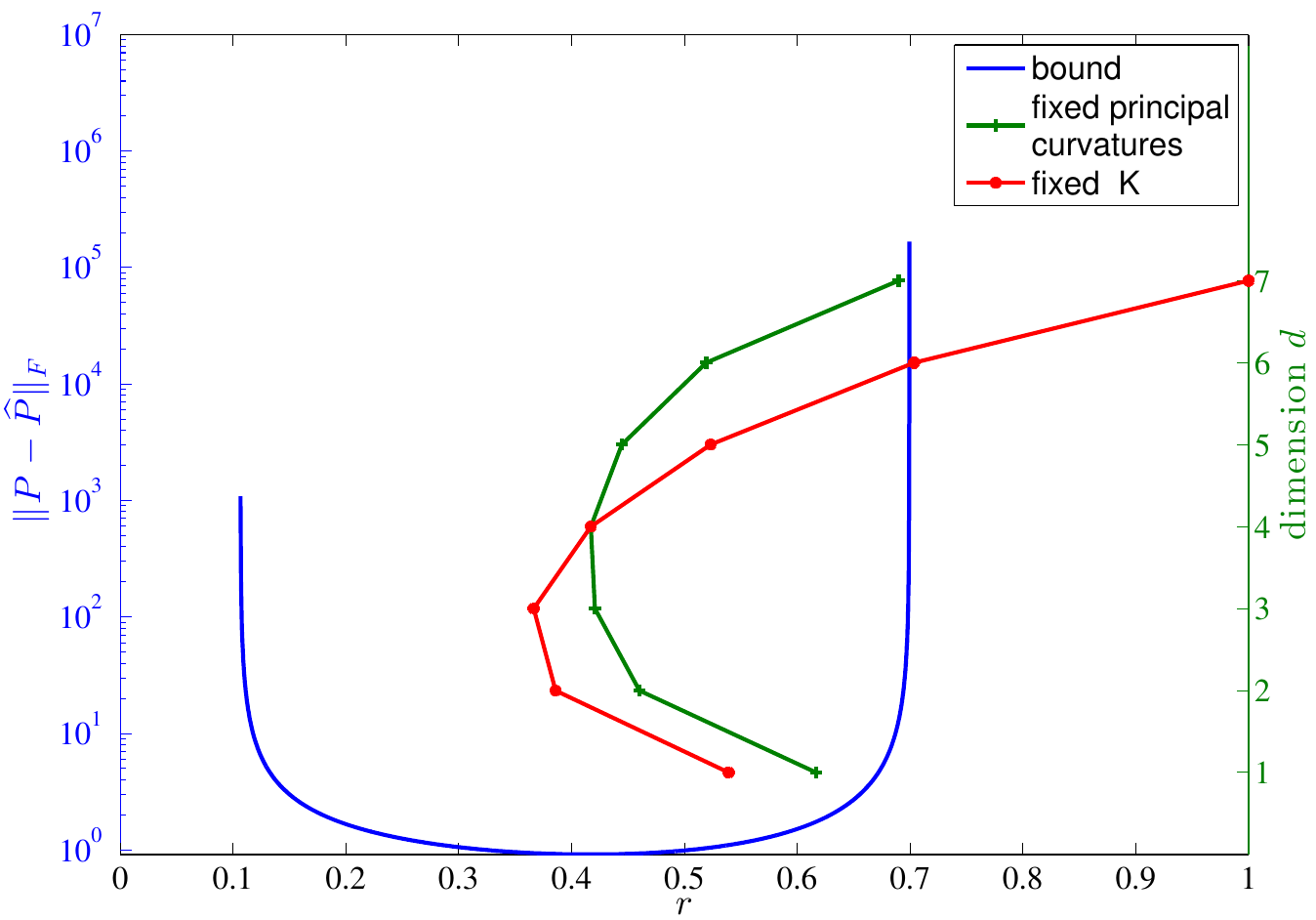}}
  \subfigure[$\kappa_j^{(i)}=3$, $\sigma=0.01$]{\includegraphics[width=.45\textwidth]{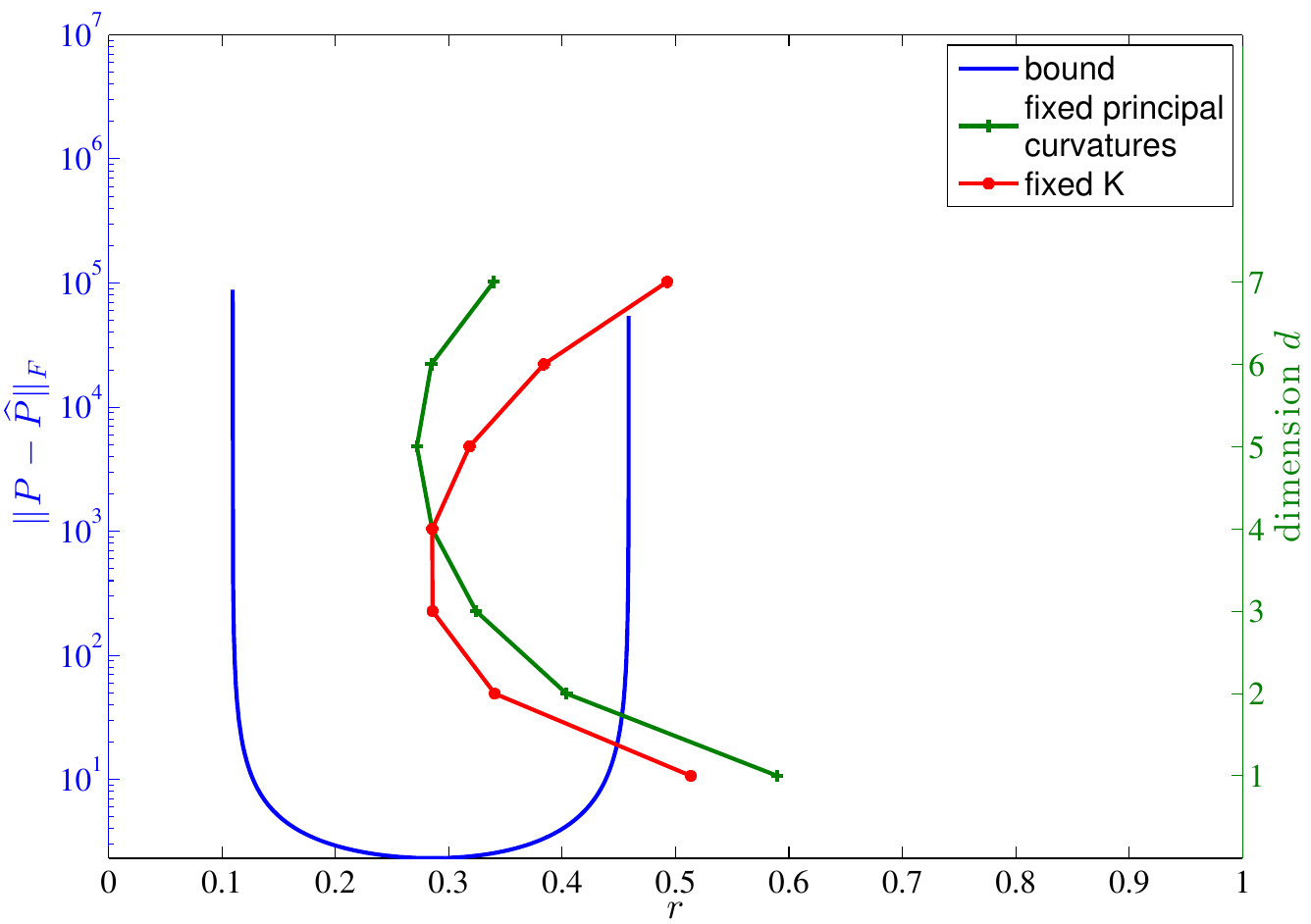}}
  \caption{The optimal radius is shown to be sensitive to error in estimates of $d$.  The Main Result bound (blue) tracks the subspace recovery error (left ordinate).  The green and red curves show the computed optimal radii for varying $d$ (right ordinate) with fixed $\kappa_j^{(i)}$ and fixed $K$, respectively.  See text for details.}
  \label{fig:test_d}
  
  \subfigure[$\kappa_j^{(i)}=1.5$, $\sigma=0.025$]{\includegraphics[width=.45\textwidth]{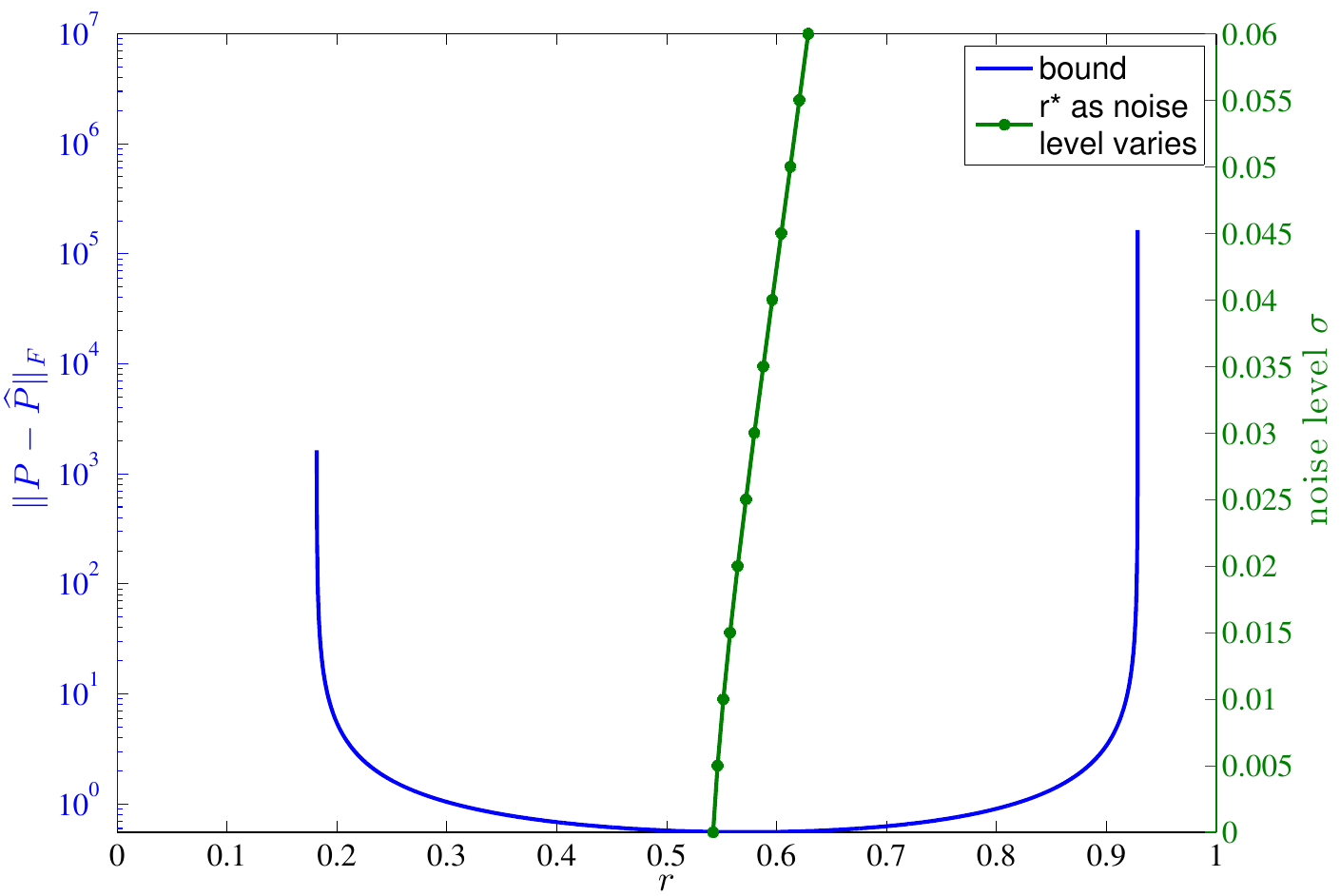}}
  \subfigure[$\kappa_j^{(i)}=2$, $\sigma=0.05$]{\includegraphics[width=.45\textwidth]{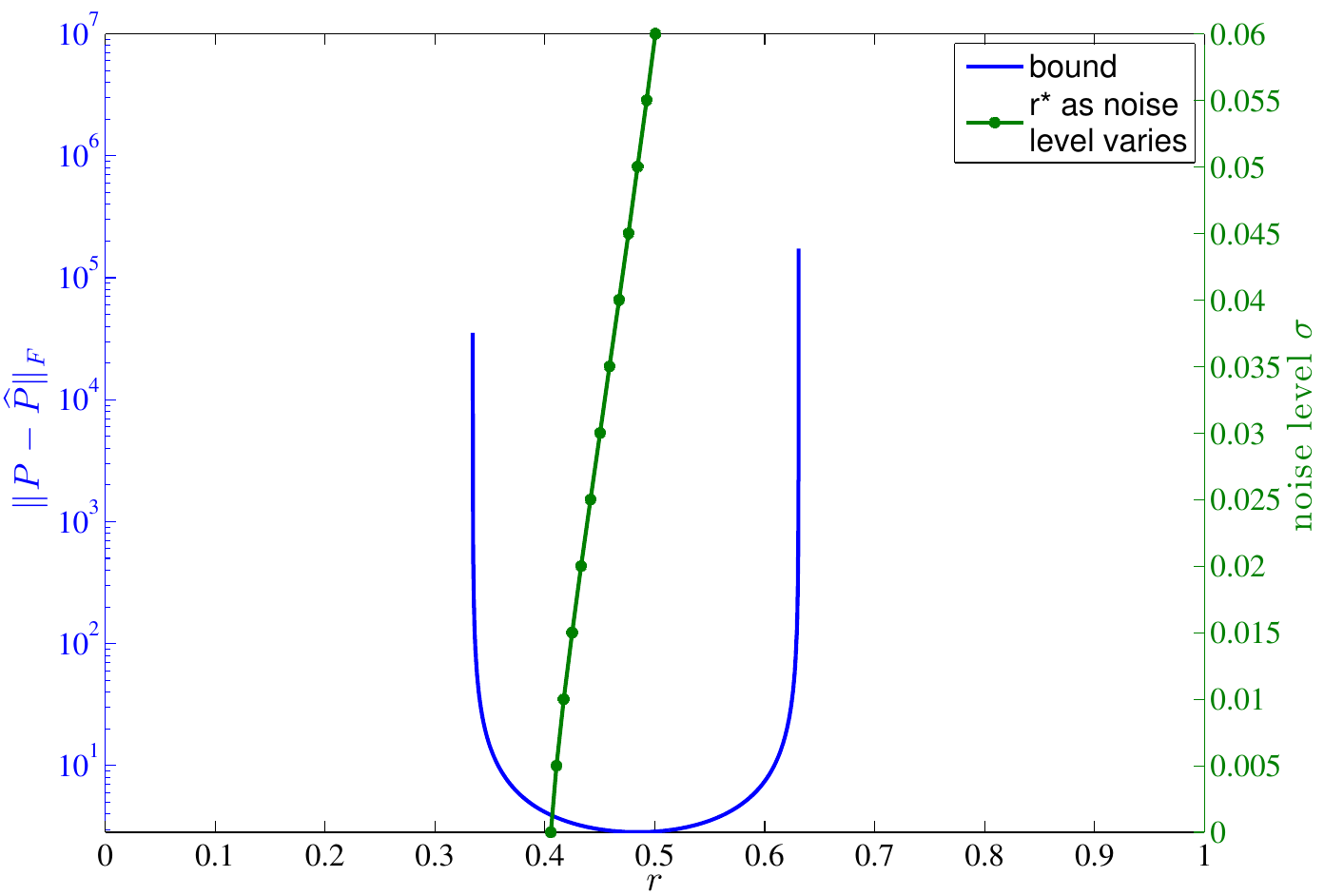}}
  \caption{The sensitivity to error in estimates of $\sigma$ is shown to be mild.  The Main Result bound (blue) tracks the subspace recovery error (left ordinate) and the optimal radius is computed (green) for varying values of $\sigma$ (right ordinate).  See text for details.}
  \label{fig:test_sigma}
  
  \subfigure[$K=12.25$, $\sigma=0.01$]{\includegraphics[width=.45\textwidth]{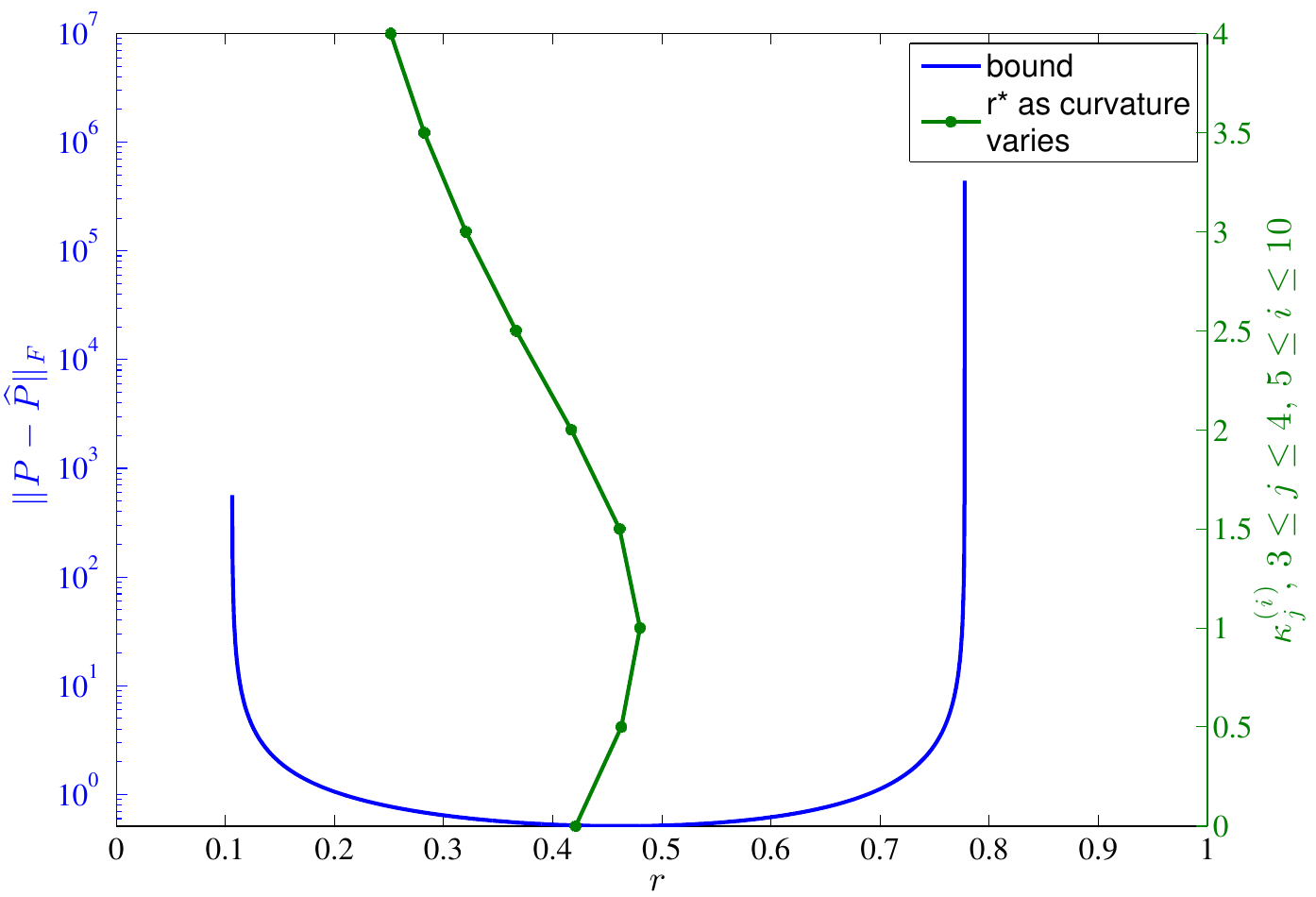}}
  \subfigure[$K=19.6$, $\sigma=0.025$]{\includegraphics[width=.45\textwidth]{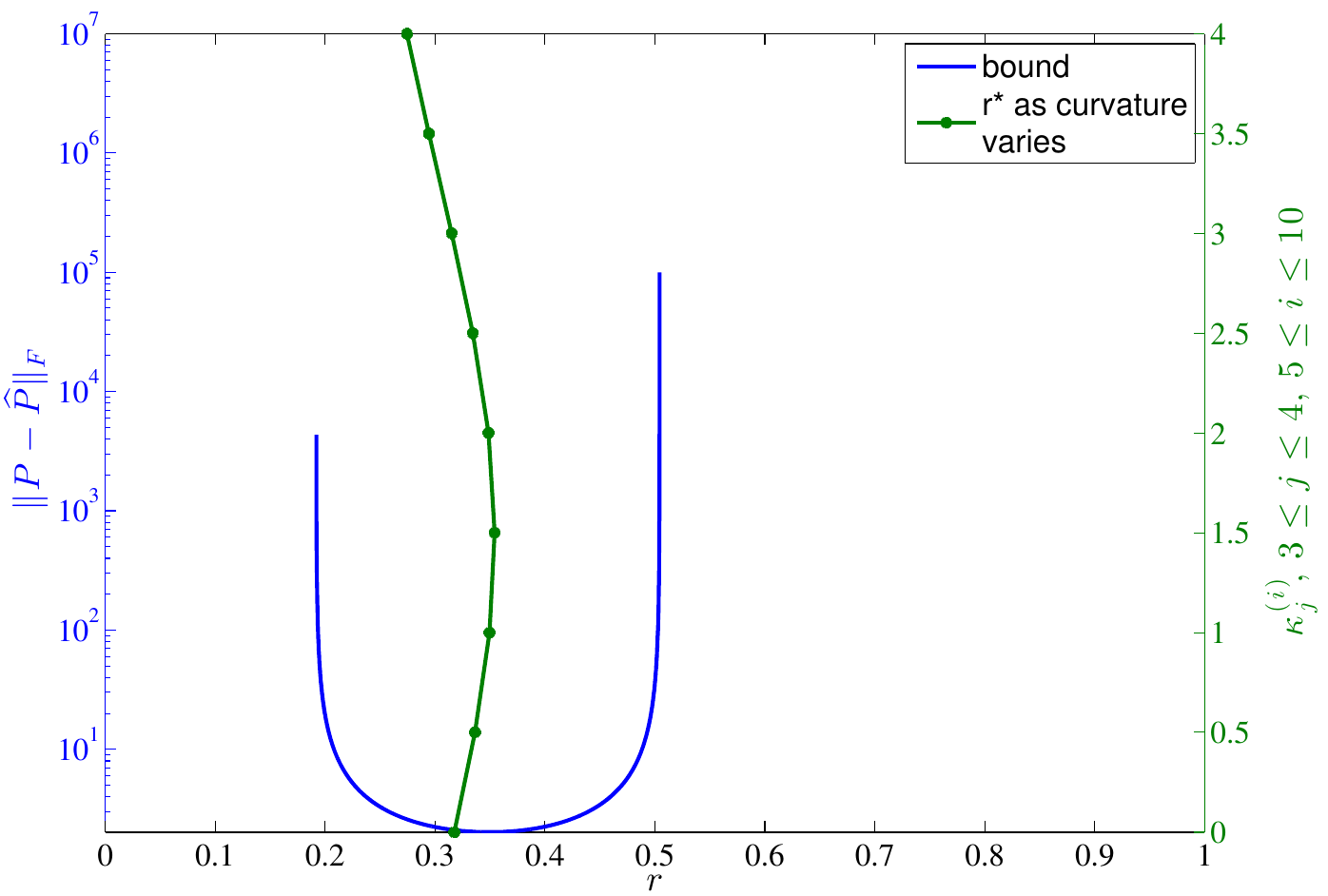}}
  \caption{The sensitivity to error in estimates of curvature is shown to be mild.  The Main Result bound (blue) tracks the subspace recovery error and the optimal radius is computed (green) for varying values of $\kappa_3^{(i)}$ and $\kappa_4^{(i)}$ with $\kappa_1^{(i)}$ and $\kappa_2^{(i)}$ held fixed.  See text for details.}
  \label{fig:test_k}
  
\end{figure}

Figure \ref{fig:test_d} shows that the optimal scale $r^*$ is sensitive to errors in the intrinsic dimension $d$.  A data set is sampled from a noisy, bowl-shaped manifold with equal principal curvatures in all directions.  We set the noise level $\sigma = 0.01$ and the principal curvatures $\kappa_j^{(i)} = 2$ in panel (a) and $\kappa_j^{(i)} = 3$ in panel (b).  Noting that the true intrinsic dimension is $d=4$, we test the sensitivity of $r^*$ as $d$ is varied.
There are three axes in each panel of Figure \ref{fig:test_d}: the neighborhood radius $r$ on the abscissa; the angle $\|P-\widehat{P}\|_F$ on the left ordinate; and the values used for the dimension $d$ on the right ordinate.
Our Main Result bound is shown in blue and tracks the subspace recovery error (angle, on the left ordinate) as a function of neighborhood radius $r$ for the true values of $d$, $\sigma$ and $\kappa_j^{(i)}$.  Holding the noise and curvature fixed, we then compute $r^*$ using incorrect values for $d$ ranging from $d=1$ to $d=7$.  The green and red curves show the computed $r^*$ for each value of $d$ (on the right ordinate) according to the two ways to fix curvature while varying $d$:
(1) hold the value of each $\kappa_j^{(i)}$ fixed, thereby allowing $K$ to change with $d$ (shown in green); or (2) hold $K$ fixed, necessitating that the $\kappa_j^{(i)}$ change with $d$ (shown in red).
The Main Result bound (blue) indicates an optimal radius of $r^* \approx 0.45$ in (a) and $r^* \approx 0.30$ in (b).  However, the $r^*$ computed using inaccurate estimates of $d$ show great variation, ranging between a radius close to the optimum and a radius close to the size of the entire manifold.  These experimental results indicate the importance of properly estimating the intrinsic dimension of the data.

Next, the sensitivity to error in the estimated noise level is shown to be mild in Figure \ref{fig:test_sigma}.  A data set is sampled from a noisy, bowl-shaped manifold with equal principal curvatures in all directions.  The true values for the parameters are: $d=4$, $\kappa_j^{(i)} = 1.5$, and $\sigma = 0.025$ in \ref{fig:test_sigma}a; and $d=4$, $\kappa_j^{(i)} = 2$, and $\sigma = 0.05$ in \ref{fig:test_sigma}b.  Our Main Result bound (blue) tracks the subspace recovery error (left ordinate) as a function of $r$ (abscissa) using the true parameter values and indicates an optimal radius of $r^* \approx 0.55$ and $r^* \approx 0.5$ for (a) and (b), respectively.  Holding the dimension and curvature constant, we then compute $r^*$ using incorrect values for $\sigma$ ranging from $\sigma = 0$ to $\sigma = 0.06$.  The green curve shows the computed $r^*$ for each value of $\sigma$ (on the right ordinate).
In both \ref{fig:test_sigma}a and \ref{fig:test_sigma}b, the computed $r^*$ remain close to the optimum as the noise level varies and are within the range of radii where the recovery is stable (as indicated by the Main Result bound in blue).
This behavior is in agreement with our experimental observations (not shown) indicating that increasing the noise level reduces the range for stable recovery but leaves the minimum of the Main Result bound relatively unaltered.
We note that the range for stable recovery is smaller in (b) as is expected in the higher curvature and noise setting.

Finally, Figure \ref{fig:test_k} shows mild sensitivity to error in estimated curvature.  A data set is sampled from a noisy manifold with two large principal curvatures ($\kappa_1^{(i)}$ and $\kappa_2^{(i)}$) and two small principal curvatures ($\kappa_3^{(i)}$ and $\kappa_4^{(i)}$) in each \textcolor{black}{normal direction} $i$.  This tube-like geometry provides more insight for sensitivity to error in curvature by avoiding the more stable case where all principal curvatures are equal.  The true values for the parameters are: $d=4$, $\sigma = 0.01$, $\kappa_1^{(i)} = \kappa_2^{(i)} = 2$, and $\kappa_3^{(i)} = \kappa_4^{(i)} = 0.5$ for $5 \leq i \leq 10$ in (a); and $d=4$, $\sigma = 0.025$, $\kappa_1^{(i)} = \kappa_2^{(i)} = 3$, and $\kappa_3^{(i)} = \kappa_4^{(i)} = 1$ for $5 \leq i \leq 10$ in (b).
Our Main Result bound (blue) tracks the subspace recovery error (left ordinate) as a function of $r$ (abscissa) using the true parameter values and indicates an optimal radius of $r^* \approx 0.45$ and $r^* \approx 0.35$ for (a) and (b), respectively. 
Holding the dimension, noise level, and large principal curvatures $\kappa_1^{(i)}$ and $\kappa_2^{(i)}$ constant, we then compute the $r^*$ using incorrect values for the smaller principal curvatures $\kappa_3^{(i)}$ and $\kappa_4^{(i)}$, $5 \leq i \leq 10$.  The green curve shows the $r^*$ computed for values of $\kappa_3^{(i)} = \kappa_4^{(i)}$ indicated on the right ordinate, $5 \leq i \leq 10$.  The computed $r^*$ remain within the range of radii where the recovery is stable (as indicated by the Main Result bound in blue) in both (a) and (b).  We observe less variation in the higher curvature and higher noise case shown in \ref{fig:test_k}b.  In this case, the larger principal curvatures anchor the bound, leaving $r^*$ less sensitive to error in the estimated smaller principal curvatures.  \textcolor{black}{As can be expected, experimental results (not shown) indicate that $r^*$ is sensitive to perturbations of these anchoring, large principal curvatures}.

\color{black}
\section{Practical Application \& Experimental Results II} \label{sec:numerical2}

With the purpose of providing perturbation bounds that can be used in
practice, we provide in this section the algorithmic tools that make
it possible to directly apply the theoretical results of Section
\ref{sec:mainresult} to a real dataset.

  The first tool is a ``translation rule'' to compare distances measured
  in the tangent plane $\TP$ and distances in $\R^D$: given a point
  $x$ at a distance $R$ from the origin, we provide an estimate,
  $\hat{r}(R)$, of the distance of the projection of $x$ in $\TP$ to
  the origin $x_0$.
  The second tool is a plug-in method to compute a ``clean
  estimate'', $\hat{x}_0$, of the point $x_0$ on $\mathcal{M}$ that
  serves as the origin of the coordinate system in our analysis.
  Equipped with these two tools, the practitioner can compute the
  perturbation bound as a function of the radius $R$ measured from
  $\hat{x}_0$ in the ambient space $\R^D$.
  
\subsection{Effective Distance in the Tangent Plane $\TP$}
Our Main Result, Theorem \ref{thm:mainresult1}, is presented in terms
of the radius $r$ corresponding to the distance from the origin $x_0$ of a
point's noise-free tangential component.  Because $r$ cannot be
observed in practice, we provide an estimate $\hat{r}(R)$ of $r$ for
any point $x$ at distance $R$ from the origin. In the presentation
that follows, we assume oracle knowledge of the local origin $x_0 \in
\mathcal{M}$; recovery of this origin is addressed in the next
section.  

As previously introduced, a point $x$ in a neighborhood of $x_0$ may
be decomposed as $x = x_0 + \ell + c + e$ and we recognize $r^2 =
\|\ell\|^2$. To explore the relationship between $r$ and $R$, we
compute
\begin{equation}
\label{eq:R^2}
R^2 = \|x-x_0\|^2 = \|x_0 + \ell + c + e - x_0\|^2 = r^2 + \|c\|^2 +
\|e\|^2 + 2\langle \ell + c, e\rangle, 
\end{equation}
where we use that $\langle \ell,c \rangle = 0$.
The terms on the right hand side depend on the realizations of the
sample point $x$ and noise $e$.  To understand their sizes, we compute in
expectation, 
\begin{equation*}
\mathbb{E}[\|c\|^2] = \gamma r^4, \quad \mathbb{E}[\|e\|^2] = \sigma^2
D, \quad \text{and} \quad \mathbb{E}[\langle \ell + c, e\rangle] = 0, 
\end{equation*}
where
\begin{equation}
\gamma = \frac{\sum_{i=d+1}^D 3K_{nn}^{ii} + K_{mn}^{ii}}{2(d+2)(d+4)}.
\label{eq:gamma-expectation}
\end{equation}
Injecting these terms into \eqref{eq:R^2}, we solve for positive and
real $r$ and arrive at an approximation $\hat{r} (R) $ of the (tangent plane) radius
$r$ given the observable (ambient) radius $R$:
\begin{equation}
\hat{r} (R) = \sqrt{\frac{1}{2\gamma}\left(-1+\sqrt{1+4\gamma (R^2-\sigma^2 D)} \right)}.
\label{eq:solve4r}
\end{equation}
\begin{remark}
  Another approach to determine the relationship between $r$ and $R$
  proceeds as follows. We calculate the volume of the $d$-dimensional
  ball $B^d_{x_0}(r)$ given by the pre-image of the points in the ball
  $B^D_{x_0}(R)$ of radius $R$ in $\R^D$, and use this volume to
  derive an effective radius $r$.

In the noise-free case, we can get some insight into this problem
using a result from Gray \cite{Gray74} that gives the volume of a
geodesic ball $B^{\mathcal{M}}_{x_0}(\omega)$ on $\mathcal{M}$ centered at
$x_0$ as a function of the radius $\omega$ measured along the
manifold. We have
\begin{equation*}
V(B^{\mathcal{M}}_{x_0}(\omega)) = \omega^d v_d
\left(1 
- \frac{S(x_0)}{6(n+2)} \omega^2
+ o(\omega^2)
\right),
\end{equation*}
where $S(x_0)$ is the scalar curvature of the manifold at $x_0$ and
$v_d$ is the volume of the $d$-dimensional unit ball. Let $r$ be
the radius of the smallest ball that encloses the pre-image of
$B^{\mathcal{M}}_{x_0}(\omega)$ in the tangent plane $\TP$,
\begin{equation*}
\forall \ell = (\ell_1,\ldots,\ell_d) \in B^d_{x_0}(r), \quad
x = \begin{bmatrix}
\ell_1 & \cdots & \ell_d & f_{d+1}(\ell)& f_D(\ell)
\end{bmatrix}
\in B^{\mathcal{M}}_{x_0}(\omega).
\end{equation*}
In our coordinate system, $B^d_{x_0}(r)$ is the smallest ball that
encloses the projection of $B^{\mathcal{M}}_{x_0}(\omega)$ in the
tangent plane $\TP$, and therefore the volume of $B^d_{x_0}(r)$ is
smaller than the volume of $B^{\mathcal{M}}_{x_0}(\omega)$.  Finally,
we note that $V(B^{\mathcal{M}}_{x_0}(\omega))$ corresponds to the
volume of an ``effective ball'' in $\R^d$ of radius $r_{eff}$,
\begin{equation}
r_{eff} = 
\omega \left(1 
- \frac{S(x_0)}{6(d+2)} \omega^2
+ o(\omega^2)
\right)^{1/d}.
\end{equation}
Because $V(B^d_{x_0}(r)) \leq V(B^{\mathcal{M}}_{x_0}(\omega)) =
V(B^d_{x_0}(r_{eff}))$, we have $r \leq r_{eff}$. We note that if $\omega$ is
small, we can approximate the chordal distance $R$ with
the geodesic distance $\omega$. If we use $r_{eff}$ as an estimate for $r$,
we obtain
\begin{equation}
r \approx
R \left(1 
- \frac{S(x_0)}{6(d+2)} R^2
\right)^{1/d}
\approx 
R \left(1 
- \frac{S(x_0)}{6d(d+2)} R^2
\right).
\end{equation}
The computation of the sectional curvature in our coordinate
system yields the following expression,
\begin{equation}
S(x_0) = 
\sum_{\substack{m,n=1 \\ m \neq n}}^d ~ \sum_{i=d+1}^D \kappa^{(i)}_m\kappa^{(i)}_n
= \sum_{i=d+1}^D K^{ii}_{mn},
\end{equation}
using the notation defined in (\ref{sectional-curvature}). We finally obtain
the following estimate of $r$,
\begin{equation}
R \left(1 
- \frac{\sum_{i=d+1}^D K^{ii}_{mn}}{6d(d+2)} R^2
\right).
\end{equation}
In comparison, the estimate $\hat{r}(R)$ given by (\ref{eq:solve4r})
is approximately equal to 
\begin{equation}
R\left(1 - \frac{\sum_{i=d+1}^D 3K_{nn}^{ii} +
    K_{mn}^{ii}}{4(d+2)(d+4)} R^2 \right),
\end{equation}
for small values of $R$. The two estimates, which capture the effect
of curvature on the relationship between $r$ and $R$, are indeed very
similar, confirming the general form of the approximation given by (\ref{eq:solve4r}).
\end{remark}
\begin{remark}
  In a manner similar to the previous derivation, we can estimate the
  effect of the noise on the volume a ball $B^D_{x_0}(R)$ of noisy
  samples centered around $x_0$. We define the normal space $\NS$ to
  be the orthogonal complement of $\TP$ in $\R^D$. When $D$ is
  sufficiently large, we expect that the Gaussian noise will
  concentrate on the surface of a sphere of radius $\sigma
  \sqrt{D}$. The probability density function of the noisy samples is
  given by the convolution of the uniform distribution on the manifold
  (seen as a distribution in $\R^D$ localized on $\mathcal{M}$) with
  the Gaussian kernel.  If the manifold is flat, the probability
  density function of the noisy samples points $X$ becomes uniform in
  the tube
\begin{equation}
\mathcal{M}_{\sigma} = \left\{x = y + u, y \in \mathcal{M}, u \in \NS,
  \|u\| \leq \sigma \sqrt{D} \right\}.
\end{equation}
Because the noisy points are spread uniformly in
$\mathcal{M}_{\sigma}$, the measure of the set of noisy points in the
ball centered at $x_0$ of radius $R$, $B^D_{x_0}(R)$, is given by
\begin{equation}
\frac{V_D(B^D_{x_0}(R)  \cap \mathcal{M}_\sigma)}{(2 \sigma\sqrt{D})^{D-d}},
\end{equation}
where the factor $1/(2 \sigma\sqrt{D})^{D-d}$ accounts for the uniform
distribution of the noisy points in $\mathcal{M}_{\sigma}$ along the
direction $\NS$. We can approximate the set $B^D_{x_0}(R) \cap
\mathcal{M}_\sigma$ by a smaller enclosed cylinder
\begin{equation*}
B^d_{x_0}(\sqrt{R^2-d \sigma^2 D}) \oplus[-\sigma\sqrt{D}, \sigma\sqrt{D}]^{D-d}
\end{equation*}
as soon as the radius $R$ extends beyond the tube $\mathcal{M}_\sigma$
in the direction $\NS$. This yields the following estimate
for the volume of $V_D(B^D_{x_0}(R) \cap \mathcal{M}_\sigma)$,
\begin{equation}
v_d (R^2-d \sigma^2 D)^{d/2} (2 \sigma \sqrt{D})^{D-d}.
\end{equation}
We conclude that the set of noisy point in $B^D_{x_0}(R)$ has a measure
given by
\begin{equation}
v_d (R^2-d \sigma^2 D)^{d/2} = v_d \left[R\sqrt{1 - \frac{d\sigma^2 D}{R^2}}\;\right]^d
\end{equation}
This measure corresponds to an effective radius $r$ in the tangent
plane given by
\begin{equation}
r = R\sqrt{1 - \frac{d\sigma^2 D}{R^2}}.
\label{flat-noise}
\end{equation}
Because we compute a lower bound on the measure of the set of noisy
points in $B^D_{x_0}(R)$, the effective radius (\ref{flat-noise})
introduces a correction $d \sigma^2 D$ to $R^2$ that is $d$ times
larger than the correction obtained in (\ref{eq:solve4r}),
$\sigma^2 D$. While a more precise computation of $V_D(B^D_{x_0}(R) \cap
\mathcal{M}_\sigma)$ can remove the dependency on the dimension $d$,
this computation confirms that the effect of noise can be accounted
for by a simple subtraction of a term of the form $\sigma^2 D$ from
$R^2$, as indicated in the less formal calculation that leads to
(\ref{eq:solve4r}).

The same line of argument can be followed when the manifold is not
flat. The authors in \cite{Genovese12} prove that when the noise is
uniformly distributed along the normal fibers, then the probability
distribution of the noisy points is still approximately uniform. The
authors in \cite{Genovese12} bound the departure from the uniform
distribution using geometric constants analogous to $\gamma$ or the
scalar curvature $S$. Because the Gaussian will lead to a uniform
distribution in the tube $\mathcal{M}_\sigma$, quantitatively similar
result can be obtained when the noise a Gaussian, as confirmed by the
thorough analysis performed in \cite{Maggioni-MIT,LittleThesisDuke}.
While a more accurate estimate of $r$, which would account for
curvature and noise, could be obtained using this route, our
experiments in the next section indicate that the rough approximation
provided by (\ref{eq:solve4r}) accurately tracks the true $r$.
\end{remark}

\color{black}
Let us examine the quality of the approximation of $r$ given by $\hat{r}(R)$ in \eqref{eq:solve4r} using the two data sets from Section \ref{sec:numerical} that correspond to Figures \ref{fig:results}(c) and \ref{fig:results}(d).  The first data set consists of noisy ($\sigma = 0.01$) points sampled from a 3-dimensional manifold embedded in $\mathbb{R}^{20}$, where the principal curvatures of the manifold are equal in all normal directions (``bowl geometry'').  The second data set consists of noisy ($\sigma = 0.01$) points sampled from a 3-dimensional manifold embedded in $\mathbb{R}^{20}$ where the principal curvatures (given in Table \ref{tab:kk2}) are such that three of the normal directions exhibit significantly greater curvature than the others (``tube geometry'').  Figure \ref{fig:radii_R} shows the radius $r$ measured in the tangent plane (blue) and its estimate $\hat{r}(R)$ (black) given by \eqref{eq:solve4r}.  The radius $R$ measured in the ambient space, from which the estimate $\hat{r}(R)$ is computed, is shown in green for reference.  The bowl geometry is shown in Figure \ref{fig:radii_R}(a) and the tube geometry is shown in Figure \ref{fig:radii_R}(b).  We see that for both geometries, $r$ and $\hat{r}(R)$ are nearly indistinguishable over all relevant scales (the disagreement at the largest scales for the tube geometry occurs well after the computed tangent plane becomes orthogonal to the true tangent plane).  The results shown in this figure indicate that $\hat{r}(R)$ can be used to reliably estimate $r$ from the observed $R$ and, therefore, to compute the Main Result bound \eqref{eq:main-result1} from quantities that are observable in practice.
\begin{figure}
  \centering
  \subfigure[bowl geometry]{\includegraphics[width=70mm]{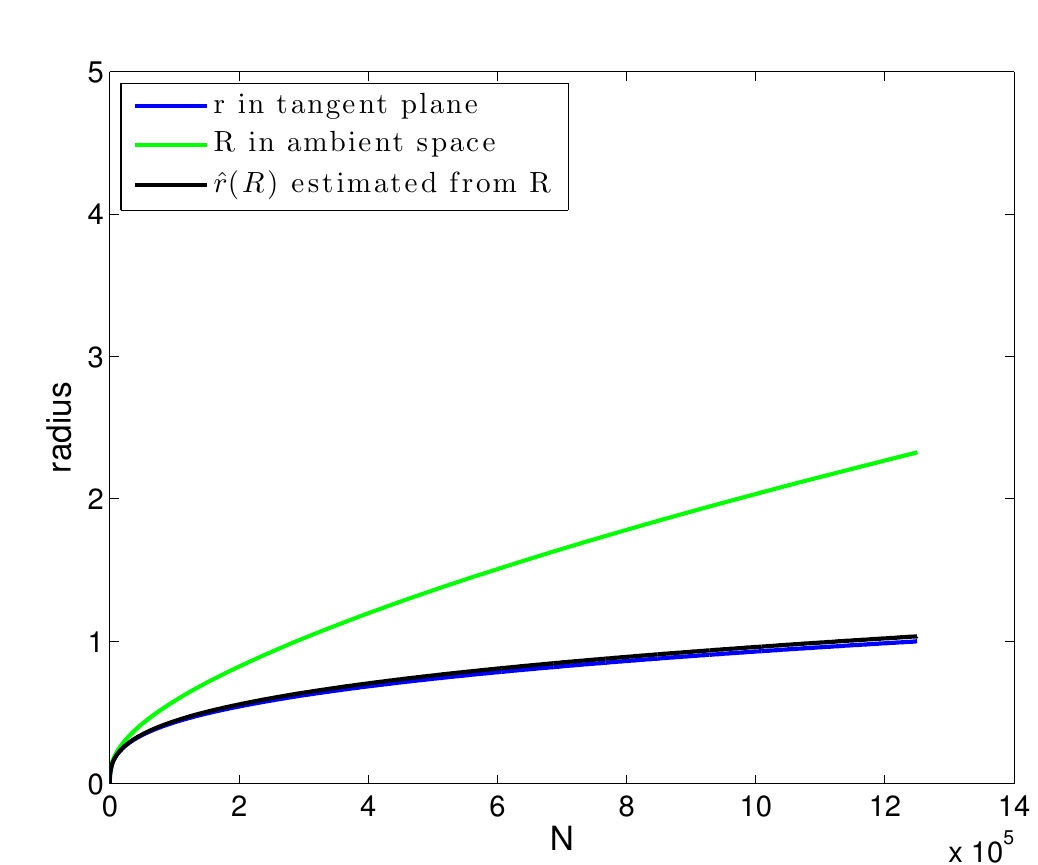}}
  \subfigure[tube geometry]{\includegraphics[width=70mm]{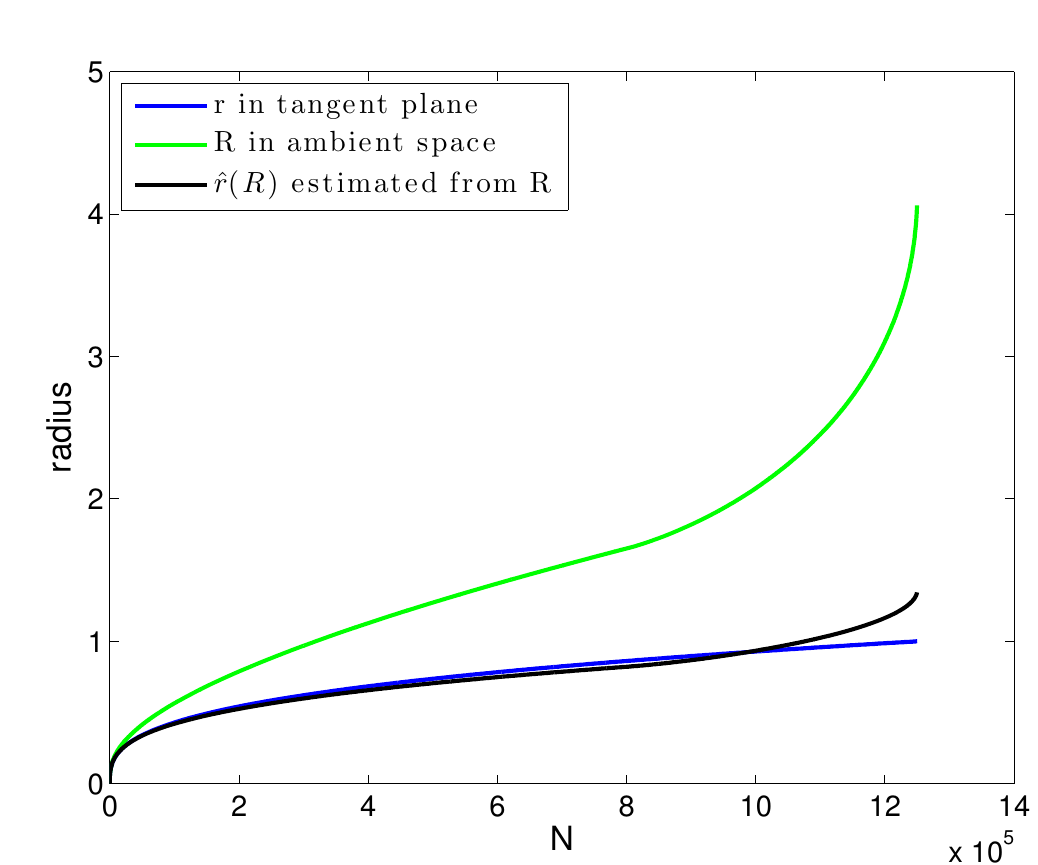}}
  \caption{\textcolor{black}{The tangent plane radius $r$ (blue) and its approximation $\hat{r}(R)$ (black) given by equation \eqref{eq:solve4r} are shown to be indistinguishable over all relevant scales for two different geometries. The ambient radius $R$ from which the estimate $\hat{r}(R)$ is computed is shown in green.  See text for discussion.}}
  \label{fig:radii_R}
\end{figure}

\subsection{Subspace Tracking and Recovery using the Ambient Radius}

We now repeat the experiments of Section \ref{subsec:numerical-results} by recomputing the subspace recovery error and subspace recovery bounds using the radius in the ambient space, $R$, in place of the tangent plane radius, $r$.  We demonstrate that, after converting the ambient radius $R$ to its corresponding tangent plane radius $\hat{r}(R)$, the bound presented in the Main Result Theorem \ref{thm:mainresult1} accurately tracks the subspace recovery error.
The presented results demonstrate that the Main Result may be used for tangent space recovery in the practical setting where only the ambient radius is available.
\begin{figure}
  \centering
  \subfigure[]{\includegraphics[width=70mm]{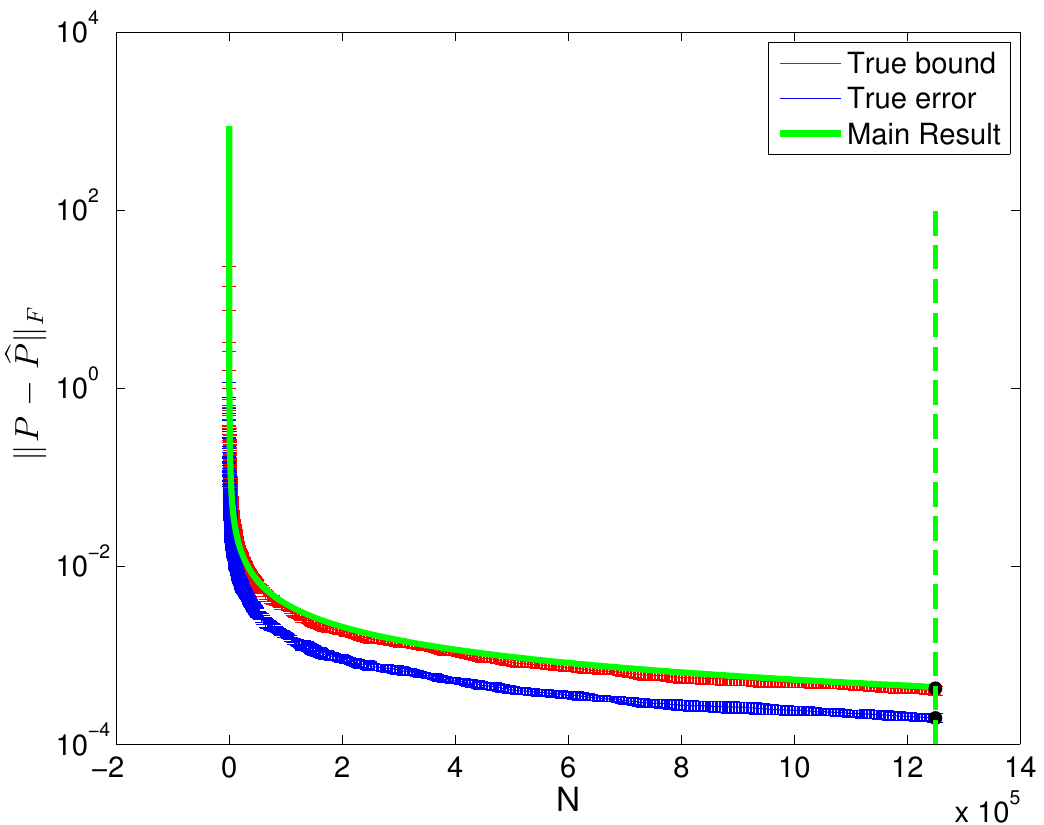}}
  \subfigure[]{\includegraphics[width=70mm]{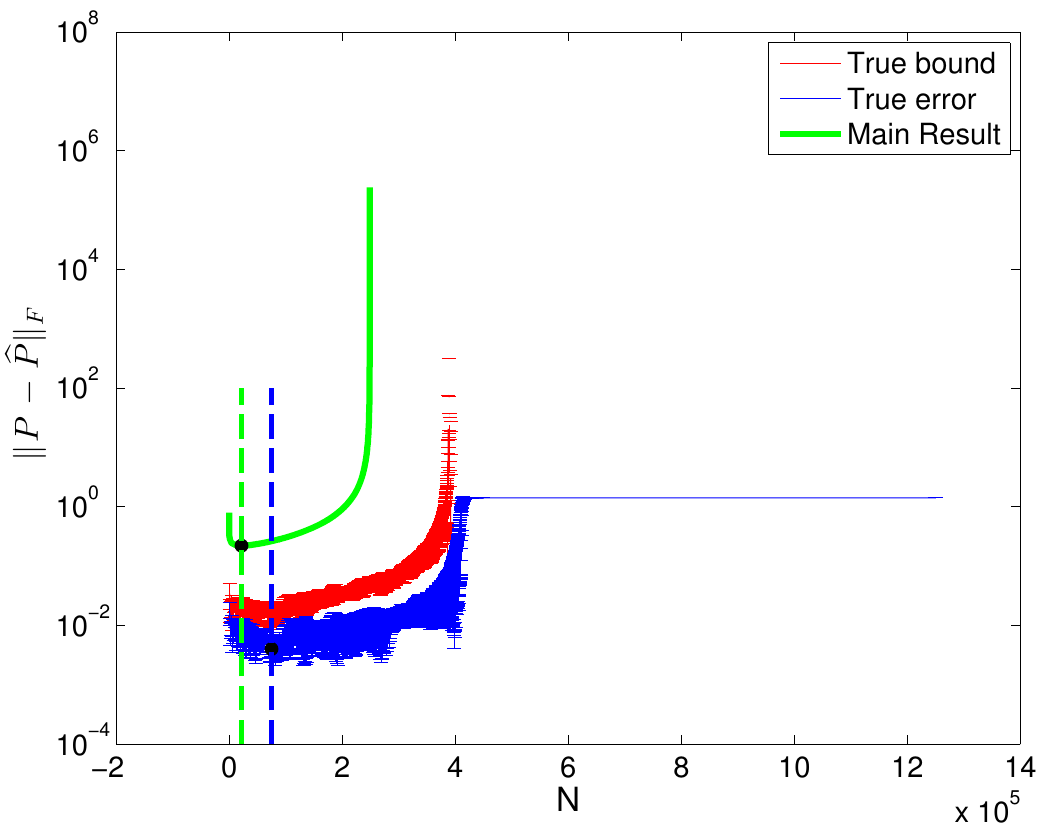}}
  \subfigure[]{\includegraphics[width=70mm]{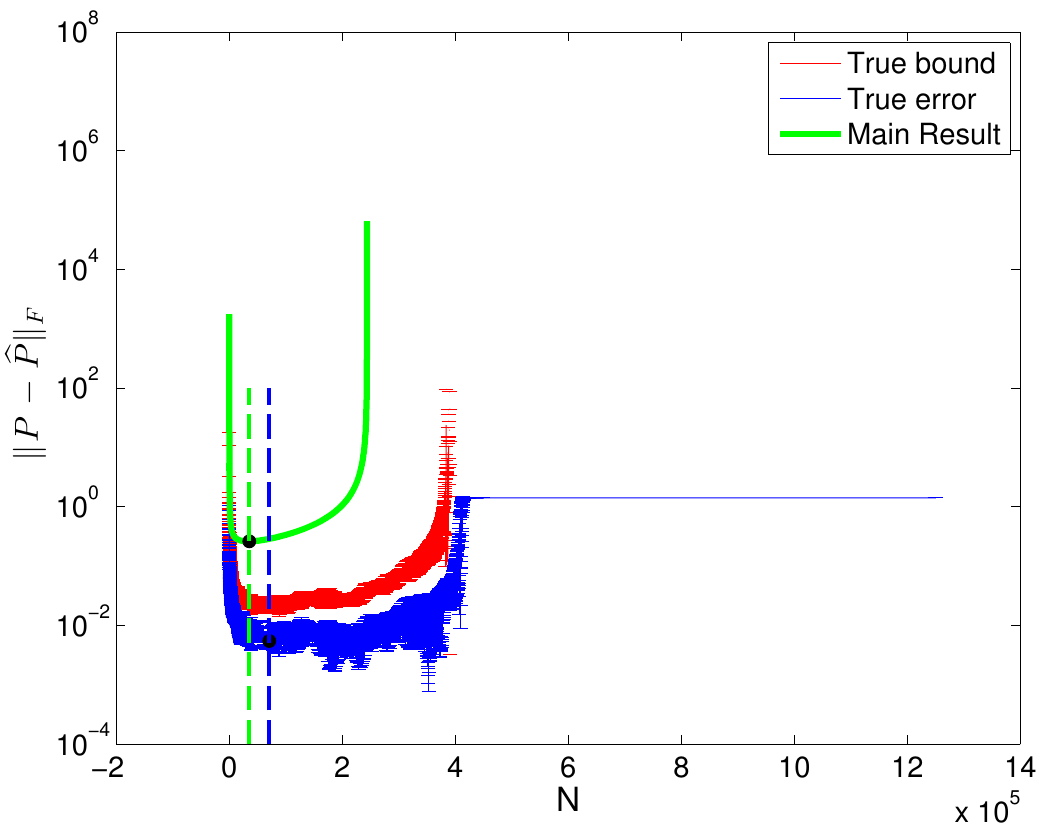}}
  \subfigure[]{\includegraphics[width=70mm]{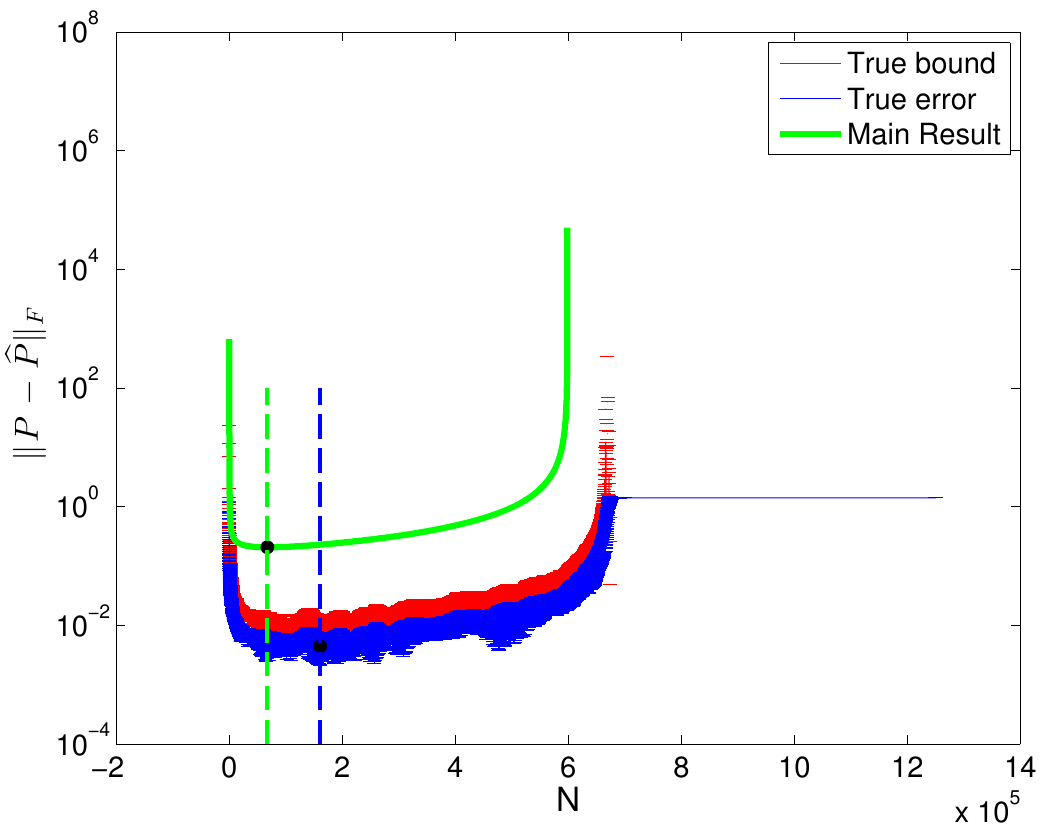}}
  \caption{\textcolor{black}{Norm of the perturbation using the ambient radius $R$: (a) flat manifold with noise, (b) curved (tube-like) manifold with no noise, (c) curved (tube-like) manifold with noise, (d) curved (bowl-like) manifold with noise.  Dashed vertical lines indicate minima of the curves.  Note the logarithmic scale on the Y-axes.  Compare with Figure \ref{fig:results} and see text for discussion.}}
  \label{fig:experiments_R}
\end{figure}

We begin by generating 3-dimensional data sets embedded in $\mathbb{R}^{20}$ according to the specifications given in Section \ref{subsec:numerical-results}.
The curvature is chosen such that $K = 12.6025$ for all manifolds (excluding the linear subspace example).
The tube geometry is implemented by choosing principal curvatures as given in Table \ref{tab:kk2} and the bowl geometry has all principal curvatures set to 1.0189.  All but the noise-free data set have Gaussian noise added with standard deviation $\sigma = 0.01$.

For each experiment, the ambient radius $R$ is measured from the data and used to approximate the corresponding tangent plane radius $\hat{r}(R)$ by equation \eqref{eq:solve4r}, from which we compute our bound \eqref{eq:main-result1}.  We then compare this bound with the true subspace recovery error.  Mimicking the experiments of Section \ref{subsec:numerical-results}, the tangent plane at the local origin $x_0$ is computed at each scale $N$ via PCA of the $N$ nearest neighbors of $x_0$, where the distance from $x_0$ (the radius $R$) is now measured in the ambient space $\mathbb{R}^D$.  The true subspace recovery error $\|P-\widehat{P}\|_F$ is then computed at each scale.  The ``true bound'' is again computed by applying Theorem \ref{thm:Stewart} after measuring each perturbation norm directly from the data.  We recall that this ``true bound'' requires no SVDs and utilizes information that is not practically available to represent the best possible bound that we can hope to achieve.  We will compare the mean of the true error and mean of the true bound over 10 trials (with error bars indicating one standard deviation) to the bound given by our Main Result in Theorem \ref{thm:mainresult1}, holding with probability greater than 0.8.  We note that for these experiments, the local origin $x_0$ is given by oracle information and we will consider its recovery in a separate set of experiments.

The results are shown in Figure \ref{fig:experiments_R} and should be compared with those shown in Figure \ref{fig:results}.  Panel (a) shows the noisy curvature-free (linear subspace) result and we observe that the behaviors of the true error (blue), true bound (red), and main result bound (green) match the behaviors of their counterparts in Figure \ref{fig:results}(a) that were computed using $r$.  In particular, the error in Figure \ref{fig:experiments_R}(a) decays as $1/\sqrt{N}$ (note the logarithmic scale of the Y-axis).  Our bound (green) accurately tracks the true error (blue) and is nearly indistinguishable from the true bound (red).
Panel (b) shows the result for a noise-free manifold with tube geometry such that three of the normal directions exhibit high curvature while the others are flatter.
We see that, much like in Figure \ref{fig:results}(b), the main result bound (green) increases monotonically (ignoring the slight numerical instability at extremely small scales) to match the general behavior of the true error (blue) and true bound (red). 
Panel (c) shows the results for the noisy version of the manifold used in panel (b).  We observe that our bound (green) now exhibits blow up at small scales due to noise and blow up at large scales due to curvature, matching the behavior of the true error.
Finally, panel (d) shows the results for the noisy manifold with bowl geometry where all principal curvatures are equal, and indicates that our bound tracks the error at all scales.
The dashed vertical lines in Figure \ref{fig:experiments_R} indicate the locations of the minima of the true error curves (dashed blue) and the Main Result bounds (dashed green).  We see that the location of the minimum of the Main Result bound is, in general, close to the minimum of the true error curve and falls within a range of scales for which the error is quite flat.  

We observe that the results using $R$ in Figure \ref{fig:experiments_R} are similar to those seen in Figure \ref{fig:results} using $r$, while noting that the true error for the tube geometry remains stable at larger scales in Figure \ref{fig:experiments_R} than the true error in Figure \ref{fig:results}.
To understand this observation, we examine the effect of geometry on the radii $R$ and $r$.
Figure \ref{fig:radii_R_compare} shows the radius $R$ in green for the bowl geometry (left) and for the tube geometry (right).  This radius corresponds to the norm of each point $x$ collected as a ball is grown in the ambient space.
Shown in red is the ambient radius of each point $x$ collected as the tangent plane radius, $r$, is grown.
This curve corresponds to the collection of points according to the norm of their tangential projection.
Figure \ref{fig:radii_R_compare} shows that these radii exhibit different behaviors depending on the geometry of the manifold.
When all principal curvatures are equal (bowl geometry), each normal direction exerts the same amount of influence on a point's norm and curvature does not impact the order in which the points are discovered.
Thus, the radii are shown to be identical for the bowl geometry in Figure \ref{fig:radii_R_compare}(a), with the green curve sitting exactly on top of the red curve.
However, the tube geometry allows for curvature in certain normal directions to exert more influence on the norm than others.  In this situation, growing a ball in the ambient space will necessarily discover points exhibiting greater curvature at the larger scales.  In contrast, the ball grown in the tangent space may discover such points at much smaller scales, as the radius measures the norm of only the tangential components.
Thus, at a given scale $r$ of the ball in the tangent plane, we will have collected points exhibiting different amounts of curvature in the unbalanced tube geometry setting.
This is seen in Figure \ref{fig:radii_R_compare}(b), where the ambient radius of the collected points is much larger at a given scale when growing a ball in the tangent plane (red curve) than when growing a ball in the ambient space (green curve).  These observations imply that the true tangent space recovery error is sensitive to the balance, or lack thereof, of the geometry.
Finally, due to this sensitivity to the strongly anisotropic tube geometry, we notice that the true error indicates orthogonality at scales larger than indicated by our bound.  The minimum of our bound therefore remains within the range of scales that provide stable recovery.
\begin{figure}
  \centering
  \subfigure[bowl geometry]{\includegraphics[width=70mm]{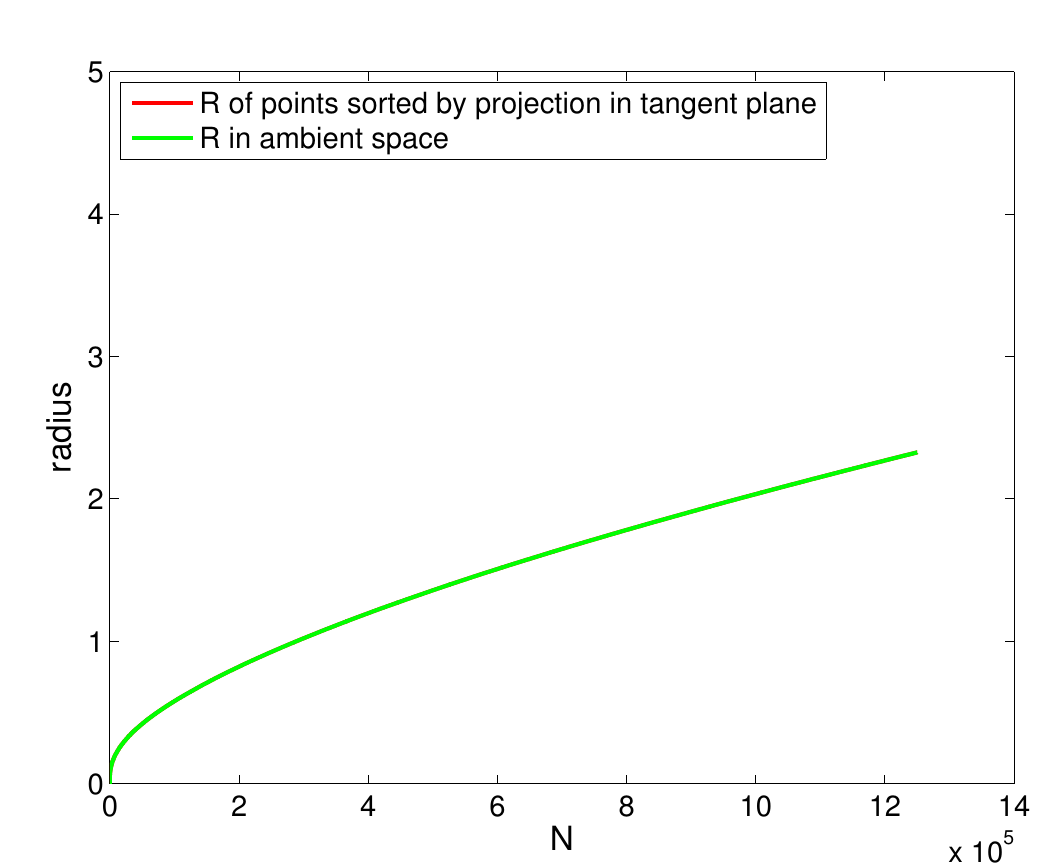}}
  \subfigure[tube geometry]{\includegraphics[width=70mm]{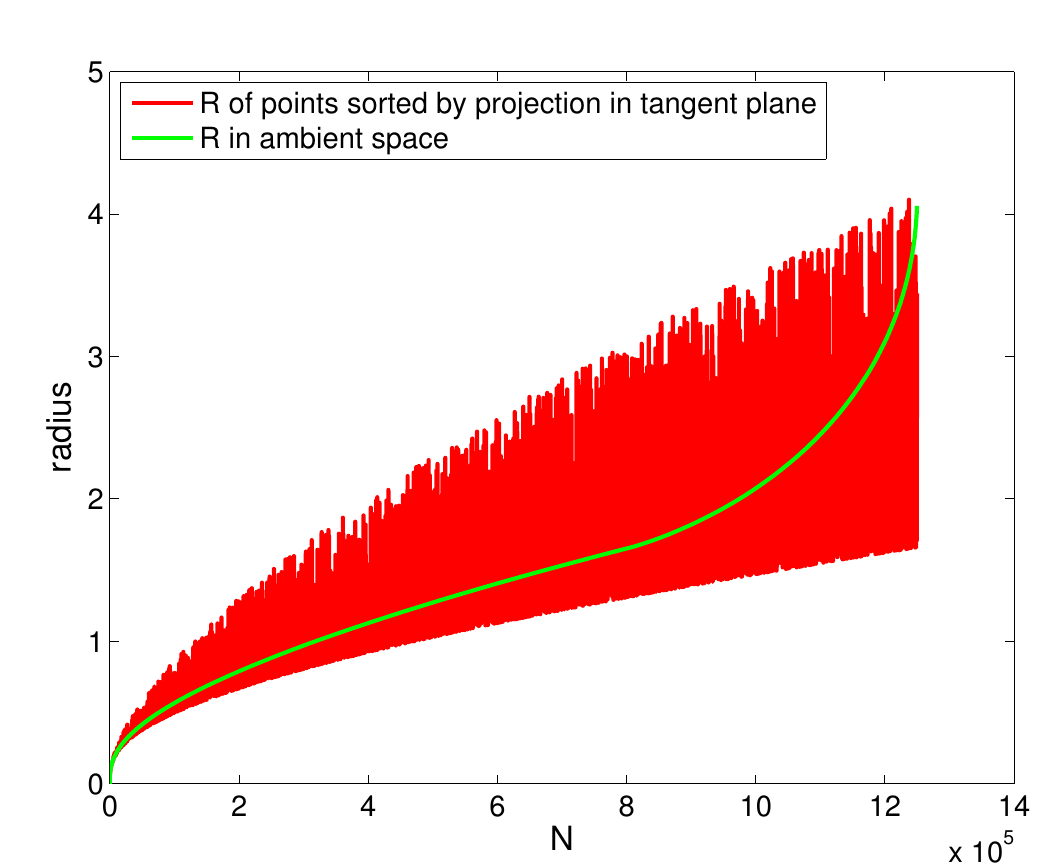}}
  \caption{\textcolor{black}{The radius $R$ is shown sorted according to the order in which points are discovered in the ambient space (green) and according to the order in which points are discovered when projected in the tangent plane (red).  The ordering is identical for the bowl geometry (left), where the green curve is on top of the red curve, because all principal curvatures are equal.  The ordering is very different for the tube geometry (right) where some directions exhibit greater curvature than others.  See text for discussion.}}
  \label{fig:radii_R_compare}
\end{figure}

We conclude this experimental section by noting that equation \eqref{eq:solve4r} provides only an approximation to $r$ and we therefore expect that tighter results are possible.  This avenue should be the subject of future investigation.
Nonetheless, the experimental results presented in this section indicate that our Main Result Theorem \ref{thm:mainresult1} may be used, with suitable modification according to \eqref{eq:solve4r}, to track the tangent space recovery error in the practical setting where only the ambient radius $R$ is available to the user.

Having demonstrated the utility of the Main Result Theorem \ref{thm:mainresult1}, we now turn our attention to the recovery of the unknown local origin.

\subsection{Finding the Local Origin}

\textcolor{black}{As explained previously, here we propose a ``plug-in'' to compute a
  ``clean estimate'', $\hat{x}_0$, of the point $x_0$ on $\mathcal{M}$
  that serves as the origin of the coordinate system in our analysis.
  At first glance, it might seem that a useful perturbation bound
  should assume that the analysis is centered around a noisy point
  and account for this additional source of uncertainty. We advocate that
  this is an unnecessarily pessimistic perspective, and we therefore offer
  an alternate approach: we show that a reliable estimate,
  $\hat{x}_0$, of $x_0$ can be computed from a noisy data set. Using
  $\hat{x}_0$, the reader can directly apply the theoretical bounds
  found in Section \ref{sec:mainresult} to analyze a noisy set of
  points.  The algorithm to compute $\hat{x}_0$ is simple and
  computationally inexpensive (requiring no matrix decompositions),
  and makes use of the geometric information encoded in the trajectory
  of the points' center of mass over several scales.  It is worth
  mentioning that we expect the proposed algorithm to be a universal
  first step for a local, multiscale analysis of the type presented in
  this paper.  Further intuition, details, and experiments are
  presented below.}

  \textcolor{black}{
  It is important to clearly state the role of $x_0$ in the practical
  implementation of this work: given a noisy point $y \in
  \mathbb{R}^D$ selected by the user, $x_0$ is the closest point on
  the ``clean'' manifold $\mathcal{M}$ around which we want to
  estimate the tangent plane $\TP$. Since we assume that
  $\mathcal{M}$ is smooth, there exists a neighborhood about $x_0$
  where the manifold is described by the model \eqref{eq:local-model},
  and $x_0$ is the origin of this model.  Because $x_0$ is the
  projection of $y$ on $\mathcal{M}$, $y-x_0$ is normal to $\TP$, and
  the points $y$ and $x_0$ therefore have the same coordinates in the
  tangential directions.  Rotating the coordinate system to align the
  axes with these directions, our goal is to move from $y$ to $x_0$ in
  the directions normal to the tangent plane.  Figure \ref{fig:algorithm_setup} provides
  an illustration of this framework.  We remark that the rotation of
  the coordinate axes is merely for notational convenience and will be
  discussed below.}
  \begin{figure}
    \centering
    \includegraphics[width=120mm]{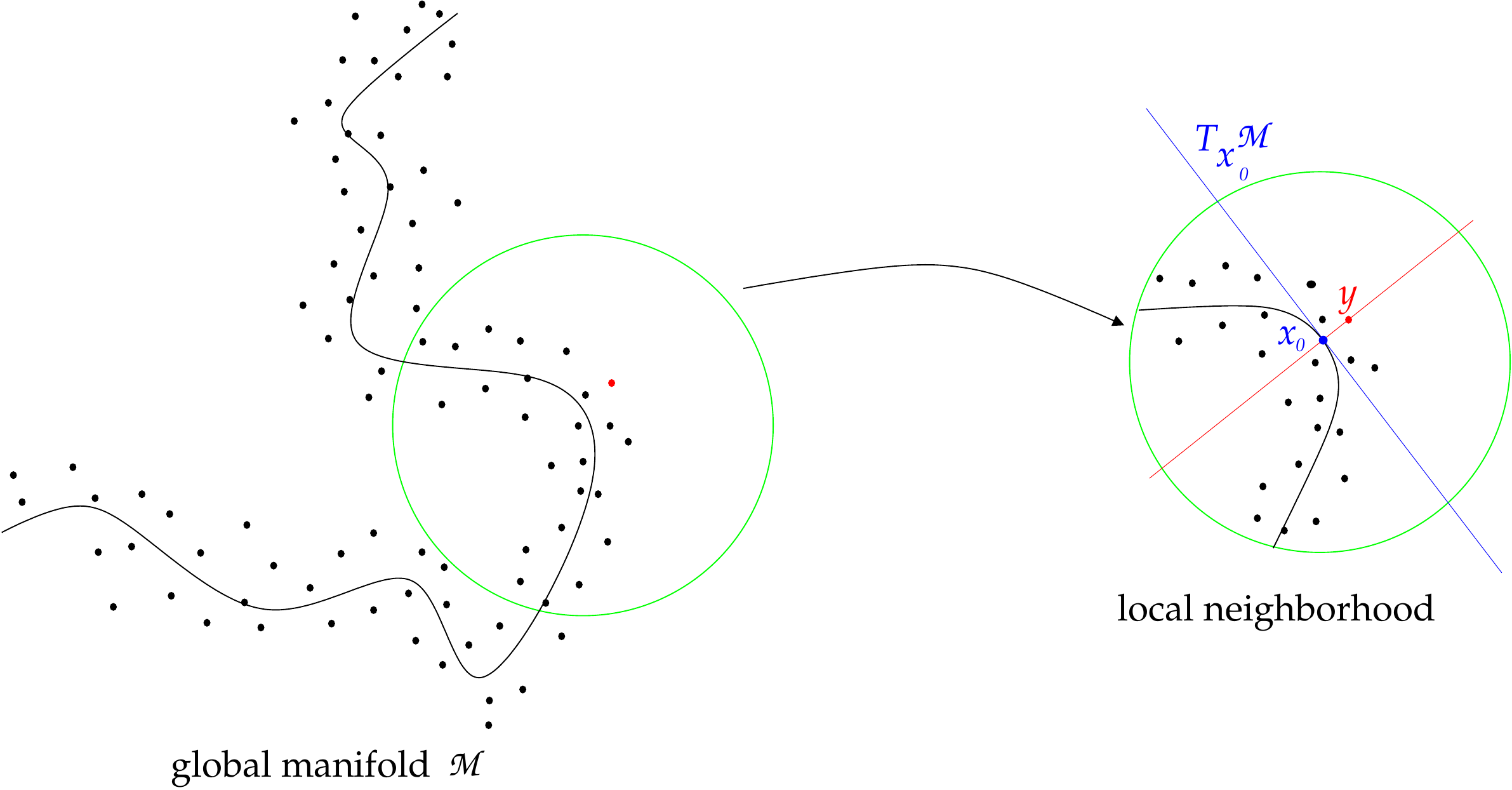}
    \caption{\textcolor{black}{Left: the user selects a noisy point $y$ (in red) close to the smooth 
  manifold $\mathcal{M}$. Right: a local neighborhood is extracted. The point $x_0$ (blue) 
  that is closest to $y$ on the manifold becomes the point at which we compute $\TP$ 
  (blue).  The local coordinate system is defined by the tangent plane $\TP$ (blue) and the 
  normal space $\NS$ (red).  Neither the computation of the perturbation bound nor the estimation 
  of $x_0$ require that the unknown rotation be estimated.}}
    \label{fig:algorithm_setup}
\end{figure}

Our strategy will be to compute the center of mass $\overline{X}$ about $y$ and track the trajectory of each coordinate of $\overline{X}$ as the radius about $y$ grows from small to large scales.  We use the term ``trajectory'' to refer to the coordinate(s) of the sequence of sample means $\overline{X}$ computed over growing radii.  As we will see, these trajectories contain all of the geometric information necessary to recover $x_0$ and is robust to the presence of noise.
The steps for recovering $x_0$ are given below as Algorithm \ref{alg:algorithm1}.

\begin{algorithm}[!h]
  \caption{Recovering the Local Origin $x_0$}
  \label{alg:algorithm1}
  \color{black}
  \noindent \textbf{Input:} Noisy points $X = \{x^{(i)}\}_{i=1}^N$, reference point $y \in \mathbb{R}^D$, scale intervals $\{\mathcal{I}^{(m)}\}_{m=1}^M$ such that $\mathcal{I}^{(m)} = [R^{(m,1)},R^{(m,2)}]$ with $R^{(m,1)} < R^{(m,2)} ~\forall m$ and $\begin{cases} R^{(p,1)} \leq R^{(q,1)} \\ R^{(p,2)} \leq R^{(q,2)} \end{cases}$ for $p > q$\\[6pt]
  \textbf{Outpt:} Estimate $\widehat{x}_{0}$ of the local origin $x_0 \in \mathcal{M}$\\[4pt]

  \noindent \textbf{FOR} each scale interval $\mathcal{I}^{(m)}, m = 1,\dots,M$:

  \begin{enumerate}

  \item Center a ball at $y$ and compute $\overline{X} = \frac{1}{N_B}\sum_{i=1}^{N_B} x^{(i)}$, the mean of the points inside the ball $B^D_y(R_y)$, $\forall R_y \in \mathcal{I}^{(m)}$, where $N_B = |B^D_y(R_y)|$.

  \item \textbf{FOR} each coordinate $j = 1,\dots,D$:

    \begin{enumerate}

    \item Fit (in the least squares sense) the trajectory of $\overline{X}_j$ to the
      model
      \begin{equation*}
        q_y(R_y) = \beta_2 R_y^2 + \beta_0,
      \end{equation*}
      over the range of scales in $\mathcal{I}^{(m)}$,
      explicitly requiring a zero first derivative at $R_y=0$
      
    \item Set $\widehat{x}_{0_j}^{(m)} = \beta_0$

    \end{enumerate}
    
  \item[] \textbf{END}
    
  \item Set $y = \widehat{x}_{0}^{(m)}$
    
  \end{enumerate}
  
  \noindent \textbf{END}\\

  \noindent Return $\widehat{x}_{0} = \widehat{x}_{0}^{(M)}$ as the estimate of the local origin

\end{algorithm}

The trajectory of each coordinate of $\overline{X}$ will be noisy and
unreliable at very small scales.  However, due to the averaging
process, the uncertainty from both the noise and the random sampling
is overcome at large scales.  Thus, the large scale trajectory reaches
a ``steady state behavior'' that is essentially free of uncertainty
and encodes information about the initial state, i.e., the noise-free
trajectory very close to $x_0$.

\textcolor{black}{
\begin{remark}
  The algorithm described in this section can be understood in the
  context of the estimation of the center location of the probability
  density associated with the clean point $x_0$ on $\mathcal
  M$. Indeed, our model assumes that a noisy point $x$ is obtained by
  perturbing a clean point $\ell +c$ by adding Gaussian noise. The
  probability distribution of the noisy points is thus given by the
  convolution of a $D$-dimensional Gaussian density $G_{\sigma}$ with
  the $D$-dimensional probability density $f_{\mathcal{M}}$ of the
  clean points, which is supported solely on $\mathcal{M}$,
\begin{equation*}
f_{\mathcal{M}} \ast G_{\sigma} (x).
\end{equation*}
The goal of the algorithm is to recover the clean point $x_0$ around
which $f_{\mathcal{M}}$ is localized, given some noisy realizations
$X$ sampled from the probability density $f_{\mathcal{M}} \ast
G_{\sigma} (x)$. This can be achieved by removing the effect of the
blurring (a process known as deconvolution \cite{Genovese12b}) caused
by $G_{\sigma}$, and computing a ``sharp'' estimate of the density
$f_{\mathcal{M}}$ around $x_0$. There exists an expansive literature
on such deblurring problems. A very successful approach consists in
reversing the heat equation associated with the blurring at increasing
scales (e.g., \cite{Perona90,Osher90}). This idea is the essence of
our algorithm. By tracking the centroid of a ball of decreasing size,
we can extrapolate this trajectory in the limit where the ball has radius
zero, and effectively compute $\lim_{\sigma\rightarrow 0}
f_{\mathcal{M}} \ast G_{\sigma} (x_0)$. This process yields the initial
origin with very little uncertainty even for very high noise and high
curvature. 
\end{remark}
}

Let us now provide further intuition for why such a procedure will work.  
The reader is asked to be mindful that we will only provide an overview of the results and that a rigorous development of the convergence properties is left for future work.

\subsubsection{Center of Mass Trajectory}
Following the local model \eqref{eq:local-model} with origin $x_0$, a neighboring point $x$ has coordinates of the form
\begin{equation}
  x = x_0 + \ell + c + e = 
  \begin{bmatrix}
    x_{0_1} \\ \\ \\ \vdots \\ \\ \\ x_{0_D}
  \end{bmatrix}
  +
  \begin{bmatrix}
    \ell_1 \\ \vdots \\ \ell_d \\ \\ 0 \\ \vdots \\ 0
  \end{bmatrix}
  +
  \begin{bmatrix}
    0 \\ \vdots \\ 0 \\ \\ c_{d+1} \\ \vdots  \\ c_D
  \end{bmatrix}
  +
  \begin{bmatrix}
    e_1 \\ \\ \\ \vdots \\ \\ \\ e_D
  \end{bmatrix},
\end{equation}
and
coordinate $j$ of $\overline{X}$ is of the form
\begin{equation}
  \overline{X}_j = \frac{1}{N}\sum_{i=1}^N x^{(i)}_j = 
  \begin{cases}
    x_{0_j} + \frac{1}{N}\sum_{i=1}^N \ell^{(i)}_j + \frac{1}{N}\sum_{i=1}^N e^{(i)}_j ,& j \leq d\\
    x_{0_j} + \frac{1}{N}\sum_{i=1}^N c^{(i)}_j + \frac{1}{N}\sum_{i=1}^N e^{(i)}_j ,& j > d.
  \end{cases}
\end{equation}
The sample mean $\overline{X}_j = \frac{1}{N}\sum_{i=1}^N x^{(i)}_j$ approximates $\mathbb{E}[x_j]$ with the uncertainty decaying as $1/\sqrt{N}$.  More precisely, by the Hoeffding inequality and the Gaussian tail bound, we have the following intervals for coordinate $j$ at scale $N$:
\begin{equation}
  \label{eq:confidence-intervals}
  \overline{X}_j \in
  \begin{cases}
    \Big[x_{0_j} -\frac{\sqrt{2}\xi}{\sqrt{N}}(r+\sigma), \; x_{0_j} +\frac{\sqrt{2}\xi}{\sqrt{N}}(r+\sigma)\Big],& j \leq d\\[8pt]
    \Big[\left(x_{0_j} + \frac{K_j r^2}{2(d+2)}\right) - \frac{\sqrt{2}\xi}{\sqrt{N}}\left(\frac{\sqrt{d}K_j^{(+)} r^2}{2} + \sigma \right), \; \left(x_{0_j} + \frac{K_j r^2}{2(d+2)}\right) + \frac{\sqrt{2}\xi}{\sqrt{N}}\left(\frac{\sqrt{d}K_j^{(+)} r^2}{2} + \sigma \right)\Big], & j > d
  \end{cases}
\end{equation}
with probability greater than $1-6e^{-\xi^2}$.
We see that while the coordinates exhibit variation about their means at small scales, they reach their average (steady state) behavior with high probability at large scales.
Thus, the large scale coordinate trajectories are controlled with little uncertainty for densely sampled data.

\begin{remark}
  More generally, we expect to
  observe data in a rotated coordinate system.  Consider the setting in $\mathbb{R}^2$ for a 1-dimensional manifold after applying a rotation to our conventional coordinate system.  The observed coordinates will be of the form
  \begin{align}
    \begin{pmatrix}
      \overline{X}_1\\
      \overline{X}_2
    \end{pmatrix}
    &=
    \begin{pmatrix}
      x_{0_1}\\
      x_{0_2}
    \end{pmatrix}
    +
    \begin{pmatrix}
      Q_{11} & Q_{12}\\
      Q_{21} & Q_{22}
    \end{pmatrix}
    \begin{pmatrix}
      \frac{1}{N}\sum_{i=1}^N \ell^{(i)}_1 + \frac{1}{N}\sum_{i=1}^N e^{(i)}_1 \\
      \frac{1}{N}\sum_{i=1}^N c^{(i)}_2 + \frac{1}{N}\sum_{i=1}^N e^{(i)}_2
    \end{pmatrix}
    \nonumber \\[6pt]
    &=
    \begin{pmatrix}
      x_{0_1} + Q_{11}\mathbb{E}[\ell] + Q_{12}\mathbb{E}[c] + (Q_{11}+Q_{12})\mathbb{E}[e] \pm \mathcal{O}\left(\frac{1}{\sqrt{N}}\right) \\
      x_{0_2} + Q_{21}\mathbb{E}[\ell] + Q_{22}\mathbb{E}[c] + (Q_{21}+Q_{22})\mathbb{E}[e] \pm \mathcal{O}\left(\frac{1}{\sqrt{N}}\right)
    \end{pmatrix}
    \quad \text{(w.h.p.)} \\[6pt]
    &=
    \begin{pmatrix}
      x_{0_1} + Q_{12}\frac{K_2r^2}{2(d+2)}  \pm \mathcal{O}\left(\frac{1}{\sqrt{N}}\right) \\
      x_{0_2} + Q_{22}\frac{K_2r^2}{2(d+2)} \pm \mathcal{O}\left(\frac{1}{\sqrt{N}}\right),
    \end{pmatrix}
    \nonumber
  \end{align}
  where $Q = \left(\begin{smallmatrix} Q_{11} & Q_{12} \\ Q_{21} & Q_{22} \end{smallmatrix}\right)$ is a unitary matrix.
  We see that all coordinates have the same form as coordinates $j > d$ in \eqref{eq:confidence-intervals} with the slight modification introduced by the $Q_{mn}$ terms.  
  In general, we will observe a linear combination of all coordinates with weights $Q_{mn}<1$.
  In particular, all coordinates will be of leading order $r^2$ with a constant intercept (the origin), and all other orders of $r$ appear as finite sample uncertainty terms that decay as $1/\sqrt{N}$.
Because an arbitrary rotation leaves all coordinates with the same form as that of the coordinates $j>d$ in equation \eqref{eq:confidence-intervals}, we proceed with the analysis of these coordinates without loss of generality.
\end{remark}

Continuing from \eqref{eq:confidence-intervals}, we use a calculation similar to \eqref{eq:R^2} to show that $r^2 \approx R^2$ for small $r$.  We therefore expect the coordinate trajectories ($j >d$) to be quadratic functions of the observed radius $R$ with intercept $x_0$ and zero first derivative at $R=0$.  Fitting the observed trajectory of each coordinate to the model
\begin{equation}
  q(R) = \beta_2 R^2 + \beta_0
  \label{eq:smallscale-model-1}
\end{equation}
provides the least squares estimate of the origin $\widehat{x}_{0_j} = \beta_0$.  By explicitly enforcing the zero first derivative condition, the model \eqref{eq:smallscale-model-1} should be robust to uncertainty in the observed data at small scales.  Moreover, initial estimates of $\widehat{x}_{0_j}$ may be obtained from the stable, large scale trajectories to anchor the small scale estimate using \eqref{eq:smallscale-model-1}.  We now examine this procedure in more detail.

\subsubsection{Estimating $x_0$}

Equation \eqref{eq:confidence-intervals} confirms our intuition that the large scale trajectory, smoothed from the averaging process, is very stable due to the $1/\sqrt{N}$ decay of the finite sample uncertainty terms.  We must now cast this trajectory in terms of an observable radius $R_y$, the radius of a ball in $\mathbb{R}^D$ centered about the point $y$ in the presence of noise.  Recall that the intent of the following discussion is to informally derive the correct order for all terms, with complete rigor reserved for future work.

Consider first the effect of measuring the radius about a point other than $x_0$.  Let $\tau$ denote the offset vector,
\begin{equation*}
  \tau = y - x_0 = \Big[0 ~ \cdots ~ 0 ~ ~ \tau_{d+1} \cdots ~ \tau_D\Big]^T,
\end{equation*}
since $y$ and $x_0$ only differ in their normal components.
A calculation similar to \eqref{eq:R^2} shows
\begin{equation}
  \label{eq:Ry}
  R_y^2 = \|x-y\|^2 = \|x_0 + \ell + c - \tau - x_0\|^2 =  \|\ell\|^2 + \|c-\tau\|^2 \leq r^2 + \gamma r^4 + \|\tau\|^2.
\end{equation}
Solving for $r^2$ and injecting into \eqref{eq:confidence-intervals} yields the following expression for $\overline{X}_j$ (coordinates $j > d$) at scale $N$, holding with high probability:
\begin{align}
  \label{eq:largescale-nonoise}
  &\overline{X}_j \in \Bigg[ a_1 R_y + (x_{0_j} - a_0) - \frac{\sqrt{2}\xi}{\sqrt{N}}a_{-1}\; , \; a_1 R_y + (x_{0_j} - a_0) + \frac{\sqrt{2}\xi}{\sqrt{N}}a_{-1} \Bigg], \\[4pt]
  &\text{for} \; R_y > \sqrt{\| \tau\|^2 + \frac{1}{4\gamma}},
\end{align}
where
\begin{equation}
  \label{eq:a1a0}
  a_1 = \frac{K_j}{2(d+2)\sqrt{\gamma}}, \quad
  a_0 = \frac{K_j}{4(d+2)\gamma} + \mathcal{O}\left(\frac{1}{R_y}\right),
\end{equation}
with uncertainty term
\begin{equation}
  a_{-1} = \frac{1}{2}\sqrt{\frac{d}{\gamma}}K_j^{(+)}R_y - \frac{1}{4}\frac{\sqrt{d}}{\gamma}K_j^{(+)} + \mathcal{O}\left(\frac{1}{R_y}\right).
\end{equation}
Next, reasoning in a manner similar to \eqref{eq:R^2}, we introduce the following correction for the presence of the noise, enlarging the radius $R_y$ in \eqref{eq:largescale-nonoise} by $\sigma\sqrt{D}$ :
\begin{equation*}
  R_y \leftarrow R_y + \sigma\sqrt{D}.
\end{equation*}
We finally rewrite \eqref{eq:largescale-nonoise} to yield the expression for $X_j$ (coordinates $j > d$) at scale $N$, holding with high probability:
\begin{align}
  \label{eq:largescale}
  &\overline{X}_j \in \Bigg[ a_1 R_y + \left(x_{0_j} - a_0 + a_1 \sigma\sqrt{D}\right) - \frac{\sqrt{2}\xi}{\sqrt{N}}a_{-1}\; , \; a_1 R_y + \left(x_{0_j} - a_0 + a_1 \sigma\sqrt{D}\right) + \frac{\sqrt{2}\xi}{\sqrt{N}}a_{-1} \Bigg], \\
  &\text{for} \; R_y > \sqrt{\| \tau\|^2 + \frac{1}{4\gamma}}, \nonumber
\end{align}
with $a_1$ and $a_0$ as given by \eqref{eq:a1a0} and uncertainty term $a_{-1}$ now taking the form
\begin{equation}
  a_{-1} = \frac{1}{2}\sqrt{\frac{d}{\gamma}}K_j^{(+)}R_y + \frac{1}{2}\sqrt{\frac{d}{\gamma}}K_j^{(+)}\sigma\sqrt{D} - \frac{1}{4}\frac{\sqrt{d}}{\gamma}K_j^{(+)} + \sigma +\mathcal{O}\left(\frac{1}{R_y}\right).
\end{equation}

While \eqref{eq:largescale} indicates that the large scale trajectory is linear in $R_y$, all of the necessary geometric information for Algorithm \ref{alg:algorithm1} to succeed is encoded in this trajectory.  To see this, we proceed momentarily by taking a path slightly different from that of the proposed algorithm.  Consider fitting the large scale trajectory to the model
\begin{equation}
  \label{eq:linear-model}
  q_y^{linear}(R_y) = \alpha_1 R_y + \alpha_0
\end{equation}
over the range of scale $\mathcal{I}^{(m)} = [R_y^{(m,1)}, R_y^{(m,2)}]$.  Let $R_y^{(m,1)}$ correspond to $N^{(m,1)}$ points, $R_y^{(m,2)}$ correspond to $N^{(m,2)}$ points, $N^{(m,1)} < N^{(m,2)}$, and let $\widetilde{N^{(m)}} = (N^{(m,1)} + N^{(m,2)})/2$.
The least squares fit of the large scale $\overline{X}_j$ trajectory to \eqref{eq:linear-model} yields the coefficients
\begin{equation}
  \alpha_1 \in \Bigg[a_1 - \frac{\xi}{\sqrt{\widetilde{N^{(m)}}}}\sqrt{\frac{d}{2\gamma}}K_j^{(+)} \; , \; a_1 + \frac{\xi}{\sqrt{\widetilde{N^{(m)}}}}\sqrt{\frac{d}{2\gamma}}K_j^{(+)}\Bigg]
\end{equation}
\begin{multline}
  \alpha_0 \in \Bigg[\left(x_{0_j} - a_0 + a_1\sigma\sqrt{D}\right) - \frac{\sqrt{2}\xi}{\sqrt{\widetilde{N^{(m)}}}}\left(\sigma + \frac{1}{2}\sqrt{\frac{d}{\gamma}}K_j^{(+)}\left(\sigma\sqrt{D} - \frac{1}{2\sqrt{\gamma}}\right)\right) \; ,
  \\
  \; \left(x_{0_j} - a_0 + a_1\sigma\sqrt{D}\right) + \frac{\sqrt{2}\xi}{\sqrt{\widetilde{N^{(m)}}}}\left(\sigma + \frac{1}{2}\sqrt{\frac{d}{\gamma}}K_j^{(+)}\left(\sigma\sqrt{D} - \frac{1}{2\sqrt{\gamma}}\right)\right) \Bigg].
\end{multline}
Noting that the (rescaled) mean curvature $K_j$ is encoded in $a_1$ and $a_0$, we may recover a large scale estimate of $x_{0_j}$ by setting
\begin{equation}
  \label{eq:largescale-estimate1}
  \widehat{x}_{0_j}^{(m)} = \alpha_0 - \alpha_1\sigma\sqrt{D} + \alpha_1^2\frac{(d+2)}{K_j}.
\end{equation}
Then we have
\begin{equation}
  \left|x_{0_j} - \widehat{x}_{0_j}^{(m)}\right|
  ~\leq~ 
  \frac{\sqrt{2}\xi}{\sqrt{\widetilde{N^{(m)}}}} \left(\sigma + \frac{\sqrt{d}}{2\gamma}K_j^{(+)} \right) + \frac{\xi^2}{\widetilde{N^{(m)}}}\frac{d(d+2)}{2\gamma}\frac{(K_j^{(+)})^2}{|K_j|},
  \label{eq:error1}
\end{equation}
with high probability.

\begin{remark}
  The $k$th point of the $\overline{X}_j$ trajectory has an uncertainty term that decays as $1/\sqrt{k}$.  For convenience, we have replaced the point-by-point uncertainty decay with a constant factor of $1/\sqrt{\widetilde{N^{(m)}}}$ above, where $\widetilde{N^{(m)}}$ is the number of points in the middle of the current interval.  A more rigorous analysis would account for the heteroskedasticity of the sequence of sample means $\overline{X}_j$ and use, e.g., a weighted least squares fit to the model.
\end{remark}

\textcolor{black}{We may use these calculations to understand the initial large scale exploration performed by Algorithm \ref{alg:algorithm1}.  The estimate produced by the algorithm may be seen as the result of replacing the trajectory with a linear function of $R_y$ as given by \eqref{eq:largescale}.  Then, discarding the data, we work only with this linear approximation over all $R_y$.  By doing so, we are discarding the quadratic behavior expected at small scales near $x_0$, as this part of the trajectory is damaged by the noise.
We then recover the expected quadratic behavior by fitting the linear approximation to the following quadratic model,
\begin{equation}
  \label{eq:smallscale}
  q_y^{quad}(R_y) = \beta_2 R_y^2 + \beta_0,
\end{equation}
where the zero first derivative condition is explicitly enforced.
The estimate for coordinate $j$ of $x_0$ has the form
\begin{equation}
  \widehat{x}_{0_j} = \alpha_0 + \alpha_1 \mathscr{F}(\mathcal{I}^{(m)}),
  \label{eq:larescale-estimate2}
\end{equation}
where
\begin{equation}
  \mathscr{F}(\mathcal{I}^{(m)}) = \frac{(R_y^{(m,2)})^2 + 4R_y^{(m,2)}R_y^{(m,1)} + (R_y^{(m,1)})^2}{6(R_y^{(m,2)}+R_y^{(m,1)})}
\end{equation}
is a function of the scale interval.
Comparing to \eqref{eq:largescale-estimate1}, this estimate is equivalent to the previous large scale procedure when we choose
\begin{equation}
  \mathscr{F}(\mathcal{I}^{(m)}) = \frac{\alpha_1(d+2)}{K_j} - \sigma\sqrt{D}.
  \label{eq:curlyF}
\end{equation}
This choice also can be shown to minimize the error of the estimate in \eqref{eq:larescale-estimate2}.
In summary, if we could very carefully select the range of scales to satisfy \eqref{eq:curlyF}, which requires \textit{a priori} knowledge of curvature, we could compute an estimate of $x_0$ in one step.
While we cannot expect to choose exactly the right interval to satisfy \eqref{eq:curlyF},
we observe in practice (see Section \ref{subsec:origin-numerical}) that the decreasing sequence of intervals used by Algorithm \ref{alg:algorithm1} will contain a proxy that allows for an accurate estimate.
}

\textcolor{black}{
The result of this procedure is an estimate $\widehat{x}_0^{(m)}$ over scale interval $\mathcal{I}^{(m)}$ that is very close to the true $x_0$.  Setting $y = \widehat{x}_0^{(m)}$, we are left with only a very small offset vector $\tau$:
\begin{equation}
  \|\tau\|^2 = \frac{2\xi^2}{\widetilde{N^{(m)}}}  \left(\sigma^2 D + \frac{\sigma\sqrt{d}}{\gamma}\sum_{j=1}^DK_j^{(+)} + \frac{d}{4\gamma^2} (K^{(+)})^2 \right) + \mathcal{O}\left(\frac{1}{(\widetilde{N^{(m)}})^{3/2}}\right).
  \label{eq:tau-norm}
\end{equation}
The trajectories $X_j$ may now be recomputed by centering a ball about $y = \widehat{x}_0^{(m)}$ and the fitting procedure is repeated over scale interval $\mathcal{I}^{(m+1)}$.
The error bound \eqref{eq:tau-norm} shows that if we keep the number of points sufficiently large (given a dense enough sampling), even at small scales, we can decrease the uncertainty on the estimate of $x_0$.
The accurate estimation of $x_0$ by Algorithm \ref{alg:algorithm1} is demonstrated in the next section.
}

\subsubsection{Experimental Results} \label{subsec:origin-numerical}

In this section, we test the performance of Algorithm \ref{alg:algorithm1} on several data sets over a range of parameters and tabulate the results.  MATLAB code implementing Algorithm \ref{alg:algorithm1} is available for download at \url{http://www.danielkaslovsky.com/code}.

Data sets of $N=50,000$ points sampled from $d$-dimensional manifolds embedded in $\mathbb{R}^{D}$ were generated according to the local model \eqref{eq:local-model} in the same manner as for all other experiments (see Section \ref{subsec:numerical-results}).  For each data set, the local origin $x_0 \in \mathbb{R}^D$ was chosen by sampling each coordinate from $\mathcal{U}[-10,10]$, where $\mathcal{U}[a,b]$ is the uniform distribution supported on $[a,b]$.  An initial reference point $y \in \mathbb{R}^D$ was chosen as specified in Table {\ref{tab:origin}} and a random rotation was applied to both the data set and $y$.
Seven different experiments were performed with parameters as listed in Table \ref{tab:origin}.  For each experiment, Algorithm \ref{alg:algorithm1} was used to recover the local origin of 10 data sets starting from the randomly initialized reference point $y$.  The $\ell^{\infty}$ error ($\max_j |x_{0_j} - \widehat{x}_{0_j}|$) and mean squared error ($\sum_{j=1}^D (x_{0_j} - \widehat{x}_{0_j})^2/D$) of each trial were recorded, with the mean and standard deviation over the 10 trials reported in Table \ref{tab:origin}.  The scale intervals were fixed across all experiments to be: $\mathcal{I}^{(1)} = [0.5N, 0.75N]$, $\mathcal{I}^{(2)} = [1, 0.4N]$, $\mathcal{I}^{(3)} = [1, 0.3N]$, and $\mathcal{I}^{(4)} = [1, 0.25N]$.

\begin{table}
\addtolength{\tabcolsep}{-9pt}
  \caption{\textcolor{black}{Parameters for the data sets used to test Algorithm \ref{alg:algorithm1} with the $\ell^{\infty}$ error and MSE reported over 10 trials (mean $\pm$ standard deviation).}\label{tab:origin}}
  \color{black}
  \centering
  \begin{tabular}{|c||c|c|c|c|c||r|r|}
      \hline
      & & & & & & &\\
      & & & $\kappa_n^{(i)}$ & & & &\\
      & & & \footnotesize $(d+1)\leq i \leq D$\normalsize & & $\tau_j = y_j - x_{0_j}$ & &\\
      Experiment & $d$ & $D$ & \footnotesize $1\leq n \leq d$\normalsize & $\sigma$ & \footnotesize $(d+1)\leq j \leq D$\normalsize & \multicolumn{1}{|c|}{$\ell^{\infty}$ error} & \multicolumn{1}{|c|}{MSE} \\
      \hline
      \hline
      Baseline & & & & & & 0.01646 & 6.1321e-5 \\
      (bowl) & 3 & 20 & 1.0189 & 0.05 & $\mathcal{N}(0,4\sigma^2)$ & $\pm$0.00418 & $\pm$2.5291e-5 \\
      \hline
      & & & & & & 0.01171 & 3.0669e-5 \\
      Tube & 3 & 20 & Table \ref{tab:kk2} & 0.05 & $\mathcal{N}(0,4\sigma^2)$ & $\pm$0.00261 & $\pm$1.0460e-5 \\
      \hline
      & & & & & & 0.01658 & 5.8716e-5 \\
      Saddle & 3 & 20 & $\mathcal{U}[-2,2]$ & 0.05 & $\mathcal{N}(0,4\sigma^2)$ & $\pm$0.00680 & $\pm$4.5841e-5 \\
      \hline
      High Curvature & & & & & & 0.06031 & 0.00106 \\
      Saddle & 3 & 20 & $\mathcal{U}[-5,5]$ & 0.05 & $\mathcal{N}(0,4\sigma^2)$ & $\pm$0.02006 & $\pm$0.00076 \\
      \hline
      High-Dimensional & & & & & & 0.08005 & 0.00095 \\
      Saddle & 20 & 100 & $\mathcal{U}[-2,2]$ & 0.05 & $\mathcal{N}(0,4\sigma^2)$ & $\pm$0.00772 & $\pm$0.00012 \\
      \hline
      & & & & & & 0.05541 & 0.00074 \\
      High Noise & 3 & 20 & 1.0189 & 0.15 & $\mathcal{N}(0,4\sigma^2)$ & $\pm$0.00545 & $\pm$0.00013 \\
      \hline
      Large & & & & & & 0.01021 & 2.2915e-5 \\
      Initial Offset & 3 & 20 & 1.0189 & 0.05 & $(-1)^j \times 0.75$ & $\pm$0.00224 & $\pm$9.2499e-6 \\
      \hline
    \end{tabular}
\end{table}

The results in Table \ref{tab:origin} show that Algorithm \ref{alg:algorithm1} was able to accurately locate the true origin for all of the tested settings: bowl, tube, and saddle geometries; high noise; high curvature; high dimension; and large initial offset.  As expected, the largest errors occurred in the high noise and high curvature settings.  The high-dimensional setting also produced a comparatively large error.  However, this is not unexpected, as the noise level and curvature values are quite large for the $\mathbb{R}^{100}$ ambient space.
We see that Algorithm \ref{alg:algorithm1} is quite robust over a very large range of parameters and at relatively high noise levels.  We expect that the quality of approximation will be improved beyond these accurate initial results by using a careful choice of scale intervals $\mathcal{I}^{(m)}$ rather than hard-coded intervals for all data sets.  In particular, the $\mathcal{I}^{(m)}$ should be data-driven functions of dimension, noise, and curvature.

Figure \ref{fig:iteration} shows the convergence of five example coordinates for a ``Baseline'' data set (parameters given in Table \ref{tab:origin}) with $\tau_j$ drawn from the $\mathcal{N}(0,\sigma^2)$ distribution.
The difference between the coordinates of the initial center $y$ and the true origin $x_0$ are shown at iteration 0.  
The error of the estimate $\widehat{x}_{0_j}^{(m)}$ computed at scale interval $\mathcal{I}^{(m)}$ for each subsequent iteration $m$ is shown to decrease for $m > 1$.
The example results shown in the figure indicate that Algorithm \ref{alg:algorithm1} converges in very few iterations.
\begin{figure}
  \centering
  \includegraphics[width=90mm]{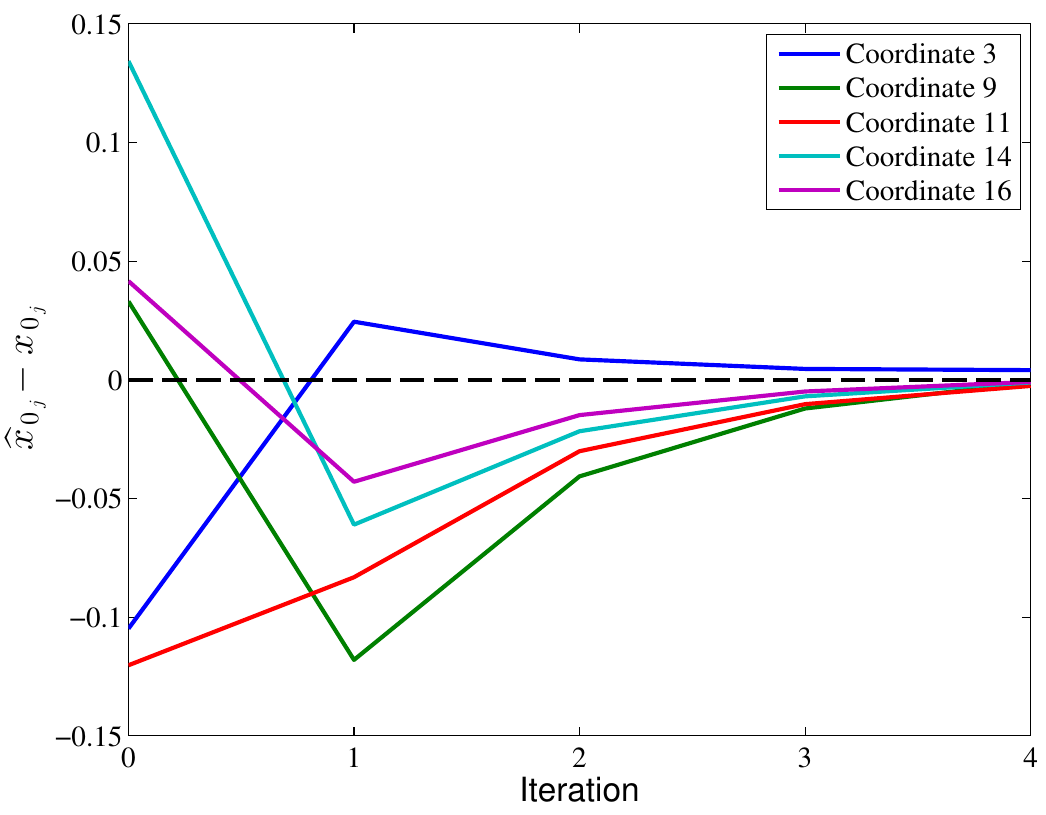}
  \caption{\textcolor{black}{Error of the estimate $\widehat{x}_{0_j}^{(m)}$ (for five example coordinates) at iteration $m$ of Algorithm \ref{alg:algorithm1} for a ``Baseline'' data set (see Table \ref{tab:origin}) with $\tau_j \sim \mathcal{N}(0,\sigma^2)$.}}
  \label{fig:iteration}
\end{figure}

\color{black}

\section{Discussion and Conclusion} \label{sec:discussion}

\subsection{Consistency with Previously Established Results}

Local PCA of manifold-valued data has received attention in several recent works (for example, those referenced in Section \ref{sec:intro}).  In particular, the analyses of \cite{Maggioni-long} and \cite{Singer}, after suitable translation of notation and assumptions, demonstrate growth rates for the PCA spectrum that match those computed in the present work.  The focus of our analysis is the perturbation of the eigenspace recovered from the local data covariance matrix.  We therefore confirm our results with those most similar from the literature.
\textcolor{black}{The most closely related results are those of \cite{Nadler}, in which matrix perturbation theory is used to study the PCA spectrum; \cite{Tyagi}, where neighborhood size and sampling conditions are given to ensure an accurate tangent space estimate from noise-free manifold-valued data; and \cite{Maggioni-MIT}, where theory is developed for the implementation of multiscale PCA to detect the intrinsic dimension of a manifold.}

In \cite{Nadler}, a finite-sample PCA analysis assuming a linear model is presented.  Keeping $N$ and $D$ fixed, the noise level $\sigma$ is considered to be a small parameter.  Much like the analysis of the present paper, the results are derived in the non-asymptotic setting.
However, the bound on the angle between the finite-sample and population eigenvectors is summarized in \cite{Nadler} for the asymptotic regime where $N$ and $D$ become large.  The result, restated here in our notation, takes the form:
\begin{equation*}
  \sin \theta_{\widehat{U}_1,U_1} ~\lesssim~ \frac{\sigma}{\sqrt{\lambda_d}}\sqrt{\frac{D}{N}} + \mathcal{O}(\sigma^2).
\end{equation*}
We note that the main results of \cite{Nadler} are stated for $N \leq D$ and that our analysis expects the opposite in general, although it is not explicitly required.  Nonetheless, by setting curvature terms to zero, our results recover the reported leading behavior following the same asymptotic regime as \cite{Nadler}, where terms $\mathcal{O}(1/\sqrt{N})$ are neglected and $\sigma$ is treated as a small parameter.  After setting all curvature terms to zero, we assume condition 1 holds such that the denominator $\delta$ is
sufficiently well conditioned and we may drop all terms other than
$\lambda_d$.  Then our Main Result has the form:
\begin{equation*}
 \sin \theta_{\widehat{U}_1,U_1} ~\lesssim~ \frac{1}{\sqrt{N}} \frac{1}{\lambda_d} \sigma\sqrt{d(D-d)}\left[\frac{r}{\sqrt{d+2}} + \sigma \right] = \frac{\sigma}{\sqrt{\lambda_d}}\frac{\sqrt{d(D-d)}}{\sqrt{N}} + \mathcal{O}(\sigma^2).
\end{equation*}
Setting $d=1$ to match the analysis in \cite{Nadler} recovers its curvature-free result.

Next, \cite{Tyagi} presents an analysis of local PCA differing from ours in two crucial ways.  First, the analysis of \cite{Tyagi} does not include high-dimensional noise perturbation and the data points are assumed to be sampled directly from the manifold.  Second, the sampling density is not fixed, whereas the neighborhood size determines the number of sample points in our analysis.  In fact, a goal of the analysis in \cite{Tyagi} is to determine a sampling density that will yield an accurate tangent space estimate.

Allowing for a variable sampling density has the effect of decoupling the condition number $\delta^{-1}$ from the norm $\|U_2^T \Delta U_1\|_F$ measuring the amount of ``lift'' in directions normal to the tangent space due to the perturbation.  The analysis of \cite{Tyagi} proceeds by first determining the optimal neighborhood radius $r^*$ in the asymptotic limit of infinite sampling, $N \rightarrow \infty$.  This approach yields the requirement that the spectra associated with the tangent space and curvature be sufficiently separated.  Translating to our notation, setting noise terms to zero, and assuming the asymptotic regime of \cite{Tyagi} such that we may neglect finite-sample correction terms, we recover condition 1 of our Main Result Theorem \ref{thm:mainresult1}:
\begin{equation}
 \lambda_d - \|U_2^T \Delta U_2\|_F = \lambda_d - \|U_2^T \frac{1}{N}CC^T U_2\|_F > 0.
 \label{eq:Tyagi-compare1}
\end{equation}
Thus, Theorem 1 of \cite{Tyagi} requires that $r$ be chosen such that the subspace recovery problem is well conditioned in the same sense that we require by condition 1.  Substituting the expectations for each term in \eqref{eq:Tyagi-compare1} yields
\begin{equation*}
 \frac{r^2}{(d+2)} - \frac{K^2r^4(d+1)}{2(d+2)^2(d+4)} > 0,
\end{equation*}
implying the choice $r < c/K$ (for a constant $c>0$), in agreement with the analysis of \cite{Tyagi}.  Once the proper neighborhood size has been selected, the decoupling assumed in \cite{Tyagi} allows a choice of sampling density large enough to ensure a small angle.  Again translating to our result \eqref{eq:main-result1}, once $r$ is selected so that the denominator $\delta$ is well conditioned, the density may be chosen such that the $1/\sqrt{N}$ decay of the numerator $\|U_2^T \Delta U_1\|_F$ allows for a small recovery angle.  Thus, we see that in the limit of infinite sampling and absence of noise, our results are consistent with those of \cite{Tyagi} in the fixed density setting.

\textcolor{black}{Finally, the recent work \cite{Maggioni-MIT} studies multiscale PCA and the growth of the corresponding spectrum to detect the intrinsic dimension of a manifold (or, more generally, a point cloud of random samples from a distribution concentrated around a low-dimensional manifold).
The authors prove, under appropriate conditions, that the empirical covariance of noisy points localized in a Euclidean ball about a noisy center is close to the population covariance of the underlying distribution, with high probability.
In particular, the authors' very detailed analysis shows that one may estimate the population covariance from the empirical covariance of noisy points that are localized before noise is added.  Then, following the work in \cite{LittleThesisDuke}, further effort in \cite{Maggioni-MIT} examines the effect of centering the multiscale analysis about a noisy origin.}

\textcolor{black}{Given an appropriate translation of the assumptions, the key results in \cite{Maggioni-MIT} are of the same order as those in the present work.  Using our notation, \cite{Maggioni-MIT} proceeds with an analysis of the geometric terms contained in the covariance $\frac{1}{N}\widetilde{X}\widetilde{X}^T$ and bounds the difference from the population covariance by controlling the perturbation due to the noisy center and the localization process.  In both the present analysis and that of \cite{Maggioni-MIT}, the empirical covariance $\frac{1}{N}\widetilde{X}\widetilde{X}^T$, computed from points localized before adding noise, provides the leading order terms that drive the behavior of $\|P-\widehat{P}\|_F$.  By moving the analysis from $r$ to $R$ in Section \ref{sec:numerical2}, we allow both curvature and noise to affect the localization of points and experimentally verify that $\|P-\widehat{P}\|_F$ is consistent with our Main Result.  Indeed, the results in Section \ref{sec:numerical2} experimentally test and confirm that the perturbation caused by such localization is small, as is theoretically derived in \cite{Maggioni-MIT}.
The effect of centering about a noisy origin is addressed in \cite{Maggioni-MIT} through a rescaling of the observable radius, and conditions are given that allow for the covariance of the set of points localized about a noisy origin to be close to the covariance of the points localized about the true origin.
The algorithm introduced in the present work, Algorithm \ref{alg:algorithm1} of Section \ref{sec:numerical2}, provides a simple method for recovering the true origin that may be used in practice.
Through a very different framework than that of the analysis in \cite{Maggioni-MIT}, our method uses the geometric information encoded in the center of mass to compute the true origin of the local neighborhood.
Our results therefore offer an algorithmic companion to the analysis presented in \cite{Maggioni-MIT}.}

\subsection{Algorithmic Considerations}

\subsubsection{Parameter Estimation}

\textcolor{black}{Practical methods must be developed to recover parameters such as dimension, curvature, and noise.  Such parameters are necessary for any analysis or algorithm and should be recovered directly from the data rather than estimated by \textit{a priori} fixed values.
  The experimental results presented above suggest the particular importance of accurately estimating the intrinsic dimension $d$, for which there exist several algorithms.
  Fukunaga introduced a local PCA-based approach for estimating $d$ in \cite{Fukunaga-Olsen}.
  The recent work in \cite{Maggioni-long} presents a multiscale approach that estimates $d$
  in a pointwise fashion.  Performing an SVD at each scale, $d$ is
  determined by examining growth rate of the multiscale singular values.
  It would be interesting to investigate if this approach remains robust
  if only a coarse exploration of the scales is performed, as it may be
  possible to reduce the computational cost through an SVD-update
  scheme.  Another scale-based approach is presented in
  \cite{Wang-Marron} and the problem was studied from a dynamical
  systems perspective in \cite{Froehling}.}

\textcolor{black}{There exist statistical methods for estimating the noise level present in a data set that should be useful in the context of this work (see, for example, \cite{Broomhead, Donoho2}).
We experimentally obtain a reliable estimate of the noise level from the median of the smallest singular values over several small neighborhoods (results not shown).
  In \cite{Maggioni-long}, the smallest multiscale singular values are used as an estimate for the noise level and a scale-dependent estimate of noise variance is suggested in \cite{Jones-new} for curve-denoising.
  Methods for estimating curvature (e.g., \cite{Williams-curvature, Kresk-curvature}) have been developed for application to computer vision and extensions to the high-dimensional setting should be explored. 
  Further, if one is willing to perform many SVDs of large matrices, our method of tracking the center of mass presented in Section \ref{sec:numerical2} combined with the growth rates for the PCA spectrum presented in \cite{Maggioni-long} might yield the individual principal curvatures.}

\subsubsection{Sampling}
For a tractable analysis, assumptions about sampling must be made.  In this work we have assumed uniform sampling in the tangent plane.  This is merely one choice and we have conducted initial experiments uniformly sampling the manifold rather than the tangent plane.  Results suggest that for a given radius, sampling the manifold yields a smaller curvature perturbation than that from sampling the tangent plane.  While more rigorous analysis and experimentation is needed, it is clear that consideration must be given to the sampling assumptions for any practical algorithm.

\subsubsection{From Tangent Plane Recovery to Data Parameterization}

The tangent plane recovered by our approach may not provide the best
approximation over the entire neighborhood from which it was derived.
Depending on a user-defined error tolerance, a smaller or larger sized
neighborhood may be parameterized by the local chart.  If high
accuracy is required, one might only parameterize a neighborhood of
size $N < N^*$ to ensure the accuracy requirement is met.  Similarly,
if an application requires only modest accuracy, one may be able to
parameterize a larger neighborhood than that given by $N^*$.

Finally, we may wish to use tangent planes recovered from different
neighborhoods to construct a covering of a data set.  There exist
methods for aligning local charts into a global coordinate system (for
example \cite{Brand, Roweis-Global, Zhang-Zha}, to name a few).  Care
should be taken to define neighborhoods such that a data set may be
optimally covered.

\section*{Funding}

This work was supported by the National Science Foundation [DMS-0941476 to F.G.M. and D.N.K., ACI-1226362 and DGE-0801680 to D.N.K.]; and the Department of Energy [DE-SC0004096 to F.G.M.].

\textcolor{black}{
\section*{Acknowledgements}
The authors are grateful to the anonymous reviewers for their insightful comments and suggestions that greatly improved the content and presentation of this manuscript.
}

\ifx\undefined\BySame
\newcommand{\BySame}{\leavevmode\rule[.5ex]{3em}{.5pt}\ }
\fi
\ifx\undefined\textsc
\newcommand{\textsc}[1]{{\sc #1}}
\newcommand{\emph}[1]{{\em #1\/}}
\let\tmpsmall\small
\renewcommand{\small}{\tmpsmall\sc}
\fi

\appendix

\section*{Appendix}

Technical calculations are presented in this appendix.  In particular,
the norm of each random matrix contributing to the perturbation term
$\Delta$, defined in equation \eqref{eq:fullDelta}, is bounded with
high probability.  The analysis is divided between three cases:
(1)  norms of products of bounded random matrices; (2) norms of
products of unbounded random matrices; and (3)   norms of products of
bounded and unbounded random matrices. 

Each case requires careful attention to derive a tight result that
avoids large union bounds and ensures high probability that is
independent of the ambient dimension $D$.  The analysis proceeds by
bounding the eigenvalues of the \textcolor{black}{covariance matrices of} $(L-\overline{L})$,
$(C-\overline{C})$, and $(E-\overline{E})$ using results from random
matrix theory and properties of the spectral norm.
A detailed analysis of each of the three cases follows.

Before we start the proofs, one last comment is in order. The reader
will notice that we sometimes introduce benign assumptions about the
number of samples $N$ or the dimensions $d$ or $D$ in order to
provide bounds that are simpler to interpret.  These assumptions are
not needed to derive any of the results; they are merely introduced to
help us simplify a complicated expression, and introduce upper bounds
that hold under these fairly benign assumptions. This should help the
reader interpret the size of the different terms.

\section*{Notation}
We often vectorize matrices by concatenating the columns of a
matrix. If $M= [m^{(1)} | \cdots |m^{(N)}]$, then we define
\begin{equation*}
  \vect{m} = \vct{M} = 
  \begin{bmatrix}
    m^{(1)}\\
    \vdots\\
    m^{(N)}
  \end{bmatrix}.
\end{equation*}
We denote the largest and smallest eigenvalue of a \textcolor{black}{square} matrix $M$ by
\begin{equation*}
  \lambda_{\max}(M) \quad \text{and} \quad \lambda_{\min}(M),
\end{equation*}
respectively.
In the main body of the paper, we use the standard notation
$\overline{X}$ to denote the sample mean of $N$ columns from the matrix
$X$. In this appendix, we introduce a second notation to denote the same
concept,
\begin{equation*}
  \wE{X} = \overline{X} = \frac{1}{N} \sum_{n=1}^N x^{(n)}.
\end{equation*}
Finally, we denote by $\E[X]$ the expectation of the random matrix $X$ and by $\prob[\mathscr{E}]$ the probability of event $\mathscr{E}$.
\section{Eigenvalue Bounds}

\subsection{Linear Eigenvalues} \label{subsec:linearEigenvalues}

We seek a bound on the maximum and minimum (nonzero) eigenvalue of the matrix
\begin{equation} \label{eq:LL^T}
  \frac{1}{N}(L-\overline{L})(L-\overline{L})^T
  = \frac{1}{N}\sum_{k=1}^N (\ell^{(k)} - \overline{\ell})(\ell^{(k)} - \overline{\ell})^T.
\end{equation}
As only the nonzero eigenvalues are of interest, we proceed by
considering only the nonzero upper-left $d \times d$ block of the
matrix in \eqref{eq:LL^T}, or equivalently, by ignoring the trailing
zeros of each realization $\ell^{(k)}$.  Thus, momentarily abusing
notation, we consider the matrix in \eqref{eq:LL^T} to be of dimension
$d \times d$. The analysis utilizes the following theorem found in
\cite{Tropp}.
\begin{theorem}[Matrix Chernoff II, \cite{Tropp}] \label{thm:Tropp}
  Consider a finite sequence $\{X_k\}$ of independent, random, self adjoint matrices that satisfy
  \begin{equation*}
    X_k \succcurlyeq 0 \qquad \text{and} \qquad \lambda_{\max}(X_k) \leq \lambda_{\infty} \quad \text{almost surely}.
  \end{equation*}
  Compute the minimum and maximum eigenvalues of the sum of expectations,
  \begin{equation*}
    \mu_{\min} := \lambda_{\min}\left(\sum_{k=1}^N \E \textcolor{black}{[}X_k\textcolor{black}{]} \right) \quad \text{and} \quad \mu_{\max} := \lambda_{\max}\left(\sum_{k=1}^N \E \textcolor{black}{[}X_k\textcolor{black}{]} \right).
  \end{equation*}
  Then
  \begin{align*}
    &\mathbb{P}\left[\lambda_{\min}\left(\sum_{k=1}^N X_k\right) \leq
      (1-\delta)\mu_{\min} \right] \leq
    d\left[\frac{e^{-\delta}}{(1-\delta)^{(1-\delta)}}\right]^{\mu_{\min}/\lambda_{\infty}},
    \quad \text{for } \delta \in [0,1], \text{ and} \\ 
    &\mathbb{P}\left[\lambda_{\max}\left(\sum_{k=1}^N X_k\right) \geq
      (1+\delta)\mu_{\max} \right] \leq
    d\left[\frac{e^{\delta}}{(1+\delta)^{(1+\delta)}}\right]^{\mu_{\max}/\lambda_{\infty}},
    \quad \text{for } \delta \geq 0.
  \end{align*}
\end{theorem}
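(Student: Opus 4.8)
The plan is to use the matrix Laplace transform method. The entry point is the Markov-type bound: for any self-adjoint random matrix $Y$ and any $\theta>0$,
\begin{equation*}
 \mathbb{P}\left[\lambda_{\max}(Y)\geq t\right]
 \;=\;\mathbb{P}\left[\lambda_{\max}\!\left(e^{\theta Y}\right)\geq e^{\theta t}\right]
 \;\leq\; e^{-\theta t}\,\E\,\mathrm{tr}\,e^{\theta Y},
\end{equation*}
where we used $e^{\theta Y}\succeq 0$, hence $\lambda_{\max}\!\left(e^{\theta Y}\right)\leq\mathrm{tr}\,e^{\theta Y}$. The analogous statement for $\lambda_{\min}$ comes from $\lambda_{\min}(Y)=-\lambda_{\max}(-Y)$: for $\theta>0$, $\mathbb{P}[\lambda_{\min}(Y)\leq t]\leq e^{\theta t}\,\E\,\mathrm{tr}\,e^{-\theta Y}$. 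Applying these with $Y=\sum_{k=1}^N X_k$, the task reduces to controlling $\E\,\mathrm{tr}\exp(\pm\theta\sum_k X_k)$.

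First I would establish subadditivity of the matrix cumulant generating function. Using the Lieb concavity theorem — that $A\mapsto\mathrm{tr}\exp(H+\log A)$ is concave on the positive-definite cone — together with Jensen's inequality, one peels off the summands one at a time and takes expectations to obtain
\begin{equation*}
 \E\,\mathrm{tr}\exp\!\left(\theta\sum_{k=1}^N X_k\right)\;\leq\;\mathrm{tr}\exp\!\left(\sum_{k=1}^N\log\E\,e^{\theta X_k}\right),
\end{equation*}
and likewise with $-\theta$ in place of $\theta$.

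Next I would bound the one-summand matrix MGF. Because $0\preceq X_k\preceq\lambda_\infty I$ almost surely and $x\mapsto e^{\theta x}$ is convex, the scalar inequality $e^{\theta x}\leq 1+\frac{e^{\theta\lambda_\infty}-1}{\lambda_\infty}x$ on $[0,\lambda_\infty]$ lifts (a linear-plus-constant function of a self-adjoint matrix is operator monotone) to
\begin{equation*}
 \E\,e^{\theta X_k}\;\preceq\;I+g(\theta)\,\E X_k\;\preceq\;\exp\!\big(g(\theta)\,\E X_k\big),\qquad g(\theta)=\frac{e^{\theta\lambda_\infty}-1}{\lambda_\infty},
\end{equation*}
using $I+A\preceq e^A$. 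Taking (operator monotone) logarithms and summing gives $\sum_k\log\E\,e^{\theta X_k}\preceq g(\theta)\sum_k\E X_k$, so, since $g(\theta)>0$, $\lambda_{\max}\!\big(\sum_k\log\E\,e^{\theta X_k}\big)\leq g(\theta)\mu_{\max}$, and by $\mathrm{tr}\exp(M)\leq d\,e^{\lambda_{\max}(M)}$ together with the previous paragraph,
\begin{equation*}
 \E\,\mathrm{tr}\exp\!\left(\theta\sum_{k=1}^N X_k\right)\;\leq\;d\,e^{g(\theta)\mu_{\max}}.
\end{equation*}
Inserting this into the Laplace bound with $t=(1+\delta)\mu_{\max}$ and minimizing $e^{-\theta t}d\,e^{g(\theta)\mu_{\max}}$ over $\theta>0$ — the minimizer being $\theta=\lambda_\infty^{-1}\log(1+\delta)$ — yields the stated upper-tail estimate after routine algebra. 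The lower tail is symmetric: with $-\theta$ one gets $\E\,e^{-\theta X_k}\preceq\exp\!\big(\tilde g(\theta)\E X_k\big)$ where $\tilde g(\theta)=(e^{-\theta\lambda_\infty}-1)/\lambda_\infty<0$, so now $\lambda_{\max}\!\big(\tilde g(\theta)\sum_k\E X_k\big)=\tilde g(\theta)\lambda_{\min}\!\big(\sum_k\E X_k\big)=\tilde g(\theta)\mu_{\min}$; taking $t=(1-\delta)\mu_{\min}$ and optimizing (minimizer $\theta=-\lambda_\infty^{-1}\log(1-\delta)$, which forces $\delta<1$) gives the claimed bound.

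The hard part is the subadditivity step: the appearance of the clean exponent $\mu_{\bullet}/\lambda_\infty$ — rather than the weaker dimension-dependent exponent one obtains from the Golden--Thompson inequality in the original Ahlswede--Winter argument — rests essentially on the Lieb concavity theorem, which is the single genuinely deep input. The remaining ingredients (the scalar-to-operator convexity estimate for the per-summand MGF, and the one-variable calculus optimization over $\theta$) are routine, and one need only take care that the almost-sure bound $X_k\preceq\lambda_\infty I$ is used in exactly one place per tail.
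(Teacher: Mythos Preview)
The paper does not prove this theorem; it is quoted verbatim from Tropp's \emph{User-Friendly Tail Bounds for Sums of Random Matrices} and simply applied in the appendix to bound the linear eigenvalues. Your outline is correct and is precisely Tropp's own argument: the matrix Laplace transform, the subadditivity step via Lieb's concavity theorem, the chord bound $e^{\theta x}\le 1+g(\theta)x$ on $[0,\lambda_\infty]$ for the per-summand MGF, and the scalar optimization over $\theta$. There is nothing to compare against in the present paper, and nothing to correct in your sketch.
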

\noindent We apply this result to 
\begin{equation*}
  X_k = \frac{1}{N}(\ell^{(k)} - \overline{\ell})(\ell^{(k)} - \overline{\ell})^T.
\end{equation*}
Clearly $X_k$ is a symmetric \textcolor{black}{positive semi-definite} matrix and we have $X_k \succcurlyeq 0$.  Next,
\begin{equation*}
  \lambda_{\max}(X_k) = \left\|\frac{1}{N}(\ell^{(k)} -
    \overline{\ell})(\ell^{(k)} - \overline{\ell})^T\right\|_2
  \textcolor{black}{=}
  \frac{1}{N}\left\|\ell^{(k)} - \overline{\ell}\right\|^2 \leq
  \frac{1}{N}\left(\|\ell\| + \|\overline{\ell}\|\right)^2 \leq
  \frac{4r^2}{N} 
\end{equation*}
and we set $\lambda_{\infty} = 4r^2/N$. Simple computations yield
\begin{equation*}
  \lambda_{\max}\left(\sum_{k=1}^N\E[X_k]\right) =
  \frac{r^2}{d+2}\left[1-\frac{1}{N}\right]^2, \quad \text{and}\quad 
  \lambda_{\min}\left(\sum_{k=1}^N\E[X_k]\right) =
  \frac{r^2}{d+2}\left[1-\frac{1}{N}\right]^2,
\end{equation*}
and we set
\begin{equation*}
  \mu_{\max} = \mu_{\min} = \mu = \frac{r^2}{d+2}\left[1-\frac{1}{N}\right]^2.
\end{equation*}
By Theorem \ref{thm:Tropp} and using standard manipulations, we have
the following result bound for the smallest eigenvalue, $\lambda_d$ in
our notation,
\begin{equation*}
  \lambda_d > \frac{r^2}{d+2}\left[1-\frac{1}{N}\right]^2\left[1 -
    \xi_{\lambda_d}
    \frac{1}{\sqrt{N}}\frac{\sqrt{8(d+2)}}{(1-\frac{1}{N})} \right]
\end{equation*}
with probability greater than $1-de^{-\xi_{\lambda_d}^2}$.  Similarly,
the following result holds for the largest eigenvalue, $\lambda_1$ in
our notation:
\begin{equation} 
  \label{eq:lambda_1} 
  \lambda_1 \leq
  \frac{r^2}{d+2}
  \left[1+
    \xi_{\lambda_1}
    \frac{5\sqrt{d+2}}{\sqrt{N}}
  \right] 
\end{equation}
with probability greater than $1-de^{-\xi_{\lambda_1}^2}$, as soon as
$N > 3$. We define the last upper bound as
\begin{equation}
  \lambda_{\text{bound}}(\xi) =\frac{r^2}{d+2}
  \left[1+  \xi
    \frac{5\sqrt{d+2}}{\sqrt{N}}
  \right],
  \label{eig-linear-bound}
\end{equation}
and we can use this bound to control the size of all the eigenvalues
of the matrix $  \frac{1}{N}(L-\overline{L})(L-\overline{L})^T$,
\begin{equation}
  \prob_{\ell} \left [
    \lambda_i \leq \lambda_{\text{bound}}(\xi), i =1,\ldots, d
  \right ]
  \ge 1-de^{-\xi^2}.
  \label{eig-linear-prob}
\end{equation}
Now that we have computed the necessary bounds for all nonzero linear
eigenvalues, we return to our standard notation for the remainder of
the analysis: each $\ell^{(k)}$ is of length $D$ with $\ell_j^{(k)} =
0$ for $d+1 \leq j \leq D$ and $L = [\ell^{(1)} | \ell^{(2)} | \cdots |
\ell^{(N)}]$ is a $D \times N$ matrix.
\subsection{Curvature Eigenvalues}
To bound the largest eigenvalue, $\gamma_1$, of
$\frac{1}{N}(C-\overline{C})(C-\overline{C})^T$ we note that the
spectral norm is bounded by the Frobenius norm and we use the bound on
the Frobenius norm derived in Section \ref{sec:pure-curvature}.  We
can use this bound to control the size of all the eigenvalues of the
matrix $ \frac{1}{N}(C-\overline{C})(C-\overline{C})^T$,
\begin{equation}
  \prob_{\ell} \left [
    \gamma_i \leq \gamma_{\text{bound}}(\xi), i =1,\ldots, \textcolor{black}{D-d}
  \right ]
  \ge 1-2e^{-\xi^2}.
  \label{eig-curvature-prob}
\end{equation}
where
\begin{equation} 
  \gamma_{\text{bound}}(\xi) =
  \frac{r^4}{2(d+4)(d+2)^2}\sqrt{\sum_{i,j=d+1}^D \negthickspace\negthickspace
    [(d+1)K_{nn}^{ij}-K_{mn}^{ij}]^2}
  + \frac{(K^{(+)})^2 r^4}{4 \sqrt{N}}\left[(2+\xi\sqrt{2}) +
    \frac{(2+\xi\sqrt{2})^2}{\sqrt{N}} \right].
  \label{eq:eigCC^T-bound} 
\end{equation}
The proof of the bound on the Frobenius norm is delayed until Section
\ref{sec:pure-curvature}. \\

\begin{remark}
  A different (possibly tighter) bound may be
  derived using Theorem \ref{thm:Tropp}.  However, such a bound would
  hold with a probability that becomes small when the ambient dimension
  $D$ is large.  We therefore proceed with the bound
  \eqref{eq:eigCC^T-bound} above, noting that we sacrifice no additional
  probability by using it here since it is required for the analysis in
  Section \ref{sec:pure-curvature}.
\end{remark}

\subsection{Noise Eigenvalues}
We may control the eigenvalues of
$\frac{1}{N}(E-\overline{E})(E-\overline{E})^T$ using standard results
from random matrix theory.  In particular, let $s_{\min}(E)$ and
$s_{\max}(E)$ denote the smallest and largest singular value of matrix
$E$, respectively.  The following result (Corollary 5.35 of
\cite{Vershynin}) gives a tight control on the size of $s_{\min}(E)$
and $s_{\max}(E)$ when $E$ has Gaussian entries.

\begin{theorem}[\cite{Vershynin,Edelman88}] \label{thm:GaussianSVs} Let
  $A$ be a $D \times N$ matrix whose entries are independent standard
  normal random variables.  Then for every $t \geq 0$, with
  probability at least $1-2\exp(-t^2/2)$ one has
  \begin{equation}
    \sqrt{N} - \sqrt{D} - t ~\leq~ s_{\min}(A) ~\leq~ s_{\max}(A) ~\leq~ \sqrt{N} + \sqrt{D} + t.
  \end{equation}
\end{theorem}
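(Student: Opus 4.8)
The statement is exactly Corollary~5.35 of \cite{Vershynin} (with the exact constants going back to \cite{Edelman88}), so the shortest route is simply to invoke it; but it is worth recording the two-ingredient argument behind it. The plan is to combine Gaussian concentration of measure with sharp bounds on $\E\,s_{\min}(A)$ and $\E\,s_{\max}(A)$. First I would record the deterministic fact that, identifying a $D\times N$ matrix with the vector $\vct{A}\in\R^{DN}$ under the Frobenius inner product, the maps $A\mapsto s_{\max}(A)=\|A\|_2$ and $A\mapsto -s_{\min}(A)$ are both $1$-Lipschitz, since each singular value is a $1$-Lipschitz function of the matrix in the spectral norm (Weyl's perturbation inequality) and the spectral norm is dominated by the Frobenius norm. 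As the entries of $A$ form a standard Gaussian vector, $\vct{A}\sim\N(0,I_{DN})$, the Gaussian concentration inequality for Lipschitz functions yields, for every $t\ge 0$,
\begin{equation*}
  \prob\big\{s_{\max}(A)\ge \E\,s_{\max}(A)+t\big\}\le e^{-t^2/2}
  \qquad\text{and}\qquad
  \prob\big\{s_{\min}(A)\le \E\,s_{\min}(A)-t\big\}\le e^{-t^2/2}.
\end{equation*}

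Second, I would bound the two expectations by Gaussian comparison. Using $s_{\max}(A)=\sup_{u\in S^{N-1},\,v\in S^{D-1}}\langle Au,v\rangle$ and $s_{\min}(A)=\inf_{v\in S^{D-1}}\sup_{u\in S^{N-1}}\langle Au,v\rangle$ (the latter being the only nontrivial regime, $N\ge D$), both quantities are an extremum of the centred Gaussian field $X_{u,v}=\langle Au,v\rangle$, whose covariance is $\E[X_{u,v}X_{u',v'}]=\langle u,u'\rangle\langle v,v'\rangle$. Comparing $X$ with the decoupled field $Y_{u,v}=\langle g,u\rangle+\langle h,v\rangle$ for independent $g\sim\N(0,I_N)$ and $h\sim\N(0,I_D)$, the increment comparison needed in either direction reduces to the elementary inequality $(1-\langle u,u'\rangle)(1-\langle v,v'\rangle)\ge 0$. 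The Sudakov--Fernique inequality then gives $\E\,s_{\max}(A)\le\E\sup_{u,v}Y_{u,v}=\E\|g\|_2+\E\|h\|_2\le\sqrt{N}+\sqrt{D}$, and Gordon's min--max comparison inequality gives $\E\,s_{\min}(A)\ge\E\|g\|_2-\E\|h\|_2\ge\sqrt{N}-\sqrt{D}$.

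Finally, a union bound over the two tail events in the display above shows that with probability at least $1-2e^{-t^2/2}$ one simultaneously has $s_{\max}(A)\le\E\,s_{\max}(A)+t\le\sqrt{N}+\sqrt{D}+t$ and $s_{\min}(A)\ge\E\,s_{\min}(A)-t\ge\sqrt{N}-\sqrt{D}-t$, which is the assertion. I expect the only delicate step to be the lower bound $\E\,s_{\min}(A)\ge\sqrt{N}-\sqrt{D}$: unlike the bound on $\E\,s_{\max}(A)$ it is not available from Slepian or Sudakov--Fernique and genuinely needs Gordon's full min--max Gaussian comparison theorem, and a small additional observation about the expected norm of a Gaussian vector (monotonicity in $k$ of the Jensen gap between $\sqrt{k}$ and the expected length of a $k$-dimensional standard Gaussian) is required to pass from $\E\|g\|_2-\E\|h\|_2$ to the clean constant $\sqrt{N}-\sqrt{D}$; the Lipschitz estimate, the concentration inequality, and the union bound are all routine.
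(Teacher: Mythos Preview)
The paper does not supply a proof of this theorem; it is quoted verbatim as a known result (Corollary~5.35 of \cite{Vershynin}, with constants tracing to \cite{Edelman88}) and used as a black box. Your outline is correct and is precisely the argument given in \cite{Vershynin}: Lipschitz continuity of the extreme singular values combined with Gaussian concentration handles the deviation, while Sudakov--Fernique and Gordon's min--max comparison bound the expectations; your remark that passing from $\E\|g\|_2-\E\|h\|_2$ to $\sqrt{N}-\sqrt{D}$ requires the monotonicity of the Jensen gap $k\mapsto\sqrt{k}-\E\|g_k\|_2$ is a nice point of care that is often glossed over.
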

\noindent Define
\begin{equation}
  \alpha = \left(\sigma\sqrt{1-\frac{1}{N}}\right)^{-1}
  \label{alpha}
\end{equation}
and note that the entries of $Z = \alpha(E-\overline{E})$ are
independent standard normal random variables.
\textcolor{black}{This normalization by $\alpha$
  allows us to use Theorem \ref{thm:GaussianSVs}
  and we divide by $\alpha^2$ to recover the result for $(E-\overline{E})(E-\overline{E})^T$.}
Let us partition the
Gaussian vector $e$ into the first $d$ coordinates, $e_1$, and last
$D-d$ coordinates, $e_2$,
\begin{equation}
  e =
  \begin{bmatrix}
    e_1 & |& e_2
  \end{bmatrix}
  ^T,
  \label{e1e2}
\end{equation}
and observe that the matrix $U_1^T \frac{1}{N}
(E-\overline{E})(E-\overline{E}^T )U_1$ only depends on the
realizations of $e_1$. Similarly, the matrix $U_2^T \frac{1}{N}
(E-\overline{E})(E-\overline{E}^T )U_2$ only depends on the
realizations of $e_2$.  By Theorem \ref{thm:GaussianSVs}, we have
\begin{equation}
  \lambda_{\max}  \left(\frac{1}{N} U_1^T
    (E-\overline{E})(E-\overline{E}^T )U_1\right)  \leq   
  \textcolor{black}{\sigma^2\left(1+ \frac{5}{2\sqrt{N}} (\sqrt{d} + \xi_{e_1})\right)}
  \label{eq:EE-eig11}
\end{equation}
with probability at least $1-e^{-\xi_{e_1}^2}$ over the random
realization of $e_1$, as soon as $N >
4(\sqrt{d} + \xi_{e_1})$, a condition easily satisfied for any reasonable
sampling density. Similarly,
\begin{equation}
  \lambda_{\max}\left(\frac{1}{N} U_2^T (E-\overline{E})(E-\overline{E}^T )U_2\right) \leq
  \textcolor{black}{\sigma^2\left(1+ \frac{5}{2\sqrt{N}}  (\sqrt{D-d} + \xi_{e_2})\right)}
  \label{eq:EE-eig22}
\end{equation}
with probability at least $1-e^{-\xi_{e_2}^2}$ over the random
realization of $e_2$, as soon as $N > 4(\sqrt{D-d} + \xi_{e_2})$.

\section{Products of Bounded Random Matrices}
\subsection{Curvature Term: $CC^T$ \label{sec:pure-curvature}}
Begin by recalling the notation used for the curvature constants,
\begin{equation}
  K_i = \sum_{n=1}^d \kappa_n^{(i)}, \quad
  K = \left(\sum_{i=d+1}^D K_i^2\right)^{\frac{1}{2}} \negthickspace, \quad
  K_{nn}^{ij} = \sum_{n=1}^d \kappa_n^{(i)} \kappa_n^{(j)},  \quad
  K_{mn}^{ij} = \sum_{\substack{m,n=1\\m\neq n}}^d \kappa_m^{(i)} \kappa_n^{(j)}.
  \label{eq:Kieq:K}
\end{equation}
The constant $K_i$ quantifies the curvature in \textcolor{black}{normal direction} $i$, for $i
= (d+1),$ $\dots,D$.  The overall compounded curvature of the local
model is quantified by $K$ and is a natural result of our use of the
Frobenius norm.  We note that $K_iK_j = K_{nn}^{ij} + K_{mn}^{ij}$.
We also recall the positive constants
\begin{equation}
  K_{i}^{(+)} = \left(\sum_{n=1}^d |\kappa_n^{(i)}|^2\right)^{\frac12}, \quad \text{and} \quad K^{(+)} = \left(\sum_{i=d+1}^D (K_i^{(+)})^2\right)^{\frac{1}{2}}.
\end{equation}

Our strategy for bounding the matrix norm $\|U_2^T
\frac{1}{N}(C-\overline{C})(C-\overline{C})^T U_2\|_F$ begins with the
observation that $\frac{1}{N}(C-\overline{C})(C-\overline{C})^T$ is a
sample mean of $N$ covariance matrices of the vectors
$(c^{(k)}-\overline{c})$, $k=1,\dots,N$.  That is,
\begin{equation}
  \frac{1}{N}(C-\overline{C})(C-\overline{C})^T = \wE{(c-\wE{c})(c-\wE{c})^T}.
\end{equation}
We therefore expect that
$\frac{1}{N}(C-\overline{C})(C-\overline{C})^T$ converges toward the
centered covariance matrix of $c$.  We will use the following result
of Shawe-Taylor and Cristianini \cite{shawe} to bound, with high
probability, the norm of the difference between this sample mean and
its expectation.
\begin{theorem} \label{thm:shawe} \emph{(Shawe-Taylor \& Cristianini,
    \cite{shawe})}.\quad Given $N$ realizations  of a random matrix
  $Y$ distributed  with probability distribution $\prob_Y$, we have
  \begin{equation}
    \prob_Y \left\{ \left\|\E[Y] - \wE{Y}\right\|_F \leq
      \frac{R}{\sqrt{N}}\left(2+\xi\sqrt{2}\right) \right\} \ge 1-e^{-\xi^2}.
    \label{eq:shawe-taylor}
  \end{equation}
  The constant $R = \sup_{\substack{\text{supp}(\prob_Y)}}\|Y\|_F$, where
  \text{supp}$(\prob_Y)$ is the support of distribution $\prob_Y$.
\end{theorem}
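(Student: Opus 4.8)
The plan is to reduce the matrix statement to a concentration statement for a bounded random vector, and then combine a bounded-differences (McDiarmid) inequality with a symmetrization estimate for the expected deviation.

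First I would vectorize. Since $M \mapsto \vct{M}$ is a linear isometry from $(\R^{p \times q},\|\cdot\|_F)$ onto $(\R^{pq},\|\cdot\|_2)$, writing $y = \vct{Y}$ and $y^{(k)} = \vct{Y^{(k)}}$ for the $N$ i.i.d.\ realizations, we have $\|\E[Y] - \wE{Y}\|_F = \|\E[y] - \tfrac1N\sum_{k=1}^N y^{(k)}\|_2$, and $R = \sup\|y\|_2$, so $\|y^{(k)}\| \le R$ almost surely. It then suffices to prove
\[
  \prob\left\{\left\|\E[y] - \tfrac1N\sum_{k=1}^N y^{(k)}\right\|_2 \le \tfrac{R}{\sqrt{N}}\bigl(2 + \xi\sqrt{2}\bigr)\right\} \ge 1 - e^{-\xi^2}.
\]

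Next I would set up the concentration argument. Let $g(y^{(1)},\dots,y^{(N)}) = \|\E[y] - \tfrac1N\sum_{k=1}^N y^{(k)}\|_2$, a deterministic function of $N$ independent arguments. By the triangle inequality, replacing one argument $y^{(k)}$ by another admissible value $\widetilde y^{(k)}$ changes $g$ by at most $\tfrac{1}{N}\|y^{(k)} - \widetilde y^{(k)}\| \le \tfrac{2R}{N}$, so $g$ has the bounded differences property with all $N$ constants equal to $2R/N$. McDiarmid's inequality then gives, for every $t \ge 0$,
\[
  \prob\{g \ge \E[g] + t\} \le \exp\!\left(-\frac{2t^2}{N\,(2R/N)^2}\right) = \exp\!\left(-\frac{N t^2}{2R^2}\right).
\]
For the expectation, a standard symmetrization inequality with i.i.d.\ Rademacher signs $\varepsilon_k$ followed by Jensen's inequality gives
\[
  \E[g] = \E\left\|\tfrac1N\sum_{k=1}^N (y^{(k)} - \E[y])\right\| \le \tfrac{2}{N}\,\E\left\|\sum_{k=1}^N \varepsilon_k y^{(k)}\right\| \le \tfrac{2}{N}\left(\sum_{k=1}^N \|y^{(k)}\|^2\right)^{1/2} \le \tfrac{2R}{\sqrt{N}},
\]
using that the cross terms in $\E_\varepsilon\|\sum_k \varepsilon_k y^{(k)}\|^2$ vanish. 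Finally, choosing $t = \xi R \sqrt{2}/\sqrt{N}$ makes the McDiarmid exponent exactly $-\xi^2$; on the complementary event $g \le \E[g] + t \le \tfrac{2R}{\sqrt N} + \tfrac{\xi R \sqrt 2}{\sqrt N} = \tfrac{R}{\sqrt N}(2 + \xi\sqrt2)$, and undoing the vectorization recovers the theorem.

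The main obstacle is matching the constants rather than any conceptual difficulty: the leading factor $2$ comes from the symmetrization step (a direct second-moment computation, using $\E\|y-\E[y]\|^2 \le R^2$, would in fact give the smaller constant $R/\sqrt N$, so the stated bound is deliberately not tight), while the $\xi\sqrt2$ is forced by matching McDiarmid's sub-Gaussian tail to the target probability $e^{-\xi^2}$. The one place to be careful is the bounded-differences constant — verifying that only the single perturbed summand contributes and that the $1/N$ normalization is retained — since a stray factor there would propagate directly into the final estimate.
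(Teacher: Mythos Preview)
The paper does not supply its own proof of this statement: Theorem~\ref{thm:shawe} is quoted from Shawe-Taylor and Cristianini~\cite{shawe} and used as a black box. Your argument is correct and is essentially the proof given in the cited reference: vectorize, apply McDiarmid's bounded-differences inequality with per-coordinate oscillation $2R/N$, and bound $\E[g]$ by $2R/\sqrt{N}$ via Rademacher symmetrization and Jensen. The only cosmetic point is that in the displayed chain for $\E[g]$ the intermediate quantity $\tfrac{2}{N}(\sum_k\|y^{(k)}\|^2)^{1/2}$ is random, so strictly speaking an outer $\E_y$ belongs there before you pass to the deterministic bound $2R/\sqrt{N}$; the logic is unaffected.
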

\noindent We note that the original formulation of the result involves
only random vectors, but since the Frobenius norm of a matrix is
merely the Euclidean norm of its vectorized version, we formulate the
theorem in terms of matrices. We also note that the choice of $R$ in
\eqref{eq:shawe-taylor} need not be unique.  Our analysis will proceed
by using upper bounds for $\|Y\|_F$ which may not be suprema.
Let 
\begin{equation*}
  R_{c} = \sup_{\textcolor{black}{c}} \|U_2^Tc\|_F.
\end{equation*}
Using Theorem \ref{thm:shawe} and modifying slightly the proof of
Corollary 6 in \cite{shawe}, which uses standard inequalities, we arrive at
\begin{align*}
  \bigg| \left\|\E[U_2^T(c - \E[c])(c-\E[c])^TU_2]\right\|_F -
  \left\|\wE{U_2^T(c-\wE{c})(c-\wE{c})^TU_2}\right\|_F \bigg| &
  \nonumber \\
  \leq \left\|\E[U_2^Tcc^TU_2] - \wE{U_2^Tcc^TU_2}\right\|_F
  \negthickspace + \left\|\E[U_2^Tc] - \wE{U_2^Tc}\right\|_F^2 \leq
  \frac{R_{c}^2}{\sqrt{N}}\left(2+\xi_{c}\sqrt{2}\right) & +
  \frac{R_{c}^2}{N}\left(2+\xi_{c}\sqrt{2}\right)^2
\end{align*}
with probability greater than $1-2e^{-\xi_{c}^2}$ over the random selection of the sample points.  
To complete the bound we must compute $R_c$ and $\left\|\E[U_2^T(c - \E[c])(c-\E[c])^TU_2]\right\|_F$.
A simple norm calculation shows
\begin{align*}
  \|U_2^Tc\|_F^2 =
  \frac{1}{4}\sum_{i=d+1}^D\left(\kappa_1^{(i)}\ell_1^2 + ~\dots~ +
    \kappa_d^{(i)}\ell_d^2\right)^2 \leq
  \frac{r^4}{4}\sum_{i=d+1}^D \left(K_i^{(+)}\right)^2
  = \frac{(K^{(+)})^2 r^4}{4},
\end{align*}
and we set $R_c = K^{(+)}r^2/2$.
Next, the expectation takes the form
\begin{equation*}
  \Big\|\E[U_2^T(c - \E[c])(c-\E[c])^TU_2]\Big\|_F ~=~
  \Big\|\E[U_2^Tcc^TU_2] - \E[U_2^Tc]\E[c^TU_2] \Big\|_F.
\end{equation*}
\textcolor{black}{We calculate}
\begin{equation*}
  \E[c_ic_j] = \frac{\left[3K_{nn}^{ij} + K_{mn}^{ij}\right]r^4}{4(d+2)(d+4)},\quad\text{and}\quad
  \E[c_i]\E[c_j] = \frac{\left[K_{nn}^{ij} +
      K_{mn}^{ij}\right]r^4}{4(d+2)^2}
\end{equation*}
\textcolor{black}{and compute the norm}
\begin{align*}
  \Big\|\E[U_2^T(c - \E[c])(c-\E[c])^TU_2]\Big\|_F =
  \frac{r^4}{2(d+2)^2(d+4)}
  \sqrt{\sum_{i,j=d+1}^D\left[(d+1)K_{nn}^{ij}-K_{mn}^{ij}\right]^2}.
\end{align*}
Finally, putting it all together, we conclude that
\begin{align*}
  \left\|U_2^T \frac{1}{N}(C-\overline{C})(C-\overline{C})^T U_2 \right\|_F \leq ~
  &\frac{r^4}{2(d+2)^2(d+4)}
  \sqrt{\sum_{i,j=d+1}^D \left[(d+1)K_{nn}^{ij}-K_{mn}^{ij}\right]^2}
  \nonumber \\ 
  & + \frac{1}{\sqrt{N}}\frac{(K^{(+)})^2
    r^4}{4}\left[\left(2+\xi_{c}\sqrt{2}\right) +
    \frac{1}{\sqrt{N}}\left(2+\xi_{c}\sqrt{2}\right)^2   \right] 
\end{align*}
with probability greater than $1-2e^{-\xi_{c}^2}$ over the random
selection of the sample points. 

\subsection{Curvature-Linear Cross-Terms: $CL^T$}
Our approach for bounding the matrix norm $\|U_2^T
\frac{1}{N}(C-\overline{C})(L-\overline{L})^T U_1\|_F$ mirrors that of
Section \ref{sec:pure-curvature}. Here, we use that $\E[\ell_i] = 0$
for $1 \leq i \leq d$ and proceed as follows.  We have
\begin{equation*}
  R_{\ell} = \sup_\ell \|\ell^T U_1\|_F = r.
\end{equation*}
Reasoning as in the previous section, we have
\begin{align*}
  \bigg| \left\|\E[U_2^T(c - \E[c])(\ell-\E[\ell])^TU_1]\right\|_F -
  \left\|\wE{U_2^T(c-\wE{c})(\ell-\wE{\ell})^TU_1}\right\|_F \bigg| &
  \nonumber \\ 
  \leq \left\|\E[U_2^Tc\ell^TU_1] - \wE{U_2^Tc\ell^TU_1}\right\|_F +
  \left\|\wE{\ell^TU_1} - \E[\ell^TU_1]\right\|_F
  \bigg(\left\|\wE{U_2^Tc} - \E[U_2^Tc]\right\|_F   + &
  \left\|\E[U_2^Tc]\right\|_F\bigg) \nonumber \\
  \leq
  \frac{R_{c}R_{\ell}}{\sqrt{N}}\left(2+\xi_{c\ell}\sqrt{2}\right) +
  \frac{R_{\ell}}{\sqrt{N}}\left(2+\xi_{\ell}\sqrt{2}\right)
  \bigg[\frac{R_{c}}{\sqrt{N}}\left(2+\xi_{c}\sqrt{2}\right) +
  \left\|\E[U_2^Tc]\right\|_F \bigg]& 
\end{align*}
with probability greater than
$1-e^{-\xi_{c\ell}^2}-e^{-\xi_{\ell}^2}-e^{-\xi_{c}^2}$ over the
random selection of the sample points. Finally, we set $\xi_{\ell} =
\xi_{c\ell}$ and conclude
\begin{align*}
  \left\|U_2^T \frac{1}{N}(C-\overline{C})(L-\overline{L})^T U_1
  \right\|_F \leq 
  \frac{K^{(+)}r^3}{2 \sqrt{N}}\left[\frac{d+3}{d+2}\left(2 +
      \xi_{c\ell}\sqrt{2}\right) + \frac{1}{\sqrt{N}}\left(2 +
      \xi_{c}\sqrt{2}\right)\left(2 +
      \xi_{c\ell}\sqrt{2}\right)\right]
\end{align*}
with probability greater than $1-2e^{-\xi_{c\ell}^2}-e^{-\xi_{c}^2}$
over the random selection of the sample points.
\section{Products of Unbounded Random Matrices: $EE^T$}
We seek bounds for the matrix norms of the form 
\begin{equation}
  \left\|U_n^T
    \frac{1}{N}(E-\overline{E})(E-\overline{E})^T U_m\right\|_F \quad 
  \text{for $(n,m) = (1,1), (2,2)$, and $(2,1)$}.
  \label{noise-noise}
\end{equation}
Because $E$ is composed of $N$ columns of independent realizations of
a $D$-dimensional Gaussian vector, the matrix $A$ defined by
\begin{equation*}
  A = \frac{1}{N-1}\frac{1}{\sigma^2}(E-\overline{E})(E-\overline{E})^T
  = \frac{\alpha^2}{N}(E-\overline{E})(E-\overline{E})^T
\end{equation*}
is Wishart $W_D\left(N-1,\frac{1}{N-1}I_D\right)$, where $\alpha =
\left(\sigma\sqrt{1-\frac{1}{N}}\right)^{-1}\negthickspace$. As a
result, we can quickly compute bounds on the terms (\ref{noise-noise})
since they can be expressed as the norm of blocks of $A$. Indeed, let
us partition $A$ as follows
\begin{equation*}
  A =
  \begin{bmatrix}
    A_{11} & A_{12} \\ A_{21} & A_{22}
  \end{bmatrix},
\end{equation*}
where $A_{11}$ is $d \times d$, $A_{22}$ is $(D-d) \times (D-d)$.  We
now observe that $A_{nm}$ is not equal to $\frac{\alpha^2}{N} U_n^T
(E-\overline{E})(E-\overline{E})^T U_m$, but both matrices
have the same Frobenius norm. Precisely, the two matrices differ only by
a left and a right rotation, as explained in the next few lines.

Since only the first $d$ entries of each column in $U_1$ are nonzero,
we can define two matrices $P_1$ and $Q_1$ that extract the first $d$
entries and apply the rotation associated with $U_1$, respectively, as
follows
\begin{equation*}
  U_1 = 
  \begin{bmatrix}
    & & \\
    & Q_1 & \\
    & & \\
    0 & & 0\\
    \vdots & & \vdots\\
    0 & & 0
  \end{bmatrix} 
  =
  \begin{bmatrix}
    1 & & 0\\
    & \diagdown& \\
    0 & & 1 \\
    0 & & 0\\
    \vdots & & \vdots\\
    0 & & 0
  \end{bmatrix} 
  Q_1
  =
  P_1 Q_1.
\end{equation*}
We define similar matrices  $P_2$ and $Q_2$ such that $U_2 = P_2 Q_2$.
We conclude that
\begin{equation*}
  \left\|U_n^T (E-\overline{E})(E-\overline{E})^T U_m\right\|_F 
  = \left\| P_n^T (E-\overline{E})(E-\overline{E})^T P_m \right\|_F 
  = \frac{N}{\alpha^2} \left\| A_{nm} \right\|_F.
\end{equation*}
In summary, we can control the size of the norms (\ref{noise-noise})
by controlling the norm of the sub-matrices of a Wishart matrix. We
first estimate the size of $\|A_{11}\|_F$ and $\|A_{22}\|_F$. This is
a straightforward affair, since we can apply Theorem
\ref{thm:GaussianSVs} with $P_1 \textcolor{black}{(E-\overline{E})}$ and $P_2 \textcolor{black}{(E-\overline{E})}$, respectively, to get
the spectral norm of $A_{11}$ and $A_{22}$. We then apply a standard
inequality between the spectral and the Frobenius norm of a matrix $M$,
\begin{equation}
  \|M\|_F \leq \sqrt{\rank{M}}\|M\|_2.
  \label{eq:frobenius<spectral}
\end{equation}
This bound is usually quite loose and equality is achieved only for
the case where all singular values of matrix $A$ are equal. It turns
out that this special case holds in expectation for the matrices in
the analysis to follow, and thus \eqref{eq:frobenius<spectral}
provides a tight estimate of the Frobenius norm.  Using
\eqref{eq:EE-eig11} and \eqref{eq:frobenius<spectral}, we have the
following bound
\begin{equation*}
  \left\| U_1^T \frac{1}{N}(E-\overline{E})(E-\overline{E})^T
    U_1\right\|_F   \leq \sigma^2\sqrt{d}
  \left[1 + \frac{5}{2}\frac{1}{\sqrt{N}}(\sqrt{d} + \xi_{e_1}\sqrt{2}) \right]
\end{equation*}
with probability greater than $1-e^{-\xi_{e_1}^2}$ over the random realization of the noise.
By \eqref{eq:EE-eig22}, we also have
\begin{equation*}
  \left\|U_2^T \frac{1}{N}  (E-\overline{E})(E-\overline{E})^T 
    U_2\right\|_F
  \leq \sigma^2\sqrt{D-d}
  \left[1 + \frac{5}{2}\frac{1}{\sqrt{N}}(\sqrt{D-d} + \xi_{e_2}\sqrt{2}) \right]
\end{equation*}
with probability greater than $1-e^{-\xi_{e_2}^2}$ over the random realization of the noise.

It remains to bound $\|A_{21}\|$. Here we proceed by conditioning on
the realization of the last $D-d$ coordinates of the noise vectors in
the matrix $E$; in other words, we freeze $P_2E$. Rather than working
with Gaussian matrices, we prefer to vectorize the matrix $A_{21}$ and
define
\begin{equation*}
  \vect{a_{21}} = \vct{A^T_{21}}.
\end{equation*}
Note that here we unroll the matrix $A_{21}$ row by row to build
$\vect{a_{21}}$.  Because the Frobenius norm of $A_{21}$ is the
Euclidean norm of $\vect{a_{21}}$, we need to find a bound on
$\|\vect{a_{21}}\|$.  Conditioning on the realization of $P_2E$, we
know (Theorem 3.2.10 of \cite{Muirhead}) that the distribution of
$\vect{a_{21}}$ is a multivariate Gaussian variable
$\mathcal{N}(\vect{0},S)$, where $\vect{0}$ is the zero vector of
dimension $d(D-d)$ and $S$ is the $d(D-d) \times d(D-d)$ block
diagonal matrix containing $d$ copies of $\frac{1}{N}A_{22}$
\begin{equation*}
  S = \frac{1}{N}
  \begin{bmatrix}
    A_{22} & & & \\
    & A_{22} & & \\
    & & \ddots & \\
    & & & A_{22}
  \end{bmatrix}.
\end{equation*}
Let $S^\dagger$ be a generalized inverse of $S$ (such that $S
S^\dagger S = S$), then (see e.g. Theorem 1.4.4 of \cite{Muirhead})
\begin{equation*}
  \vect{a_{21}}^T S^\dagger\vect{a_{21}} \sim \chi^2(\rank{S}).
\end{equation*}
\textcolor{black}{Now, using only the bound for the smallest singular value in Theorem \ref{thm:GaussianSVs},
  $A_{22}$ has full rank, $(D-d)$, with probability $1-e^{-(\sqrt{N}-\sqrt{D-d})^2/2}$, and therefore $S$ has full rank, $d(D-d)$, with the same probability.}
In the following, we derive an upper bound on the size of
$\|\vect{a_{21}}\|$ when $A_{22}$ has full rank.  A similar -- but
tighter -- bound can be derived when $S$ is rank deficient; we only
need to replace $(D-d)$ by the rank of $A_{22}$ in the bound that follows. Because the bound
derived when $A_{22}$ is full rank will hold when $A_{22}$ is rank
deficient (an event which happens with very small probability,
anyway), we only worry about this case in the following. In this case,
$S^\dagger = S^{-1}$ and
\begin{equation*}
  \vect{a_{21}}^T S^{-1} \vect{a_{21}} \sim \chi^2(d(D-d)).
\end{equation*}
Finally, using a corollary of Laurant and Massart
(immediately following Lemma 1 of \cite{LaurantMassart}), we get that,
\begin{equation}
  \vect{a_{21}}^T S^{-1} \vect{a_{21}} \leq d(D-d) +
  2\xi_{e_3}\sqrt{d(D-d)} + 2\xi_{e_3}^2
  \label{norm-chi}
\end{equation}
with probability greater than $1-e^{-\xi_{e_3}^2}$.  
In the following, we assume that $\xi_{e_3} \leq 0.7
\sqrt{d(D-d)}$, which happens as soon as $d$ or $D$ have a moderate
size. Under this mild assumption we have
\begin{equation*}
  \sqrt{
    d(D-d) + 2\xi_{e_3}\sqrt{d(D-d)} + 2\xi_{e_3}^2
  }
  \leq \sqrt{d(D-d)} 
  \left( 1 + \frac{6}{5}  \frac{\xi_{e_3}}{\sqrt{d(D-d)}}
  \right).
\end{equation*}
In order to compare $\|\vect{a_{21}}\|^2$ to $\vect{a_{21}}^T S^{-1}
\vect{a_{21}}$, we compute the eigendecomposition of $S$,
\begin{equation*}
  S = O~\varPi~O^T,
\end{equation*}
where $O$ is a unitary matrix and $\varPi$ contains the eigenvalues of
$\frac{1}{N}A_{22}$, repeated $d$ times.  Letting $\lambda_{\max}
\left(\frac{1}{N} A_{22}\right)$ be the largest eigenvalue of
$\frac{1}{N} A_{22}$, we get the following upper bound,
\begin{equation*}
  \|\vect{a_{21}}\|^2  \leq 
  \lambda_{\max} \left(\frac{1}{N} A_{22}\right) 
  \vect{a_{21}}^T\;O^T \varPi^{-1} O\;\vect{a_{21}} 
  = \lambda_{\max} \left(\frac{1}{N} A_{22}\right) 
  \vect{a_{21}}^T S^{-1} \vect{a_{21}}.
\end{equation*}
We conclude that, conditioned on a realization of the last $D-d$
entries of $E$, we have 
\begin{equation}
  \prob_{e_1} \left\{
    \|\vect{a_{21}}\| \leq 
    \sqrt{\lambda_{\max} \left(\frac{1}{N} A_{22}\right) }
    \sqrt{d(D-d)}\left[1 +
      \frac{6}{5}\frac{\xi_{e_3}}{\sqrt{d(D-d)}}\right]
    | e_2\right\} \ge 1-e^{-\xi_{e_3}^2}.
  \label{conditioned-wishart}
\end{equation}
To derive a bound on $\|\vect{a_{21}}\|$ that holds with high
probability, we consider the event
\begin{equation*}
  \mathscr{E}_{\varepsilon,\xi}= 
  \left\{
    \|\vect{a_{21}}\| \leq 
    \frac{\sqrt{d(D-d)} }{\sqrt{N}} 
    \left(1 + \frac{\sqrt{D-d} + \varepsilon}{\sqrt{N}} \right)
    \left[1 + \frac{6}{5}\frac{\xi}{\sqrt{d(D-d)}}\right]
  \right\}.
\end{equation*}
As we will see in the following, the event
$\mathscr{E}_{\varepsilon,\xi}$ happens with high probability. This
event depends on the random realization of the top $d$ coordinates,
$e_1$, of the Gaussian vector $e$ (see \eqref{e1e2}). Let us define a second likely
event, which depends only on $e_2$ (the last $D-d$ coordinates of $e$),
\begin{equation*}
  \mathscr{E}_{e_2}= 
  \left\{
    \sqrt{\lambda_{\max}\left(\frac{1}{N}A_{22}\right)}
    \leq 
    \frac{1}{\sqrt{N}} 
    \left(1 + \frac{\sqrt{D-d} + \varepsilon}{\sqrt{N}} \right)
  \right\}.
\end{equation*}
Theorem \ref{thm:GaussianSVs} tells us that the event
$\mathscr{E}_{e_2}$ is very likely, and
$\prob_{e_2}{(\mathscr{E}_{\textcolor{black}{e_2}}^c)} \leq e^{-\varepsilon^2/2}$.  We now
show that the probability of $\mathscr{E}_{\varepsilon,\xi}^c$ is also
very small,
\begin{equation*}
  \prob_{e_1,e_2} (\mathscr{E}_{\varepsilon,\xi}^c) =
  \prob_{e_1,e_2} (\mathscr{E}_{\varepsilon,\xi}^c \cap \mathscr{E}_{\textcolor{black}{e_2}})  + 
  \prob_{e_1,e_2} (\mathscr{E}_{\varepsilon,\xi}^c \cap \mathscr{E}^c_{\textcolor{black}{e_2}})  
  \leq
  \prob_{e_1,e_2} (\mathscr{E}_{\varepsilon,\xi}^c \cap \mathscr{E}_{\textcolor{black}{e_2}})  + 
  \prob_{e_2} (\mathscr{E}^c_{\textcolor{black}{e_2}}).  
\end{equation*}
In order to bound the first term, we condition on $e_2$, 
\begin{equation*}
  \prob_{e_1,e_2} (\mathscr{E}_{\varepsilon,\xi}^c \cap \mathscr{E}_{\textcolor{black}{e_2}})  =
  \E_{e_2} \left[\prob_{e_1} (\mathscr{E}_{\varepsilon,\xi}^c \cap \mathscr{E}_{\textcolor{black}{e_2}}| e_2)\right]. 
\end{equation*}
Now the two conditions,
\begin{equation*}
  \begin{cases}
    \|\vect{a_{21}}\| >
    \sqrt{d(D-d)} \frac{1}{\sqrt{N}} 
    \left(1 + \frac{\sqrt{D-d} + \varepsilon}{\sqrt{N}} \right)
    \left[1 + \frac{6}{5}\frac{\xi}{\sqrt{d(D-d)}}\right] \\
    \frac{1}{\sqrt{N}} 
    \left(1 + \frac{\sqrt{D-d} + \varepsilon}{\sqrt{N}} \right)
    \ge   \sqrt{\lambda_{\max}\left(\frac{1}{N}A_{22}\right)}
  \end{cases}
\end{equation*}
imply that
\begin{equation*}
  \|\vect{a_{21}}\| >
  \sqrt{d(D-d)}
  \sqrt{\lambda_{\max}\left(\frac{1}{N}A_{22}\right)}
  \left[1 + \frac{6}{5}\frac{\xi}{\sqrt{d(D-d)}}\right] ,
\end{equation*}
and thus
\begin{equation*}
  \prob_{e_1} (\mathscr{E}_{\varepsilon,\xi}^c \cap \mathscr{E}_{\textcolor{black}{e_2}}| e_2) \leq
  \prob_{e_1} \left(
    \|\vect{a_{21}}\| >
    \sqrt{d(D-d)}
    \sqrt{\lambda_{\max}\left(\frac{1}{N}A_{22}\right)}
    \left[1 + \frac{6}{5}\frac{\xi}{\sqrt{d(D-d)}}\right] 
    \lvert e_2 \right).
\end{equation*}
Because of (\ref{conditioned-wishart}) the probability on the right-hand
side is less than $e^{-\xi^2}$, which does not depend on
$e_2$. We conclude that 
\begin{equation*}
  \prob_{e_1,e_2} (\mathscr{E}_{\varepsilon,\xi}^c)
  \leq e^{-\varepsilon^2/2} + e^{-\xi^2}.
\end{equation*}
Finally, since 
\begin{equation*}
  \left\|U_2^T \frac{1}{N}(E-\overline{E})(E-\overline{E})^T
    U_1\right\|_F = \sigma^2 \left( 1 -\frac{1}{N}\right) \|A_{21}\|_F
  =  \sigma^2 \left( 1 -\frac{1}{N}\right)  
  \|\vect{a_{21}}\|,
\end{equation*}
we have 
\begin{equation*}
  \left\| \frac{1}{N}U_2^T(E-\overline{E})(E-\overline{E})^T
    U_1\right\|_F \leq \frac{\sigma^2 \sqrt{d(D-d)} }{\sqrt{N}} 
  \left(\negthinspace 1 + \frac{\sqrt{D-d} + \xi_{e_2}\sqrt{2}}{\sqrt{N}} \right) \negthickspace
  \left[1 + \frac{6}{5}\frac{\xi_{e_3}}{\sqrt{d(D-d)}}\right]
\end{equation*}
with probability greater than $1 - \textcolor{black}{e^{-\xi_{e_2}^2}} - e^{-\xi_{e_3}^2}$
over the realization of the noise.

\section{Products of Bounded and Unbounded Random Matrices}

\subsection{Linear-Noise Cross-Terms: $EL^T$}
Our goal is to bound the matrix norm
$\frac{1}{N}\|U_m^T(E-\overline{E})(L-\overline{L})^TU_1\|_F$, with
high probability, for $m =\{1,2\}$.  We detail the analysis for the
case where $m=1$ and note that the analysis for $m=2$ is identical up
to the difference in dimension. Using the decomposition of the matrix
$U_1 = P_1 Q_1$ defined in the previous section, we have
\begin{equation}
  \frac{1}{N}
  \left\|U_1^T (E-\overline{E})(L-\overline{L})^T U_1\right\|_F 
  =
  \frac{1}{N}
  \left\|P_1^T (E-\overline{E})(L-\overline{L})^T  P_1 \right\|_F.
  \label{U1Q1}
\end{equation}
Before proceeding with a detailed analysis of this term, let us
\textcolor{black}{derive a bound, which will prove to be very precise, using a back
  of the envelope analysis.} The entry $(i,j)$ in the matrix 
$\frac{1}{N}P_1^T(E-\overline{E})(L-\overline{L})^TP_1$ is given by
\begin{equation*}
  \frac{1}{N}\sum_{k=1}^N 
  (e^{(k)}_i  - \overline{e}_i)
  (\ell^{(k)}_j  - \overline{\ell}_j),
\end{equation*}
and it measures the average correlation between coordinate $i \leq d$
of the (centered) noise term and coordinate $j \leq d$ of the
linear tangent term. Clearly, this empirical correlation has zero
mean, and an upper bound on its variance is given by
\begin{equation*}
  \frac{1}{N} \sigma^2 \lambda_1,
\end{equation*}
where the top eigenvalue $\lambda_1$ measures the largest variance of
the random variable $\ell$, measured along the first column of
$U_1$. Since the matrix $P_1^T (E-\overline{E})(L-\overline{L})^T P_1$
is $d \times d$, we expect
\begin{equation*}
  \frac{1}{N} \left\|P_1^T (E-\overline{E})(L-\overline{L})^T P_1
  \right\|_F \approx
  \frac{1}{\sqrt{N}} \sigma \sqrt{\lambda_1} d.
\end{equation*}
We now proceed with the rigorous analysis. The singular value
decomposition of $P_1^T(L- \overline{L})$ is given by
\begin{equation}
  P_1^T(L- \overline{L}) = Q_1 \Sigma V^T,
\end{equation}
where $\Sigma$ is the $d \times d$ matrix of the singular values, and
$V$ is a matrix composed of $d$ orthonormal column vectors of size
$N$. Injecting the SVD of $P_1^T(L - \overline{L})$ we have
\begin{align}
  \frac{1}{N}
  \left\|P_1^T (E-\overline{E})(L-\overline{L})^T  P_1 \right\|_F 
  & = \frac{1}{N} \left\|P_1^T (E-\overline{E})V \Sigma Q^T_1 \right\|_F 
  = \frac{1}{N}\left\|P_1^T (E-\overline{E})V \Sigma\right\|_F \nonumber
  \\
  & \leq \frac{\sqrt{\lambda_1}}{\sqrt{N}}
  \left\|P_1^T (E-\overline{E})V \right\|_F.
  \label{LE-EV}
\end{align}
Define
\begin{equation*}
  Z_1 = \alpha P_1^T (E - \overline{E}) V.
\end{equation*}
Each row of $Z_1$ is formed by the projections of the corresponding
row of $\alpha P_1^T (E - \overline{E})$ onto the $d$-dimensional
subspace of $R^N$ formed by the columns of $V$. As such, the projected
row is a $d$-dimensional Gaussian vector, the norm of which scales
like $\sqrt{d}$ with high probability. 

The only technical difficulty involves the fact that the columns of
$V$ change with the different realizations of $L$. We need to check
that this random rotation of the vectors in $V$ does not affect the
size of the norm of $Z_1$. Proceeding in two steps, we first freeze a
realization of $L$, and compute a bound on $ \left\|P_1^T
  (E-\overline{E})V \right\|_F$ that does not depend \textcolor{black}{on} $L$. We then
remove the conditioning on $L$, and compute the probability that
$\|Z_1\|_F$ \textcolor{black}{is} very close to $d$.

Instead of working with $Z_1$, we define  the $d^2$-dimensional vector
\begin{equation*}
  \vect{z_1} = \vct{Z_1^T}.
\end{equation*}
Consider the $N\!d$-dimensional Gaussian vector 
\begin{equation*}
  \vect{g_1} = \alpha \; \vct{P_1^T(E- \overline{E})} \sim \mathcal{N}(0,I_{N\!d}).
\end{equation*}
In the next few lines, we construct an orthogonal projector
$\mathscr{P}$ such that $\vec{z}_1 = \mathscr{P} \vec{g}_1$. As a
result, we will have that $\vect{z_1} \sim \mathcal{N}(0,I_{d^2})$,
and using standard results on the concentration of the Gaussian
measure, we will get an estimate
of $\|P_1^T(E-\overline{E})V\|_F = \alpha^{-1} \|\vec{z}_1\|$. \\

First, consider the following $d^2 \times Nd$ matrix
\begin{equation*}
  \mathscr{V}=
  \begin{bmatrix}
    V^T & & & \\
    & V^T &  & \\
    & & \ddots & \\
    & & &V^T
  \end{bmatrix},
\end{equation*}
formed by stacking $d$ copies of $V^T$ in a block diagonal fashion with no overlap
(note that $V^T$ is not a square matrix).
We observe that because no overlap exists between the blocks,
the rows of $\mathscr{V}$ are orthonormal and $\mathscr{V}$
is an orthogonal projector from $\R^{N\negthinspace d}$ to $\R^{d^2}$.

Now, we consider the $N\!d \times N \! d$ permutation matrix $\Omega$
constructed as follows. We first construct the $d \times Nd$ matrix
$\Omega_1$ by interleaving blocks of zeros of size $d \times (N-1)$ between
the columns vectors of the $d \times d$ identity matrix,
\begin{equation*}
  \Omega_1=
  \begin{bmatrix}
    \begin{vmatrix}
      1 \\
      0\\
      \vdots\\
      0
    \end{vmatrix}
    \begin{matrix}
      0 & \cdots &0 \\
      0 & \cdots &0 \\
      \vdots\\
      0 & \cdots &0 \\
    \end{matrix}
    \begin{vmatrix}
      0 \\
      1\\
      \vdots\\
      0
    \end{vmatrix}
    \begin{matrix}
      0 & \cdots &0 \\
      0 & \cdots &0 \\
      \vdots\\
      0 & \cdots &0 \\
    \end{matrix}
    \quad \cdots \quad
    \begin{vmatrix}
      0 \\
      0\\
      \vdots\\
      1
    \end{vmatrix}
    \begin{matrix}
      0 & \cdots &0 \\
      0 & \cdots &0 \\
      \vdots\\
      0 & \cdots &0 \\
    \end{matrix}
  \end{bmatrix}.
\end{equation*}
Now consider the matrix $\Omega_2$ obtained by performing a circular
shift of the columns $\Omega_1$ to the right by one index,
\begin{equation*}
  \Omega_2=
  \begin{bmatrix}
    \begin{matrix}
      0 \\
      0\\
      \vdots\\
      0
    \end{matrix}
    \begin{vmatrix}
      1 \\
      0\\
      \vdots\\
      0
    \end{vmatrix}
    \begin{matrix}
      0 &\cdots & 0 \\
      0 & \cdots & 0 \\
      & \vdots& \\
      0 & \cdots& 0 \\
    \end{matrix}
    \begin{vmatrix}
      0 \\
      1\\
      \vdots\\
      0
    \end{vmatrix}
    \begin{matrix}
      0 & \cdots &0 \\
      0 & \cdots &0 \\
      \vdots\\
      0 & \cdots &0 \\
    \end{matrix}
    \quad \cdots\quad
    \begin{matrix}
      0\\
      0\\
      \vdots\\
      0\\
    \end{matrix}
    \begin{vmatrix}
      0 \\
      0\\
      \vdots\\
      1
    \end{vmatrix}
    \begin{matrix}
      0 & \cdots  \\
      0 & \cdots  \\
      \vdots\\
      0 & \cdots  \\
    \end{matrix}
  \end{bmatrix}.
\end{equation*}
We can iterate this process $N-1$ times and construct $N$ such matrices,
$\Omega_1,\ldots,\Omega_N$. Finally, we stack these $N$ matrices to
construct the $N\!d \times N\!d$ permutation matrix
\begin{equation*}
  \Omega =
  \begin{bmatrix}
    & & \Omega_1 & & \\
    & & \vdots &  & \\
    & & \Omega_N & & \\
  \end{bmatrix}.
\end{equation*}
By construction, $\Omega$ only contains a single nonzero entry, equal
to one, in every row and every column, and therefore is a permutation matrix.
Finally, the matrix $\Omega$ allows to move the action of $V$ from the
right of $E$ to the left, and we have
\begin{equation}
  \vect{z_1} = \mathscr{V} \Omega \vect{g_1}.
  \label{orthoproj1}
\end{equation}
Putting everything together, we conclude that the matrix defined by
\begin{equation}
  \mathscr{P} = \mathscr{V} \Omega
\end{equation}
is an orthogonal projector, and
therefore $\vect{z_1} \sim \mathcal{N}(0,I_{d^2})$. Using again the 
previous bound (\ref{norm-chi}) on the norm of a Gaussian vector, we
have 
\begin{equation}
  \prob_{e}\left(\|\vect{z_1} \| \leq \left(d + \frac{6}{5}\varepsilon\right)|
    L\right) \ge 1 - e^{-\varepsilon^2}. 
  \label{z1-conditioned}
\end{equation}
To conclude the proof, we remove the conditioning on $L$, and using
(\ref{z1-conditioned}) we have
\begin{equation*}
  \prob_{e,\ell}\left(\|\vect{z_1} \| \leq \left(d + \frac{6}{5}\varepsilon\right)\right) =
  \E_{\ell} \prob_{e}\left(\|\vect{z_1} \| \leq \left(d + \frac{6}{5}\varepsilon\right)| L\right) 
  \ge 1 - e^{-\varepsilon^2}.
\end{equation*}
Since $\|P_1^T(E - \overline{E})V\|_F= \alpha^{-1} \|\vect{z_1} \|$, we
have 
\begin{equation}
  \prob_{e,\ell}
  \left(
    \|P_1^T(E - \overline{E}) V \|_F \leq 
    \sigma\sqrt{1-\frac{1}{N}} \left(d + \frac{6}{5}\varepsilon\right)
  \right) 
  \ge 1 - e^{-\varepsilon^2}.
  \label{norm-z1}
\end{equation}
Finally, combining (\ref{eig-linear-bound}), (\ref{eig-linear-prob}),
(\ref{U1Q1}), (\ref{LE-EV}), and (\ref{norm-z1}) we conclude that
\begin{align}
  \prob_{e,\ell} &\left(
    \frac{1}{N}
    \left\|U_1^T (E-\overline{E})(L-\overline{L})^T  U_1 \right\|_F 
    \leq 
    \frac{\sigma \sqrt{\lambda_{\text{bound}}(\xi)}}{\sqrt{N}}
    \sqrt{1-\frac{1}{N}} \left(d + \frac{6}{5}\varepsilon\right)\right) \nonumber \\
  & \ge (1 - e^{-\varepsilon^2}) (1 - d e^{-\xi^2})
  \label{LE-lambda1}
\end{align}
which implies
\begin{align*}
  \prob_{e,\ell} &\left(
    \frac{1}{N}
    \left\|U_1^T (E-\overline{E})(L-\overline{L})^T  U_1 \right\|_F 
    \leq 
    \frac{\sigma \; r}{\sqrt{N}\sqrt{d+2}}
    \left[1+
      \xi  \frac{5\sqrt{d+2}}{\sqrt{N}}
    \right]
    \left(d + \frac{6}{5}\varepsilon\right)\right) \nonumber \\
  & \ge (1 - e^{-\varepsilon^2/2}) (1 - d e^{-\xi^2}).
\end{align*}
A similar bound holds for $\left\|U_2^T \frac{1}{N}
  (E-\overline{E})(L-\overline{L})^T U_1\right\|_F$. Indeed,
we define
\begin{equation}
  Z_2 = \alpha P_2^T (E - \overline{E}) V, \quad \quad 
  \vect{z_2} = \vct{Z_2}, \quad \text{and} \quad
  \vect{g_2} = \alpha \; \vct{P_2^T(E- \overline{E})}.  
\end{equation}
\textcolor{black}{
  Again, we can construct an orthogonal projector $\mathscr{P^\prime}$ with size $d(D-d) \times N(D-d)$ so that 
  \begin{equation}
    \vect{z_2} = \mathscr{P^\prime} \vect{g_2}.
    \label{orthoproj2}
  \end{equation}
}
By combining (\ref{orthoproj1}) and (\ref{orthoproj2}), we can control
the concatenated vector $\begin{bmatrix} \vect{z_1} &
  \vect{z_2}\end{bmatrix}^T$ by estimating the norm of $\begin{bmatrix} \vect{g_1} &
  \vect{g_2}\end{bmatrix}^T$. We conclude that 
\begin{equation}
  \begin{bmatrix}
    \\ \|U_1^T\frac{1}{N}(E-\overline{E})(L-\overline{L})^TU_1\|_F \\ \\
    \|U_2^T\frac{1}{N}(E-\overline{E})(L-\overline{L})^TU_1\|_F \\ \\
  \end{bmatrix}
  \leq
  \frac{\sigma \; r}{\sqrt{N}\sqrt{d+2}}
  \left[1+
    \xi_{\lambda_1}  \frac{5\sqrt{d+2}}{\sqrt{N}}
  \right]
  \begin{bmatrix}
    \\ d &+ & \frac{6}{5} \xi_{e\ell}\\ \\ \sqrt{d(D-d)} &+ &
    \frac{6}{5} \xi_{e\ell} \\ \\
  \end{bmatrix}
  \label{LE-All}
\end{equation}
with probability greater than
$(1-de^{-\xi_{\lambda_1}^2})(1-e^{-\xi_{e\ell}^2})$ over the joint
random selection of the sample points and random realization of the
noise.

\subsection{Curvature-Noise Cross-Terms: $CE^T$}
The analysis to bound the matrix norm
\begin{equation*}
  \frac{1}{N} \left \| U_2^T(C-\overline{C})(E-\overline{E})^TU_m\right \|_F =
  \frac{1}{N} \left\| U_m^T(E-\overline{E})(C-\overline{C})^TU_2 \right \|_F 
\end{equation*}
for $m =\{1,2\}$ proceeds in an identical manner to that for the bound
on $\| \frac{1}{N}U_m^T(E-\overline{E})(L-\overline{L})^TU_1\|_F$.  We
therefore give only a brief outline here.
Mimicking the reasoning that leads to (\ref{LE-lambda1}), we get
\begin{align*}
  \prob_{e,\ell} &\left(
    \frac{1}{N}
    \left\|U_1^T (E-\overline{E})(C-\overline{C})^T  \textcolor{black}{U_2} \right\|_F 
    \leq 
    \frac{\sigma \sqrt{\gamma_{\text{bound}}(\xi)}}{\sqrt{N}}
    \sqrt{1-\frac{1}{N}} \left(\sqrt{d(D-d)} + \frac{6}{5}\varepsilon\right)\right) \nonumber \\
  & \ge (1 - e^{-\varepsilon^2}) (1 - d e^{-\xi^2}),
\end{align*}
where $ \gamma_{\text{bound}}(\xi)$ is the bound on all the
eigenvalues of $\frac{1}{N}U_2^T(C-\overline{C})
(C-\overline{C})^TU_2$ defined in (\ref{eq:eigCC^T-bound}). This leads to a
bound similar to (\ref{LE-All}) for the tangential and curvature
components of the noise,
\begin{equation}
  \begin{bmatrix}
    \\ \|U_2^T\frac{1}{N}(C-\overline{C})(E-\overline{E})^TU_1\|_F \\ \\
    \|U_2^T\frac{1}{N}(C-\overline{C})(E-\overline{E})^TU_2\|_F \\ \\
  \end{bmatrix}
  \leq
  \frac{\sigma \sqrt{\gamma_{\text{bound}}(\xi_c)}}{\sqrt{N}}
  \begin{bmatrix}
    \\ \sqrt{d(D-d)} &+ & \frac{6}{5}\xi_{ce}\\ \\ (D-d) &+ &
    \frac{6}{5} \xi_{ce} \\ \\
  \end{bmatrix}
  \label{CE-All}
\end{equation}
with probability greater than $(1-2e^{-\xi_{c}^2})(1-e^{-\xi_{ce}^2})$
over the joint random selection of the sample points and random
realization of the noise.

\listoffigures
\listoftables

\end{document}